\documentclass[10pt,article]{IEEEtran}
 \usepackage{epsfig,latexsym,amsmath,amssymb,setspace,cite,color,graphicx,algorithm,algorithmic,cases}
\usepackage[caption=false,font=footnotesize]{subfig}
 \usepackage{booktabs}

\usepackage{array}\usepackage{makecell}
\usepackage{pict2e} 
\usepackage{amsmath}
\usepackage{amsfonts}
\usepackage{amssymb}
\usepackage{dsfont}
\usepackage{bm}
\usepackage{amsthm}
\usepackage{newlfont}
\usepackage{float}
\usepackage{hyperref}
\usepackage{algorithm}
\usepackage{algorithmic}
\usepackage{enumerate}
\usepackage{chngcntr}
\usepackage{mathtools}
 \usepackage{array,multirow,graphicx}
 \usepackage{float}
\usepackage{breqn}
\usepackage{stmaryrd}
\usepackage{cite}
\usepackage[yyyymmdd]{datetime}
\hypersetup{
    colorlinks=true,  
    linkcolor= blue,  
    citecolor=blue 
}

\newcolumntype{C}[1]{>{\centering\let\newline\\\arraybackslash\hspace{0pt}}m{#1}}

\DeclareMathOperator*{\concat}{\lVert}

\begin{document}
\sloppy
\allowdisplaybreaks[0]

\newtheorem{thm}{Theorem} 
\newtheorem{lem}{Lemma}
\newtheorem{prop}{Proposition}
\newtheorem{cor}{Corollary}
\newtheorem{defn}{Definition}
\newcommand{\remarkend}{\IEEEQEDopen}
\newtheorem{remark}{Remark}
\newtheorem{rem}{Remark}
\newtheorem{ex}{Example}
\newtheorem{pro}{Property}

\newenvironment{example}[1][Example]{\begin{trivlist}
\item[\hskip \labelsep {\bfseries #1}]}{\end{trivlist}}

\renewcommand{\qedsymbol}{ \begin{tiny}$\blacksquare$ \end{tiny} }

\renewcommand{\leq}{\leqslant}
\renewcommand{\geq}{\geqslant}

{\title {Explicit Wiretap Channel Codes via Source Coding, Universal Hashing, and Distribution Approximation, When the Channels' Statistics are Uncertain}}

\author{\IEEEauthorblockN{R\'emi A. Chou}\\
 
\thanks{
Part of this work has been presented at the 2018 IEEE International Symposium on Information Theory (ISIT) \cite{chou2018b}. This work was supported in part by NSF grants CCF-1850227 and CCF-2047913. 
}
 }

\maketitle
\begin{abstract}
We consider wiretap channels with uncertainty on the eavesdropper channel under (i)  noisy blockwise type~II, (ii)~compound, or (iii) arbitrarily varying models. We present explicit wiretap codes that can handle these models in a unified manner and only rely on three primitives, namely source coding with side information, universal hashing, and distribution approximation. Our explicit wiretap codes achieve the  best known single-letter achievable rates, previously obtained non-constructively, for the models considered.
Our results are obtained for strong secrecy, do not require a pre-shared secret between the legitimate users, and do not require any symmetry properties on the channel. An extension of our results to  compound main channels is also~derived via  new capacity-achieving polar coding schemes for compound settings.  
\end{abstract} 

\section{Introduction}
The wiretap channel \cite{Wyner75} is a fundamental primitive to model eavesdropping at the physical layer~\cite{liang2009information,Bloch11}. Beyond theoretical results that characterize the secrecy capacity for this model, significant progress has been made in the development of explicit wiretap codes for Wyner's wiretap channel~\cite{Wyner75}. Specifically, coding schemes based on low-density parity-check codes, e.g.,~\cite{Thangaraj2007,Subramanian2011,Rathi2011}, polar codes, e.g.~\cite{Mahdavifar11,Sasoglu13,Andersson2013,Wei14,renes2013efficient,Gulcu14,chou2015polaritw}, and invertible extractors, e.g.,~\cite{hayashi2010construction,Bellare2012,tyagi2015universal}, have been successfully developed for Wyner's model~\cite{Wyner75} or some of its special cases.

An assumption  made by all the above references is that the eavesdropper channel statistics are perfectly known by the legitimate users. 
To model uncertainty, several models have been introduced: Type~II models \cite{ozarow1984wire,goldfeld2016semantic,nafea2016new}, where the eavesdropper can learn an arbitrary and unknown part of the legitimate sender codeword, and models where the eavesdropper channel statistics are not perfectly known but only known to belong to a given set. These latter models are useful when the physical location of the eavesdropper is uncertain from the point of view of the legitimate users, and include compound models~\cite{Liang09,bjelakovic2013secrecy}, where the channel statistics are  known to be fixed for all channel uses, and arbitrarily varying models~\cite{molavianjazi2009arbitrary,bjelakovic2013capacity}, where the channel statistics change at each channel use.

Our contributions are summarized as follows. (i) We construct explicit wiretap codes that 
achieve the  best known single-letter achievable rates, previously obtained non-constructively,  when uncertainty holds on the eavesdropper channel  under a  noisy~blockwise type~II,  compound, or~arbitrarily varying model. (ii) We prove the sufficiency of three primitives to construct such wiretap codes: source coding with side information, universal hashing, and distribution approximation. (iii) We extend our results to the case where uncertainty holds on the main channel according to a compound model. (iv)~We demonstrate that all the  models considered in this paper can be handled in a unified manner by the same encoding and decoding schemes, up to an appropriate choice of parameters.   We stress that our results are obtained for strong secrecy, do not require a pre-shared secret between the legitimate users, and do not require any symmetry properties on the channel.

 Our approach consists in separately handling the reliability constraint and the security constraints. The reliability constraint is handled via a combination of source coding with side information and distribution approximation implemented with polar codes. The security constraints are handled with a combination of universal hashing and distribution approximation implemented via two-universal hash functions and polar codes, respectively. 
 The main difficulty in our approach is to combine universal hashing and source coding with side information such that (i) non-symmetric and non-degraded channels can be handled, and (ii) the analysis of the security of the overall coding scheme is possible. 
 (i) is performed via the idea of block-Markov coding as introduced in \cite{hassani2014universal,Mondelli14} with   the following two important modifications to enable (ii): (1)~Each encoding block of the block-Markov construction is constructed from $L$ sub-blocks in which all the involved random variables have the same joint distribution across all sub-blocks. (2)~The construction of each encoding block is such that the encoder output distribution  approaches a fixed target distribution. In particular, these two points are key to analyzing the security of universal hashing via the leftover hash lemma \cite{renner2008security}, whose application in our analysis raises several additional challenges. First, while the leftover hash lemma proves a security guarantee
on the output of a hash function, in our coding scheme, we need to prove a security
guarantee on a message $M$ that is not obtained as the output of a hash function. To circumvent this difficulty, we
prove the statistical equivalence between our coding scheme and another coding scheme where the
message $M$ is obtained as the output of a hash function. Second, because of the block-Markov construction, a precise study of the inter-dependencies between the encoding blocks is needed to evaluate the overall leakage when considering all the blocks jointly.

In Section \ref{sec:main}, we formally describe the model considered in this paper. In Section \ref{sec:results}, we state our main results. In Section~\ref{sec:preshared}, we describe our proposed coding scheme.  The analysis of our coding scheme is presented in Sections \ref{sec:proof1a}, \ref{sec:initphase}, \ref{sec:proof2}. In Section \ref{secext}, we present an extension of our results to the case where uncertainty holds on the legitimate user channel under a compound model \cite{liang2009information,bjelakovic2013secrecy}. Finally, in Section \ref{sec:concl}, we provide concluding remarks.

\section{Notation} \label{sec:notations}  For $a,b \in \mathbb{R}_+$, define $\llbracket a, b\rrbracket \triangleq [\lfloor a \rfloor, \lceil b \rceil] \cap \mathbb{N}$. The components of a vector $X^{1:N}$ of size $N$ are denoted with superscripts, i.e., $X^{1:N} \triangleq (X^i)_{i \in \llbracket  1 ,N \rrbracket}$. For any set $\mathcal{A} \subseteq \llbracket 1,N \rrbracket$, let $X^{1:N}[\mathcal{A}]$ be the components of $X^{1:N}$ whose indices are in $\mathcal{A}$. For two distributions $p_{XY}$ and $q_{XY}$ defined over $\mathcal{X} \times \mathcal{Y}$, define  the variational distance between $p_X$ and $q_X$ as $\mathbb{V}(p_X, q_X)\triangleq \sum_{x\in \mathcal{X}} |p_X(x) - q_X(x)|$, the Kullback-Leibler divergence between $p_X$ and $q_X$ as $\mathbb{D}(p_X \lVert q_X)$, and the conditional Kullback-Leibler divergence between $p_{Y|X}$ and $q_{Y|X}$ as 
$
    \mathbb{E}_{p_X}[\mathbb{D}(p_{Y \vert X}\Vert q_{Y \vert X} )] \triangleq  \sum _{x \in \mathcal{X}}p_X(x) \mathbb{D}(p_{Y \vert X=x}\Vert q_{Y \vert X=x}).
$
Unless otherwise specified, capital letters denote random variables, whereas lowercase letters designate realizations of associated random variables, e.g., $x$ is a realization of the random variable $X$. Let $\mathds{1}\{ \omega \}$ be the indicator function, which is equal to $1$ if the predicate $\omega$ is true and $0$ otherwise. For any $x\in \mathbb{R}$, define $[x]^+ \triangleq \max(0,x)$.  Finally, GF$(2^N)$ denotes a finite field of order $2^N$.
\section{Model and known results}  \label{sec:main}
Consider the finite alphabets $\mathcal{X} \triangleq \{0,1\}$,  $\mathcal{Y}$, and  $(\mathcal{Z}_s)_{s\in\mathfrak{S}}$, where $\mathfrak{S}$ is a finite set. Consider also the conditional probabilities $(p_{YZ(s)|X})_{s\in\mathfrak{S}}$. A wiretap channel is defined as a discrete memoryless channel with transition probability for one channel use $p_{YZ(s)|X}(y,z(s)|x)$ where
$x \in \mathcal{X}$ is the channel input from the transmitter, $y \in \mathcal{Y}$ is the channel output observed by the legitimate receiver, $z(s) \in \mathcal{Z}_s$ is the channel output observed by the eavesdropper, $s \in \mathfrak{S}$ is arbitrary, unknown to the legitimate users, and  can potentially change for each channel use. In the following, we omit the index $s\in \mathfrak{S}$ whenever $|\mathfrak{S}|=1$. 
Moreover, when the codeword $X^{1:N}$ is sent over the channel,  in addition to the channel output $Z^{1:N}(\mathbf{s})$, $\mathbf{s} \in \mathfrak{S}^N$, the eavesdropper has access to $X^{1:N}[\mathcal{S}]\triangleq (X^i)_{i \in \mathcal{S}}$, where $\mathcal{S} \subseteq \llbracket 1 , N \rrbracket$ is chosen by the eavesdropper and such that  $|\mathcal{S}| \triangleq \alpha N$ for some $\alpha \in [0,1]$. 

\begin{defn}
For $B \in \mathbb{N}$, define $\mathcal{B} \triangleq \llbracket 1 , B \rrbracket$. For $b \in \mathcal{B}$ and $R_b\geq 0$, define $R \triangleq \sum_{b \in \mathcal{B}}R_b/B$. A $(2^{NR},N, B)$ code has a rate $R$, operates over $B$ encoding blocks, and  consists for each encoding Block $b \in \mathcal{B}$ of
\begin{itemize}
\item A message set $\mathcal{M}_b \triangleq \llbracket 1 ,2^{NR_b} \rrbracket$.
\item A stochastic encoding function $f_b : \mathcal{M}_b \to \mathcal{X}^{N}$, used by the transmitter to encode a message $M_b$, uniformly distributed over $\mathcal{M}_b$, into $X_b^{1:N} \triangleq f_b(M_b)$. The messages $M_{1:B} \triangleq \left(M_b\right)_{b \in \mathcal{B}}$ are assumed mutually independent.
\item A deterministic decoding function used by the legitimate receiver $g_{b} : \mathcal{Y}^{N} \to \mathcal{M}_b$, to form $\widehat{M}_b$, an estimate of $M_b$, given the channel outputs $Y_b^{1:N}$. We write $\widehat{M}_{1:B}  \triangleq (\widehat{M}_{b})_{b \in \mathcal{B}}$.
\end{itemize}
\end{defn}

\begin{defn}
 A rate $R$ is achievable if there exists a sequence of $(2^{NR},N,B)$ codes such that 
\begin{align*}
\mathbb{P}[\widehat{M}_{1:B} \neq M_{1:B}] &\xrightarrow{{N \to \infty}}0,  \\
 \max_{ \mathbf{s} \in \mathfrak{S}^{NB},\mathcal{A} \in \mathbb{A}}\!\! I\left(M_{1:B};Z_{1:B}^{1:N}(\mathbf{s}), X_{1:B}^{1:N}[\mathcal{A}]\right) &\xrightarrow{{N \to \infty}}0, 
\end{align*}
where $\mathbb{A} \triangleq \! \{ (\mathcal{A}_b)_{b \in \mathcal{B}} : \mathcal{A}_b \subseteq \llbracket 1 , N \rrbracket \text{ and }|\mathcal{A}_b| = \alpha N, \forall b \in \mathcal{B} \}$, $(Z_{b}^{1:N}(\mathbf{s}_b),X_{b}^{1:N}[\mathcal{A}_b])$ corresponds to the random variables in Block $b \in \mathcal{B}$ for $\mathcal{A} = (\mathcal{A}_b)_{b \in \mathcal{B}} \in \mathbb{A}$ and $\mathbf{s}_b \in \mathfrak{S}^{N}$, $X_{1:B}^{1:N}[\mathcal{A}] \triangleq (X_{b}^{1:N}[\mathcal{A}_b])_{b \in \mathcal{B}}$, and $Z_{1:B}^{1:N}(\mathbf{s}) \triangleq (Z_{b}^{1:N}(\mathbf{s}_b))_{b \in \mathcal{B}}$ for $\mathbf{s} = (\mathbf{s}_b)_{b \in \mathcal{B}} \in \mathfrak{S}^{NB}$.\\
The supremum of such achievable rates is called secrecy capacity and denoted by $C_{s}$.
\end{defn}
When $\alpha =0$ and $|\mathfrak{S}| = 1$, our model recovers Wyner's wiretap channel \cite{Wyner75}. 
When $\alpha =0$ and $\mathbf{s} = (\mathbf{s}_b)_{b\in \mathcal{B}} \in \mathfrak{S}^{NB}$ is unknown to the legitimate users but all the components of $\mathbf{s}_b$, $b \in \mathcal{B}$, are identical, our model recovers a wiretap channel with a compound model for the eavesdropper's channel \cite{Liang09,bjelakovic2013secrecy}; the general model, as introduced in \cite{Liang09}, with compound models for both the eavesdropper's channel and the main channel is treated in Section \ref{secext}. When $\alpha =0$ and $\mathbf{s} = (\mathbf{s}_b)_{b\in \mathcal{B}} \in \mathfrak{S}^{NB}$ is unknown to the legitimate users, our model recovers a wiretap channel with an arbitrarily varying eavesdropper's channel \cite{molavianjazi2009arbitrary}.
When $\alpha >0$ and $|\mathfrak{S}| = 1$, our model recovers a special case of the wiretap channel of type~II~\cite{ozarow1984wire} when  $p_{Z|X} = p_{Z}$ and  $p_{Y|X}(y|x) = \mathds{1} \{ y=x\}, \forall (x,y)\in\mathcal{X} \times \mathcal{Y}$, a special case of the  wiretap channel of type II with noisy main channel~\cite{goldfeld2016semantic} when $p_{Z|X} = p_{Z}$, and a special case of the hybrid Wyner's/type II wiretap channel~\cite{nafea2016new}. Specifically, the difference between our model and the models in \cite{ozarow1984wire,nafea2016new,goldfeld2016semantic} is that, in our model, the eavesdropper observes a fraction $\alpha$ of each codeword $X_b^{1:N}$, $b\in\mathcal{B}$, whereas in \cite{ozarow1984wire,nafea2016new,goldfeld2016semantic}, the eavesdropper would be able to observe a fraction $\alpha$ of all the codewords considered jointly, i.e., $(X_b^{1:N})_{b\in\mathcal{B}}$. While the original type II constraint of~\cite{ozarow1984wire} is stronger than a blockwise type~II constraint, the latter constraint is relevant to model side-channel attacks where the eavesdropper is able to learn a bounded fraction of each codeword sent over the channel.

We now review the best known achievable rates for special cases of our model.

\begin{thm}[\!\!\cite{Wyner75,Csiszar78}] \label{th:WynerWTC}
Suppose that $|\mathfrak{S}| = 1$, and $\alpha =0$. Then, the secrecy capacity is
\begin{align*}
C_s = \max_{ \substack{ U - X -(Y,Z) \\ |\mathcal{U}| \leq |\mathcal{X}| }}  \left[ I(U;Y) - I(U;Z) \right]^+. 
\end{align*}
\end{thm}

\begin{thm}[\!\!\cite{ozarow1984wire}] \label{th:WTII}
Suppose that $|\mathfrak{S}| = 1$, $p_{Z|X} = p_{Z}$, and for any $x\in\mathcal{X}$, $y\in \mathcal{Y}$, $p_{Y|X}(y|x) = \mathds{1} \{ y=x\}$. Then,  
\begin{align*}
C_s = 1-\alpha.
\end{align*}
\end{thm}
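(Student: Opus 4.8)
The plan is to prove the two bounds $C_s \leq 1-\alpha$ and $C_s \geq 1-\alpha$ separately.

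\emph{Converse.} I would fix any sequence of $(2^{NR},N,B)$ codes meeting the reliability and strong secrecy constraints, let $\epsilon_N$ be the decoding error probability and $\delta_N \triangleq \max_{\mathcal{A}\in\mathbb{A}} I(M_{1:B};Z_{1:B}^{1:N},X_{1:B}^{1:N}[\mathcal{A}])$, so $\epsilon_N\to 0$ and $\delta_N\to 0$. Since $p_{Z|X}=p_Z$, the variable $Z_{1:B}^{1:N}$ is independent of everything else and may be dropped, and since $p_{Y|X}(y|x)=\mathds{1}\{y=x\}$ we have $Y_{1:B}^{1:N}=X_{1:B}^{1:N}$. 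Taking the admissible choice $\mathcal{A}_b\triangleq\llbracket 1,\alpha N\rrbracket$ for all $b\in\mathcal{B}$ and writing $\mathcal{A}^c$ for the complementary index set, so that $|\mathcal{A}^c|=B(1-\alpha)N$, I would chain
\begin{align*}
BNR = H(M_{1:B})
&= I(M_{1:B};X_{1:B}^{1:N}[\mathcal{A}]) + H(M_{1:B}\mid X_{1:B}^{1:N}[\mathcal{A}])\\
&\leq \delta_N + H(X_{1:B}^{1:N}[\mathcal{A}^c]\mid X_{1:B}^{1:N}[\mathcal{A}]) + H(M_{1:B}\mid X_{1:B}^{1:N})\\
&\leq \delta_N + B(1-\alpha)N + 1 + \epsilon_N BNR ,
\end{align*}
where the second line adds and subtracts $H(X_{1:B}^{1:N}[\mathcal{A}^c]\mid\cdot)$, and the third uses $|\mathcal{X}|=2$ together with Fano's inequality applied to the recovery of $M_{1:B}$ from $Y_{1:B}^{1:N}=X_{1:B}^{1:N}$. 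Dividing by $BN$ and letting $N\to\infty$ yields $R\leq 1-\alpha$.

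\emph{Achievability.} I would use coset coding over a single block. Fix $R<1-\alpha$, pick an invertible $N\times N$ binary matrix whose first $NR$ rows form a matrix $A$ and whose last $(1-R)N$ rows form a matrix $G'$, encode $X^{1:N}\triangleq MA\oplus VG'$ with $V$ uniform on $\{0,1\}^{(1-R)N}$ independent of $M$, and have the (noiseless) receiver invert the full matrix to read off $M$; reliability is then immediate. For any eavesdropper set $\mathcal{S}$ with $|\mathcal{S}|=\alpha N$, a short computation with uniform inputs gives $I(M;X^{1:N}[\mathcal{S}])=\alpha N-\mathrm{rank}(G'_{\mathcal{S}})$, where $G'_{\mathcal{S}}$ is the submatrix of $G'$ with columns indexed by $\mathcal{S}$ and the rank is over $\mathrm{GF}(2)$; this equals the dimension of the subcode, supported on $\mathcal{S}$, of the length-$N$ dimension-$NR$ code with parity-check matrix $G'$. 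So the task reduces to a combinatorial one: exhibit $G'$ such that no positive-dimensional subcode is supported on any $\alpha N$ coordinates, uniformly over all $\binom{N}{\alpha N}$ such coordinate sets. Choosing $G'$ at random and bounding the intersection of a random $[N,NR]$ code with each coordinate subspace of dimension $(1-\alpha)N$ shows this leakage is at most a constant, uniformly in $\mathcal{S}$, with probability tending to one.

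\emph{Main obstacle.} Upgrading that constant-leakage guarantee to the $o(1)$ leakage demanded by strong secrecy, while keeping the rate arbitrarily close to $1-\alpha$, is the delicate step: binary maximum-distance-separable codes do not exist, so no single linear code attains perfect secrecy at that rate, and a naive union bound over the $\binom{N}{\alpha N}$ eavesdropper subsets is far too lossy. This is precisely the difficulty the paper's general construction is designed to resolve — polar transforms handle the (here vacuous) reliability requirement and a universal hash function extracts $(1-\alpha)N-o(N)$ secret bits from the $(1-\alpha)N$ coordinates the eavesdropper does not observe — and Theorem~\ref{th:WTII} then follows by specializing the parameters of that construction.
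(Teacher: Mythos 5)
Your converse is correct and self-contained: starting from $BNR=H(M_{1:B})$, dropping $Z$ (since $p_{Z|X}=p_Z$), expanding via a fixed admissible $\mathcal{A}$, bounding $H(X_{1:B}^{1:N}[\mathcal{A}^c]\mid X_{1:B}^{1:N}[\mathcal{A}])\leq B(1-\alpha)N$, and applying Fano with $Y=X$ is a clean direct argument. Your achievability, however, has a genuine gap. The coset-coding setup and the leakage identity $I(M;X^{1:N}[\mathcal{S}])=\alpha N-\mathrm{rank}(G'_{\mathcal{S}})$ are right, but the intermediate claim that a uniformly random binary $G'$ yields leakage ``at most a constant, uniformly in $\mathcal{S}$'' at every rate $R<1-\alpha$ is not correct. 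Perfect secrecy requires the $[N,RN]$ code with parity-check matrix $G'$ to have minimum distance exceeding $\alpha N$, which for a random binary code forces $R\lesssim 1-h_b(\alpha)$ (Gilbert--Varshamov). In the window $1-h_b(\alpha)<R<1-\alpha$ a random code has exponentially many codewords of weight at most $\alpha N$; choosing $\mathcal{S}$ to cover the union of the supports of several of them produces a subcode of dimension that grows with $N$, so the worst-case leakage over $\mathcal{S}$ is $\omega(1)$, not $O(1)$. Your ``main obstacle'' paragraph is thus more than a tightening issue --- the constant-leakage premise it rests on already fails, and you have not proved achievability.

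For the record, the paper does not prove Theorem~\ref{th:WTII} directly at all: it cites \cite{ozarow1984wire} for the statement, and the paper's own (constructive, strong-secrecy) achievability of the rate $1-\alpha$ is obtained by specializing Theorem~\ref{thmain}.\ref{thmain1}. In the type~II regime $|\mathfrak{S}|=1$, $p_{Z|X}=p_Z$ gives $I(U;Z)=0$ and $Y=X$, so the rate expression collapses to $\max_{q_{UX}}(1-\alpha)I(U;X)$, and taking $U=X$ uniform on $\{0,1\}$ gives $1-\alpha$. That is a substantially different (and complete) route from the coset-coding sketch you propose, precisely because the block-Markov/hashing machinery of Theorem~\ref{thmain} is what supplies the strong-secrecy guarantee that eludes a single random binary coset code at rates near $1-\alpha$.
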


\begin{thm}[\!\!{\cite{goldfeld2016semantic}}]   \label{th:WTIIn}
Suppose that $|\mathfrak{S}| = 1$, and $p_{Z|X} = p_{Z}$. Then, 
\begin{align*}
C_s = \max_{ \substack{ U - X -Y \\ |\mathcal{U}| \leq |\mathcal{X}| }} \left[ I(U;Y) - \alpha I(U;X) \right]^+. 
\end{align*}
\end{thm}

\begin{thm}[\!\!\cite{nafea2016new}]  \label{th:WTIIh}
Suppose that $|\mathfrak{S}| = 1$. Then,
\begin{align*}
C_s = \! \max_{ \substack{ U - X -(Y,Z) \\ |\mathcal{U}| \leq |\mathcal{X}| }}  \left[ I(U;Y) - \alpha I(U;X) - (1-\alpha) I(U;Z)\right]^+\!. 
\end{align*}
\end{thm}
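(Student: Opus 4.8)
The plan is to establish achievability and a matching converse; the outer $[\cdot]^+$ is immaterial, since taking $U$ independent of $(X,Y,Z)$ makes $I(U;Y)=I(U;X)=I(U;Z)=0$, so the right-hand side is always nonnegative and rate $0$ is trivially achievable.

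For achievability I would fix a test channel $p_U p_{X|U}$ attaining the maximum and rates $R,R'$ with $R' > \alpha I(U;X)+(1-\alpha)I(U;Z)$ and $R+R'<I(U;Y)$ (simultaneously possible exactly when the maximand is positive). Draw a codebook $\{u^{1:N}(m,m')\}$, $(m,m')\in\{1,\dots,2^{NR}\}\times\{1,\dots,2^{NR'}\}$, i.i.d.\ from $p_U^{\otimes N}$; to send $m$, draw $m'$ uniformly and transmit $X^{1:N}\sim p_{X|U}^{\otimes N}(\cdot\,|\,u^{1:N}(m,m'))$. The legitimate receiver does joint-typicality decoding of $u^{1:N}$, whose error probability vanishes because $R+R'<I(U;Y)$. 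For secrecy, note first that $U-X-Z$ per symbol makes $Z^i$ useless about $U^{1:N}$ once $X^i$ is known, so for every eavesdropper set $\mathcal S$ with $|\mathcal S|=\alpha N$ the observation $(X^{1:N}[\mathcal S],Z^{1:N})$ is informationally equivalent to $\big((X^i)_{i\in\mathcal S},(Z^i)_{i\notin\mathcal S}\big)$; by memorylessness and codeword-symbol independence one computes $\tfrac1N I\big(U^{1:N};(X^i)_{i\in\mathcal S},(Z^i)_{i\notin\mathcal S}\big)=\alpha I(U;X)+(1-\alpha)I(U;Z)$ — the \emph{same constant for every} $\mathcal S$.

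Because $R'$ exceeds this common value, I would invoke the soft-covering (channel-resolvability) lemma for the memoryless synthetic channel $U\mapsto(X \text{ or } Z)$: for each fixed $\mathcal S$ and each $m$, the conditional law of the eavesdropper's observation given $M=m$ is, except with a doubly-exponentially small failure probability over the codebook, within total variation $2^{-\Omega(N)}$ of a single fixed product law. Since $\binom{N}{\alpha N}\le 2^{N}$ and there are $2^{NR}$ messages, a union bound leaves one codebook that is reliable and for which, simultaneously over all $\mathcal S$, these conditional laws are mutually TV-close, whence $I\big(M;(X^i)_{i\in\mathcal S},(Z^i)_{i\notin\mathcal S}\big)\to0$ uniformly in $\mathcal S$ (the exponential TV decay dominates the exponential eavesdropper alphabet). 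Letting $R\uparrow I(U;Y)-\alpha I(U;X)-(1-\alpha)I(U;Z)$ gives achievability; the $B$-block definition is handled by the same argument over the $NB$ uses, the balanced sets in $\mathbb{A}$ sufficing. The principal obstacle here is precisely this step: controlling leakage \emph{uniformly over exponentially many} eavesdropper patterns, which is why one needs both the set-independence of the leakage rate and the doubly-exponential concentration of soft covering.

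For the converse, Fano's inequality gives $NR\le I(M;Y^{1:N})+N\epsilon_N$. The key idea is to exploit that secrecy holds for \emph{every} size-$\alpha N$ set by averaging the constraint over a \emph{random} set that includes each index independently with probability $\alpha-\delta$ (conditioning away the exponentially unlikely event $|\mathcal S|>\alpha N$): writing $E_i=\mathds{1}\{i\in\mathcal S\}$ and $V_i=X^i$ if $E_i=1$, $V_i=Z^i$ otherwise, this yields $I\big(M;(E_i,V_i)_{i=1}^{N}\big)\le\epsilon_N'$ with $\epsilon_N'\to0$. Subtracting, $NR\le I(M;Y^{1:N})-I(M;W^{1:N})+N\epsilon_N+\epsilon_N'$ with $W_i\triangleq(E_i,V_i)$; the Csisz\'ar--K\"orner single-letterization (Csisz\'ar sum identity, with eavesdropper sequence $W^{1:N}$) gives $I(M;Y^{1:N})-I(M;W^{1:N})=\sum_{i=1}^{N}\big[I(U_i;Y_i)-I(U_i;W_i)\big]$ with $U_i\triangleq(M,Y^{1:i-1},W^{i+1:N})$, and one checks $U_i-X_i-(Y_i,Z_i)$. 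Since $E_i$ is independent of $(U_i,X_i,Z_i)$, one has $I(U_i;W_i)=(\alpha-\delta)I(U_i;X_i)+(1-\alpha+\delta)I(U_i;Z_i)$ with the \emph{same weights for every} $i$, so each summand is at most $\max_{U-X-(Y,Z)}\big[I(U;Y)-(\alpha-\delta)I(U;X)-(1-\alpha+\delta)I(U;Z)\big]$ (no time-sharing variable is needed, as the target expression is already a maximum over input distributions). Dividing by $N$, sending $N\to\infty$ then $\delta\to0$ (using $I(U;X)-I(U;Z)=I(U;X|Z)\ge0$ to absorb the $\delta$-terms), and applying the standard Fenchel--Eggleston--Carath\'eodory bound $|\mathcal U|\le|\mathcal X|$ yields $R\le\max_{U-X-(Y,Z),\,|\mathcal U|\le|\mathcal X|}\big[I(U;Y)-\alpha I(U;X)-(1-\alpha)I(U;Z)\big]$. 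The obstacle in this direction is that a single fixed adversary set would give only the loose bound $\alpha\max[I(U;Y)-I(U;X)]+(1-\alpha)\max[I(U;Y)-I(U;Z)]$; it is the Bernoulli averaging — turning the per-symbol penalty into a fixed convex combination rather than a position-dependent one — that restores tightness.
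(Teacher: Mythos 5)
The paper does not prove Theorem~\ref{th:WTIIh}; it cites it from \cite{nafea2016new} as a known result, and the paper's actual contribution is the constructive achievability in Theorem~\ref{thmain} (Sections~\ref{sec:proof1a}--\ref{sec:initphase}), obtained via polar source coding with side information, distribution approximation, block-Markov chaining, and universal hashing with the leftover-hash lemma. Your proposal is therefore a genuinely different route: for achievability you use i.i.d.\ random codebooks with superposition binning and soft covering (channel resolvability), union-bounded over the $\binom{N}{\alpha N}\le 2^N$ eavesdropper patterns via the doubly-exponential concentration of the soft-covering lemma, exploiting the crucial observation that the per-set leakage rate $\alpha I(U;X)+(1-\alpha)I(U;Z)$ is the same for every $\mathcal S$; this is in spirit the argument of \cite{goldfeld2016semantic,nafea2016new}, not the paper's. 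For the converse (which the paper does not re-derive at all), you use Fano plus Bernoulli$(\alpha-\delta)$ averaging over adversary sets to produce a fixed per-symbol eavesdropper law, and then the Csisz\'ar--K\"orner single-letterization with $W_i=(E_i,V_i)$. The trade-off between the two achievability approaches is the expected one: yours is shorter and non-constructive; the paper's is longer but explicit and polynomial-time, which is the whole point of the paper.

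Two technical remarks on your converse, neither fatal. First, the identity $I(M;Y^{1:N})-I(M;W^{1:N})=\sum_i[I(U_i;Y_i)-I(U_i;W_i)]$ with $U_i=(M,Y^{1:i-1},W^{i+1:N})$ is not exact as written; the Csisz\'ar sum identity gives the equality $I(M;Y^{1:N})-I(M;W^{1:N})=\sum_i[I(M;Y_i|V_i)-I(M;W_i|V_i)]$ with $V_i=(Y^{1:i-1},W^{i+1:N})$, and the passage to a single unconditioned auxiliary requires conditioning on each realization of $V_i$ and noting that each conditional term is itself of the target form $I(U;Y)-\alpha'I(U;X)-(1-\alpha')I(U;Z)$ (this is your parenthetical ``no time-sharing variable is needed,'' which is the right idea but should be stated as a per-realization argument rather than as an unconditional equality). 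Second, for the Markov check $M - X_i - (Y_i,W_i)$ one should verify it conditionally on $V_i$, which does hold by memorylessness and the independence of $E_i$ from everything else. These are routine fixes; the substantive novelty of your converse, the Bernoulli averaging that turns the position-dependent adversary choice into a fixed per-symbol mixture weight, is exactly the right device, and as you note it is essential for tightness.
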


\begin{thm}[\!\!\cite{Liang09,bjelakovic2013secrecy}]  \label{th:cap}
Consider the wiretap channel with compound eavesdropper channel statistics, i.e., assume that  $\mathbf{s} = (\mathbf{s}_b)_{b\in \mathcal{B}} \in \mathfrak{S}^{NB}$ is unknown to the legitimate users but all the components of $\mathbf{s}_b$, $b \in \mathcal{B}$, are identical. Assume also that $\alpha = 0$. Then,
\begin{align*}
C_s \geq \max_{ \substack{\forall s \in \mathfrak{S},U - X -(Y,Z(s)) \\ |\mathcal{U}| \leq |\mathcal{X}| }}   \min_{s \in \mathfrak{S}} \left[ I(U;Y) - I(U;Z(s)) \right]^+. 
\end{align*}
Moreover, for a degraded channel, i.e., when for all $s \in \mathfrak{S}$, $X-Y-Z(s)$, we have $$
C_s = \max_{ p_X}  \min_{s \in \mathfrak{S}} I(X;Y |Z(s)) .  $$
\end{thm}

\begin{thm}[\!\!\cite{molavianjazi2009arbitrary,bjelakovic2013capacity}] \label{th:arb}
Consider the wiretap channel with arbitrarily varying eavesdropper channel, i.e., assume that $\mathbf{s} \in \mathfrak{S}^{NB}$ is unknown to the legitimate users. Assume also that $\alpha = 0$. Define $\overline{\mathfrak{S}}$ as the set of all the convex combinations of elements of ${\mathfrak{S}}$. If there exists a best channel for the eavesdropper, i.e., $ \exists s^*\in \overline{\mathfrak{S}}, \forall s \in \mathfrak{S}$, $X-Z (s^*) -Z(s)$,~then 
\begin{align*}
C_s \geq  \max_{ \substack{\forall \bar{s} \in \overline{\mathfrak{S}}, U - X -(Y,Z(\bar{s})) \\ |\mathcal{U}| \leq |\mathcal{X}| }} \min_{\bar{s} \in \overline{\mathfrak{S}}} \left[ I(U;Y) - I(U;Z(\bar{s})) \right]^+. 
\end{align*}
Moreover, if there exists a best channel for the eavesdropper and for all $\bar{s} \in \overline{\mathfrak{S}}$, $X-Y-Z(\bar{s})$, then
$$
C_s =  \max_{ p_X} \min_{\bar{s} \in \overline{\mathfrak{S}}} I(X;Y|Z(\bar{s}) ). 
$$
\end{thm}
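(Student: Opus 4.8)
The plan is to obtain Theorem~\ref{th:arb} by reducing it to Wyner's wiretap channel (Theorem~\ref{th:WynerWTC}), using the best-channel hypothesis to collapse the whole family $(p_{Z(s)|X})_{s\in\mathfrak{S}}$, and even its convex hull $(p_{Z(\bar{s})|X})_{\bar{s}\in\overline{\mathfrak{S}}}$, to the single discrete memoryless channel $p_{Z(s^*)|X}$ (which is a bona fide DMC since $s^*\in\overline{\mathfrak{S}}$). For the lower bound I would fix a joint law $p_{UX}$ with $|\mathcal{U}|\leq|\mathcal{X}|$ and invoke Theorem~\ref{th:WynerWTC} for the channel $(p_{Y|X},p_{Z(s^*)|X})$ to get a wiretap code of message rate arbitrarily close to $[I(U;Y)-I(U;Z(s^*))]^+$. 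Reliability for the original model is then immediate, since $|\mathfrak{T}|=1$ makes the legitimate channel $p_{Y|X}$ fixed and independent of the adversary.

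The crux of the direct part is a vector-degradation step: I would show that for every state sequence $\mathbf{s}\in\mathfrak{S}^{NB}$, the eavesdropper's observation $Z_{1:B}^{1:N}(\mathbf{s})$ is stochastically degraded with respect to the output $Z_{1:B}^{1:N}(s^*)$ produced by using $p_{Z(s^*)|X}$ over the same $NB$ channel uses. Indeed, the hypothesis $X-Z(s^*)-Z(s)$ yields, for each $s\in\mathfrak{S}$, a channel $W_s$ with $p_{Z(s)|X}=W_s\circ p_{Z(s^*)|X}$; applying at each coordinate $i$ the channel dictated by the corresponding entry of $\mathbf{s}$, independently across coordinates, produces a variable whose conditional law given $X_{1:B}^{1:N}$ equals that of $Z_{1:B}^{1:N}(\mathbf{s})$. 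Hence $M_{1:B}-X_{1:B}^{1:N}-Z_{1:B}^{1:N}(s^*)-Z_{1:B}^{1:N}(\mathbf{s})$, and the data-processing inequality gives $I(M_{1:B};Z_{1:B}^{1:N}(\mathbf{s}))\leq I(M_{1:B};Z_{1:B}^{1:N}(s^*))\to0$ uniformly in $\mathbf{s}$; the hypothesis $\alpha=0$ removes the extra $X_{1:B}^{1:N}[\mathcal{A}]$ term. To conclude, I would check that the rate matches the stated expression: averaging the $W_s$ shows $X-Z(s^*)-Z(\bar{s})$ for every $\bar{s}\in\overline{\mathfrak{S}}$, whence $I(U;Z(s^*))=\max_{\bar{s}\in\overline{\mathfrak{S}}}I(U;Z(\bar{s}))$ and therefore $[I(U;Y)-I(U;Z(s^*))]^+=\min_{\bar{s}\in\overline{\mathfrak{S}}}[I(U;Y)-I(U;Z(\bar{s}))]^+$; taking the supremum over $p_{UX}$ gives the claimed bound.

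For the equality under the extra hypothesis $X-Y-Z(\bar{s})$, $\bar{s}\in\overline{\mathfrak{S}}$, the direct part is obtained by setting $U=X$ above and using the Markov chain to rewrite $I(X;Y)-I(X;Z(\bar{s}))=I(X;Y\mid Z(\bar{s}))\geq0$, so the bound becomes $\max_{p_X}\min_{\bar{s}\in\overline{\mathfrak{S}}}I(X;Y\mid Z(\bar{s}))$. For the converse I would note that, since $s^*$ is a convex combination of states in $\mathfrak{S}$, the eavesdropper can realize $p_{Z(s^*)|X}$ by drawing the state sequence i.i.d.\ from the mixing distribution; averaging the secrecy constraint over that random choice shows every achievable code also satisfies $I(M_{1:B};Z_{1:B}^{1:N}(s^*))\to0$. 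The classical converse for the degraded wiretap channel $(p_{Y|X},p_{Z(s^*)|X})$ then gives $R\leq\max_{p_X}I(X;Y\mid Z(s^*))$, and $I(X;Z(\bar{s}))\leq I(X;Z(s^*))$ for all $\bar{s}\in\overline{\mathfrak{S}}$ identifies $\max_{p_X}I(X;Y\mid Z(s^*))=\max_{p_X}\min_{\bar{s}\in\overline{\mathfrak{S}}}I(X;Y\mid Z(\bar{s}))$, closing the gap.

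The main obstacle is the reduction itself: converting the requirement of security against the worst arbitrarily varying state sequence into security against a single averaged best channel. This is precisely where the best-channel hypothesis is indispensable — the per-coordinate degrading channels must be assembled into a \emph{joint} degradation of the full length-$NB$ observation (direct part), and, dually, the adversary's per-use randomization must be shown to reproduce the averaged channel (converse); both are routine once the degradation structure is available. Alternatively, the direct part can be presented as a corollary of the compound-wiretap result of Theorem~\ref{th:cap} applied to the compound eavesdropper set $\{p_{Z(\bar{s})|X}:\bar{s}\in\overline{\mathfrak{S}}\}$, combined with the same observation that the per-use variation grants the eavesdropper nothing beyond $s^*$.
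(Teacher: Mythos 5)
Your argument is correct, and it proceeds along a genuinely different route than what the paper contains for this statement. The paper states Theorem~\ref{th:arb} as a known result (citing \cite{molavianjazi2009arbitrary,bjelakovic2013capacity,hayashi2016secure}) and does not re-prove it; what it proves instead is the constructive counterpart, Theorem~\ref{thmain}.\ref{thmain2}, where the arbitrarily varying eavesdropper is handled inside the polar/universal-hashing machinery of Section~\ref{sec:proof2}. You give a self-contained, non-constructive proof: apply Theorem~\ref{th:WynerWTC} to the fixed pair $(p_{Y|X},p_{Z(s^*)|X})$ --- noting, as you do, that $p_{Z(s^*)|X}$ is a legitimate DMC since $s^*\in\overline{\mathfrak{S}}$ is a finite convex combination --- then assemble the per-coordinate degradations $X-Z(s^*)-Z(s)$ into the joint chain $M_{1:B}-X_{1:B}^{1:N}-Z_{1:B}^{1:N}(s^*)-Z_{1:B}^{1:N}(\mathbf{s})$ and invoke the data-processing inequality so that the fixed-channel secrecy guarantee holds uniformly over $\mathbf{s}\in\mathfrak{S}^{NB}$. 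The identity $\min_{\bar s\in\overline{\mathfrak{S}}}\left[I(U;Y)-I(U;Z(\bar s))\right]^+ = \left[I(U;Y)-I(U;Z(s^*))\right]^+$, which follows from $X-Z(s^*)-Z(\bar s)$ for all $\bar s\in\overline{\mathfrak{S}}$, then matches the claimed rate; and your converse under $X-Y-Z(\bar s)$ is also right: the bound $I(M_{1:B};Z_{1:B}^{1:N}(s^*))\leq\max_{\mathbf{s}}I(M_{1:B};Z_{1:B}^{1:N}(\mathbf{s}))$ holds because the eavesdropper can realize $p_{Z(s^*)|X}$ by drawing the state i.i.d.\ from the mixture and discarding it, after which the standard degraded-wiretap converse closes the gap. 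The paper's constructive achievability in Section~\ref{sec:proof2} relies on the very same degradation step $(U_b,X_b)-Z_b(s^*)-Z_b(\mathbf{s}_b)$, but deploys it to lower-bound a conditional entropy that feeds the leftover hash lemma; what that approach buys is an explicit code with complexity $O(BT(N))$, whereas your reduction buys brevity and makes transparent the logical dependence on Theorem~\ref{th:WynerWTC} (or, as you note, on the compound result of Theorem~\ref{th:cap}), but yields no construction --- which is, of course, the paper's main concern.
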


 Note that \cite{nafea2016new,Liang09,bjelakovic2013secrecy,molavianjazi2009arbitrary,bjelakovic2013capacity} prove
the existence of coding schemes that achieve the rates in Theorems \ref{th:WTIIn}-\ref{th:arb} but do not provide explicit
coding schemes. To the best of our knowledge, no explicit coding schemes that achieve the secrecy rates in  Theorems~\ref{th:WTIIn}-\ref{th:arb} have been previously proposed.

More specifically,~\cite{renes2013efficient,Gulcu14,Chou16} provided polar coding schemes that achieve the strong secrecy capacity for Wyner's wiretap channel, i.e., Theorem \ref{th:WynerWTC}, with the following caveats: a pre-shared secret with negligible rate is required in  \cite{Gulcu14,Chou16}, no efficient method is known to construct the codebooks in~\cite{renes2013efficient}, and  the existence of certain deterministic maps  is needed in \cite{Gulcu14}, similar to~\cite[Theorem~3]{Honda13}. 
Note that a main tool in \cite{Gulcu14,Chou16} is block-Markov coding to support non-degraded and non-symmetric channels. 
Using techniques similar  to \cite{renes2013efficient,Gulcu14,Chou16}, including block-Markov coding, and ideas for compound channels without security constraints in \cite{hassani2014universal,csacsouglu2016universal}, it is unclear to us how to extend existing polar coding schemes   to the wiretap channel models of Theorems~\ref{th:WTII}-\ref{th:cap} and Theorems~\ref{thmain1}, \ref{thmain2}, \ref{thmainb} because of the uncertainty on the eavesdropper's observations.

A different approach than polar coding to obtain wiretap codes for Wyner's wiretap channel is provided in \cite{hayashi2010construction,Bellare2012,Hayashi11}. Specifically, these works construct wiretap codes using (i) capacity-achieving channel codes (without security constraint), and (ii) universal hashing~\cite{Bennett95}, and have been the first works to provide efficient codes that asymptotically achieve  optimal secrecy rates and strong secrecy for additive or symmetric and degraded wiretap channels. \cite{hayashi2016secure} subsequently extended these constructions to any wiretap~channels as in Theorem \ref{th:WynerWTC}.

It is also worth noting that \cite{renes2011noisy} proposed wiretap channel coding for Wyner's model using source coding with side information and universal hashing. It is, however, unclear to us how to directly translate the scheme of \cite{renes2011noisy} to an efficient code construction without employing block-Markov coding for the part of the coding scheme that involves source coding with side~information.

Our approach in this paper departs from the works in \cite{hayashi2010construction,Bellare2012,Hayashi11,hayashi2016secure}
 because, instead of relying on channel codes, we rely on source codes to handle the reliability constraint, which allows us to use a block-Markov coding approach to handle non-symmetric and non-degraded channels. Our approach also departs from existing polar coding schemes, as our construction solely relies on polar coding results for source coding with side information, does not require the existence of certain maps, and does not require a pre-shared key to ensure strong secrecy. In addition to proposing the first explicit coding schemes that achieve the secrecy rates in  Theorems~\ref{th:WTIIn}-\ref{th:arb} and Theorems \ref{thmain1}, \ref{thmain2}, \ref{thmainb}, our coding approach also proves that all the models  considered in this paper can be treated under a unified framework that only requires three primitives:  (i) source coding with side information, (ii)~universal hashing, and (iii) distribution approximation.

\section{Statement of Main Results} \label{sec:results}

Our main results are the following theorems. 
\begin{thm} \label{thmain1}
If all the components of $\mathbf{s}_b$, $b \in \mathcal{B}$, are identical, then the coding scheme of Section \ref{sec:preshared} achieves the secrecy rate
$$ \max_U\left[ I(U;Y) - \alpha I(U;X) - (1-\alpha) \max_{s\in \mathfrak{S}} I \left({U}; {Z}(s) \right) \right]^+\!\!,$$ where the maximum is taken over $U$ such that $ \forall s \in \mathfrak{S},U - X -(Y,Z(s))$ and $|\mathcal{U}| \leq |\mathcal{X}| $.
  \end{thm}
  
  \begin{thm} 
   Assume that the components of $\mathbf{s}_b$, $b \in \mathcal{B}$, are arbitrary. If  there exists a best channel for the eavesdropper, then the coding scheme of Section \ref{sec:preshared} achieves the secrecy rate
$$ \max_U \left[ I(U;Y) - \alpha I(U;X) - (1-\alpha) \max_{\bar{s}\in \bar{\mathfrak{S}}} I \left({U}; {Z}(\bar{s}) \right) \right]^+,\!\!$$ 
where the maximum is taken over $U$ such that $ \forall \bar{s} \in \overline{\mathfrak{S}}, U - X -(Y,Z(\bar{s}))$ and $|\mathcal{U}| \leq |\mathcal{X}| $.
\label{thmain2}
\end{thm}

The proof of Theorem \ref{thmain1} is presented in two parts. First, in Section \ref{sec:proof1a}, the initialization phase, i.e., Algorithms \ref{alg:p3}, \ref{alg:p4}, is ignored and Theorem \ref{thmain1} is proved under the assumption that the legitimate users have a pre-shared key whose rate is negligible. Next, in Section~\ref{sec:initphase}, Theorem \ref{thmain1} is proved without this assumption by considering the initialization phase combined with Algorithms \ref{alg:p1}, \ref{alg:p}. The proof of Theorem \ref{thmain2} is similar to the one of  Theorem \ref{thmain1} and is discussed in Section~\ref{sec:proof2}.

Finally, from Theorems \ref{thmain1} and \ref{thmain2}, we conclude that the secrecy rates of Theorems~\ref{th:WynerWTC}-\ref{th:arb}  are achieved. 

Note that we will also extend Theorem \ref{thmain1} to the case of a compound main channel in Theorem \ref{thmainb}.

\section{Coding scheme} \label{sec:preshared}

Our coding scheme consists of two phases: An initialization phase presented in Section \ref{sec:initk}, and the actual secure communication phase presented in Section \ref{secCS}. The initialization phase allows the legitimate users to share a secret key which is used in the second phase of the coding scheme. Both phases rely on three primitives presented in Section \ref{sec:notation}.

In this section, for $s \in \mathfrak{S}$,  we consider an arbitrary joint distribution $q_{UXYZ(s)} \triangleq q_{UX} p_{YZ(s)|X}$  with $|\mathcal{U}| = |\mathcal{X}| = 2 $ and such that $U - X -(Y,Z(s))$. Let $K$ be a power of two, let $(U^{1:K},X^{1:K})$ be distributed according to $q_{U^{1:K}X^{1:K}} \triangleq \prod_{i= 1}^K q_{UX}$, and define  $A^{1:K} \triangleq  U^{1:K} G_K$, $V^{1:K} \triangleq  X^{1:K} G_K$, where $G_K \triangleq \left[ \begin{smallmatrix}
       1 & 0            \\[0.3em]
       1 & 1 
     \end{smallmatrix} \right]^{\otimes \log K} $ is the matrix defined in~\cite{Arikan10}. Define also for $\delta_K \triangleq 2^{-K^{\beta}}$, $\beta \in ]0,1/2[$, the sets
\begin{align*}
\mathcal{V}_U &\triangleq \left\{ i \in \llbracket 1, K \rrbracket : H( A^i | A^{1:i-1}) >  1 - \delta_K \right\}, \\
\mathcal{H}_{U} &\triangleq \left\{ i \in \llbracket 1, K \rrbracket : H( A^i | A^{1:i-1} ) > \delta_K \right\}, \\
\mathcal{V}_{U|Y} &\triangleq \left\{ i \in \llbracket 1, K \rrbracket : H( A^i | A^{1:i-1} Y^{1:K}) > 1 - \delta_K \right\},\\
\mathcal{H}_{U|Y} &\triangleq \left\{ i \in \llbracket 1, K \rrbracket : H( A^i | A^{1:i-1} Y^{1:K}) > \delta_K \right\}, \\
     \mathcal{V}_{X} &\triangleq \left\{ i \in \llbracket 1, K \rrbracket : H( V^i | V^{1:i-1} ) >1- \delta_K \right\}, \\
\mathcal{V}_{X|U} &\triangleq \left\{ i \in \llbracket 1, K \rrbracket : H( V^i | V^{1:i-1} U^{1:K}) > 1-\delta_K \right\}. 
\end{align*}

\subsection{Primitives used in the coding scheme} \label{sec:notation}
 
\noindent{}\textbf{Primitive 1}: Source coding (\textbf{SC}) with side information for the source $(\mathcal{U} \times \mathcal{Y},q_{UY})$ \cite{Arikan10}. 
Define the encoder $f^{\textbf{SC}} \triangleq (f^{\textbf{SC}}_1,f^{\textbf{SC}}_2)$ with
\begin{align*}
f^{\textbf{SC}}_1(A^{1:K}) & \triangleq A^{1:K}[\mathcal{V}_{U|Y} ] ,\\
f^{\textbf{SC}}_2(A^{1:K}) & \triangleq A^{1:K}[\mathcal{H}_{U|Y} \backslash \mathcal{V}_{U|Y} ].
\end{align*}
Then, define $g^{\textbf{SC}}$ as the successive cancellation decoder of~\cite{Arikan10} such that if $\widehat{A}^{1:K} \triangleq g^{\textbf{SC}} (f^{\textbf{SC}}_1(A^{1:K}),f^{\textup{\textbf{SC}}}_2(A^{1:K}),Y^{1:K})$, then
\begin{align}
\mathbb{P} [ \widehat{A}^{1:K} \neq {A}^{1:K}] \leq K \delta_K. \label{lemscs}
\end{align}
\begin{rem}
We decompose $f^{\textbf{SC}}$ in two parts $f^{\textbf{SC}}_1$ and $f^{\textbf{SC}}_2$ because $f^{\textbf{SC}}_1(A^{1:K})$ can be shown to be almost uniform in divergence, e.g., \cite[Lemma 8]{chou2015coding}, which will be a useful property in our coding scheme analysis. Note, however, that the distribution of $
(f^{\textbf{SC}}_1(A^{1:K})\lVert f^{\textbf{SC}}_2(A^{1:K}))$ is not necessarily close to a uniform distribution.
 \end{rem}
\medskip
\noindent{}\textbf{Primitive 2}: Universal hashing (\textbf{UH}) \cite{Carter79}. Let $ c,d \in \mathbb{N}$ such that $ d \leq c$, 
and define $\mathcal{S}\triangleq\{0,1\}^{c} \backslash \{\mathbf{0}\}$. Then, define for $S\in \mathcal{S}$, $T \in \{0,1\}^{c}$, $R \in \{0,1\}^{d}$,  $R' \in \{0,1\}^{c - d}$
\begin{align*}
f^{\textbf{UH}}_S( R, R') & \triangleq S^{-1} \odot (R \lVert R') ,\\
g^{\textbf{UH}}_S(T,d) & \triangleq (S\odot T)_{d}, 
\end{align*}
where $\odot$ is the multiplication in $\textup{GF}(2^{c})$ and $(\cdot)_{d}$ selects the $d$ most significant bits, such that
$$
g^{\textbf{UH}}_S(f^{\textbf{UH}}_S( R, R'), d) = R.
$$
By \cite{Bellare2012}, $\mathcal{F}\triangleq\{g^{\textbf{UH}}_S\}_{S\in \mathcal{S}}$ is a family of two-universal hash~functions. 

\medskip
\noindent{}\textbf{Primitive 3}:
 Distribution approximation (\textbf{DA}) for  $q_{A^{1:K}}$, the distribution of $A^{1:K} \triangleq  U^{1:K} G_K$, where $U^{1:K}$ follows $q_{U^{1:K}} \triangleq \prod_{i=1}^K q_U$. Let  $T^{1:|\mathcal{V}_U|}$ be a sequence of uniformly distributed bits over $\{0,1\}^{|\mathcal{V}_U|}$. Then, define $\widetilde{A}^{1:K}$ according to the distribution $\widetilde{p}_{A^{1:K}} \triangleq \prod_{j=1}^K \widetilde{p}_{{A}^j|{A}^{1:j-1}} $ with 
         \begin{align} \label{eqeqa}
\widetilde{p}_{{A}^j|{A}^{1:j-1}} ({a}^j|{a}^{1:j-1}) \triangleq    \begin{cases} 
\mathds{1} \{ a^j = T^j\} &\text{if }j\in {\mathcal{V}}_{U}  \\
{q}_{A^j|A^{1:j-1}} (a^j|a^{1:j-1}) &\text{if }j\in {\mathcal{V}}_{U}^c \end{cases}  
  \end{align}
  We write $\widetilde{A}^{1:K} = f^{\textbf{DA}}(T^{1:|\mathcal{V}_U|})$. Moreover, we have 
 \begin{align}
 \mathbb{D}(q_{A^{1:K}}\lVert \widetilde{p}_{A^{1:K}} ) \nonumber 
& \stackrel{(a)}{=} \sum_{j=1}^K \mathbb{E}_{q_{A^{1:j-1}}} \mathbb{D}( q_{A^{j}|A^{1:j-1}} \lVert \widetilde{p}_{A^{j} | A^{1:j-1}} ) \nonumber  \\
& \stackrel{(b)}{=}  \sum_{j\in \mathcal{V}_U} (1 - H(A^j |A^{1:j-1}) )  \stackrel{(c)} \leq K \delta_K,  \label{eqDA1}
\end{align}  
where $(a)$ holds by the chain rule, $(b)$ holds by \eqref{eqeqa}, $(c)$ holds by the definition of $\mathcal{V}_U$.
  
\medskip
\noindent{}\textbf{Variant of Primitive 3}: Channel prefixing (\textbf{CP}) for the distribution $q_{X^{1:K}U^{1:K}} \triangleq \prod_{i=1}^K q_{XU}$. Given $U^{1:K}$ distributed according to $q_{U^{1:K}}$, define $\widetilde{V}^{1:K}$ according to the distribution $\widetilde{p}_{U^{1:K}V^{1:K}} \triangleq q_{U^{1:K}} \prod_{j=1}^K \widetilde{p}_{{V}^j|{V}^{1:j-1}U^{1:K}} $ with  
 \begin{align}
 &\widetilde{p}_{V^j|V^{1:j-1}U^{1:K}} (v^j|v^{1:j-1}\widetilde{u}^{1:K}) \nonumber\\
 & \triangleq
\begin{cases}
  1/2 & \text{if }j \in  \mathcal{V}_{X|U}\\
  {q}_{V^j|V^{1:j-1}U^{1:K}} (v^j|v^{1:j-1}\widetilde{u}^{1:K}) & \text{if }j\in \mathcal{V}_{X|U}^c
 \end{cases}  \label{eqeqvu}  
\end{align}
 We write $\widetilde{V}^{1:K} = f^{\textbf{CP}}(U^{1:K})$. Moreover, we have 
\begin{align}
& \mathbb{D}( q_{U^{1:K}V^{1:K}} \lVert \widetilde{p}_{U^{1:K}V^{1:K}} ) \displaybreak[0] \nonumber\\
& \stackrel{(a)}{=} \sum_{j=1}^K  \mathbb{E}_{q_{U^{1:K}V^{1:j-1}}} \mathbb{D}( q_{V^{j}|V^{1:j-1}U^{1:K}} \lVert \widetilde{p}_{V^{j}|V^{1:j-1}U^{1:K}} )  \nonumber  \displaybreak[0] \\
& \stackrel{(b)}{=}   \sum_{j\in \mathcal{V}_{X|U}}  (1- H(V^{j}|V^{1:j-1}U^{1:K} ) )  \stackrel{(c)} \leq  K \delta_K,  \label{eqCP}
\end{align} 
where $(a)$ holds by the chain rule, $(b)$ holds by \eqref{eqeqvu}, $(c)$ holds by the definition of $\mathcal{V}_{X|U}$.

\subsection{Coding scheme: Phase I - Initialization} \label{sec:initk}

The legitimate users create a secret key with length $l_{\textup{key}}$, 
which will be be specified later in Section \ref{sec:keysec}, with Algorithms \ref{alg:p3} and \ref{alg:p4}, which operate over $B_0$ blocks of length $N \triangleq KL$, where $L,K \in \mathbb{N}$, and $K$ is a power of two.  We define $\mathcal{B}_0 \triangleq \llbracket 1 , B_0 \rrbracket$ and $\mathcal{L} \triangleq \llbracket 1 , L \rrbracket$. 
In each Block~$b \in \mathcal{B}_0$, the encoder forms the key Key$_b$ with length $l_{\textup{key}}' \triangleq l_{\textup{key}}/B_0$, as described in Algorithm~\ref{alg:p3}.  The encoder uses  the following randomization sequences: $R^{\textup{init}'}_b \triangleq (R^{\textup{init}'}_{b,l})_{l\in\mathcal{L}}$, where $R^{\textup{init}'}_{b,l}$, $l\in\mathcal{L}$, is a sequence of uniformly distributed bits over $\{0,1 \}^{|\mathcal{H}_{U|Y} |-| \mathcal{V}_{U|Y}|} $,  $R^{\textup{init}}_b$ is a sequence of uniformly distributed bits over $ \mathcal{R}^{\textup{init}}\triangleq \{ 0,1\}^{N}  \backslash \{ \mathbf{0} \}$. The encoder also uses the local randomness $(R^{\textup{loc}}_{b,l})_{l \in \mathcal{L}}$, where $R^{\textup{loc}}_{b,l}$, $l \in \mathcal{L}$, is a sequence of uniformly distributed bits over $\{0,1\}^{|\mathcal{V}_U|}$.

\begin{algorithm}
  \caption{Initialization at the transmitter}
  \label{alg:p3}
  \begin{algorithmic}[1] 
\REQUIRE Randomization sequences $(R^{\textup{init}}_b)_{b \in \mathcal{B}_0}$ and $(R^{\textup{init}'}_b)_{b \in \mathcal{B}_0}$ \vspace{0.1em}
\FOR{Block $b \in \mathcal{B}_0$}  
    \FOR{Sub-block $l \in \mathcal{L}$}
        
 \STATE Define $\widetilde{A}_{b,l}^{1:K} \triangleq f^{\textbf{DA}}(R^{\textup{loc}}_{b,l})$  
  	\STATE Define $\widetilde{U}_{b,l}^{1:K} \triangleq \widetilde{A}_{b,l}^{1:K} G_K$ 
  	  \STATE Define $\widetilde{V}_{b,l}^{1:K}\triangleq f^{\textbf{CP}} ( \widetilde{U}_{b,l}^{1:K} )$ 
  	\STATE Define $\widetilde{X}_{b,l}^{1:K} \triangleq \widetilde{V}_{b,l}^{1:K} G_K$
	\ENDFOR
	\STATE Transmit $\widetilde{X}^{1:N}_{b}  \triangleq \displaystyle\concat_{l\in \mathcal{L}} \widetilde{X}_{b,l}^{1:K}$ over the channel
	\STATE Let $\widetilde{Y}^{1:N}_{b} \triangleq \displaystyle\concat_{l\in \mathcal{L}} \widetilde{Y}_{b,l}^{1:K}$, $\widetilde{Z}^{1:N}_{b}(\mathbf{s}_b) \triangleq \displaystyle\concat_{l\in \mathcal{L}} \widetilde{Z}_{b,l}^{1:K}(\mathbf{s}_{b,l})$ denote the channel outputs
		\STATE Transmit with a channel code \cite{Arikan09}\\ $D_b \triangleq \displaystyle\concat_{l\in \mathcal{L}}  \left[ \left(f^{\textbf{SC}}_2(\widetilde{A}^{1:K}_{b,l}) \oplus R^{\textup{init}'}_{b,l} \right) \concat f^{\textbf{SC}}_1(\widetilde{A}^{1:K}_{b,l}) \right],$\\
		where $\oplus$ denotes modulo 2 addition 
		\STATE Define $\widetilde{U}^{1:N}_{b} \triangleq \displaystyle\concat_{l\in \mathcal{L}}\widetilde{U}_{b,l}^{1:K}$ 
		\STATE Define $ \textup{Key}_b \triangleq g^{\textbf{UH}}_{R^{\textup{init}}_b} ( \widetilde{U}^{1:N}_{b}, l_{\textup{key}}')$ 
	\ENDFOR
  \end{algorithmic}  
\end{algorithm}
\begin{algorithm}
  \caption{Initialization phase at the receiver}
  \label{alg:p4}
  \begin{algorithmic}[1] 
\REQUIRE  $(R^{\textup{init}}_b)_{b \in \mathcal{B}_0}$ and $(R^{\textup{init}'}_b)_{b \in \mathcal{B}_0}$ 
\FOR{Block $b \in \mathcal{B}_0$}  
\STATE Form an estimate $\widehat{D}_b$ of $D_b$  
    \FOR{Sub-block $l \in \mathcal{L}$}
\STATE Given $(\widehat{D}_b,R^{\textup{init}'}_b)$ and Line 10 of Algorithm \ref{alg:p3}, form an estimate of $(f^{\textbf{SC}}_1(\widetilde{A}^{1:K}_{b,l}), f^{\textbf{SC}}_2(\widetilde{A}^{1:K}_{b,l}))$ and denote this estimate by 
$(\widehat{A}^{1:K}_{b,l}[\mathcal{V}_{U|Y}],\widehat{A}^{1:K}_{b,l}[\mathcal{H}_{U|Y} \backslash \mathcal{V}_{U|Y}])$ 
         \STATE 
         Form an estimate of $\widetilde{A}^{1:K}_{b,l}$ as $$\widehat{A}^{1:K}_{b,l} \triangleq g^{\textbf{SC}}(\widehat{A}^{1:K}_{b,l}[\mathcal{V}_{U|Y}],\widehat{A}^{1:K}_{b,l}[\mathcal{H}_{U|Y} \backslash \mathcal{V}_{U|Y}], \widetilde{Y}^{1:K}_{b,l})$$
         \vspace{-1em}       
         \STATE Form $\widehat{U}_{b,l}^{1:K} \triangleq \widehat{A}_{b,l}^{1:K} G_K$ an estimate of $\widetilde{U}_{b,l}^{1:K} $
	\ENDFOR
\STATE	Form $\widehat{U}^{1:N}_{b}  \triangleq \displaystyle\concat_{l\in \mathcal{L}} \widehat{U}_{b,l}^{1:K}$ an estimate of $\widetilde{U}^{1:N}_{b} $ 
\STATE	Form $ \widehat{\textup{Key}}_b = g^{\textbf{UH}}_{R^{\textup{init}}_b} ( \widehat{U}^{1:N}_{b} , l_{\textup{key}}')$ an estimate of $ { \textup{Key}_b} $
	\ENDFOR
  \end{algorithmic}  
\end{algorithm}
\begin{rem} \label{rem1}
In Line 10 of Algorithm \ref{alg:p3}, note that the channel code \cite{Arikan09} requires a uniformly distributed message. While $\concat_{l\in \mathcal{L}}  \widetilde{A}^{1:K}_{b,l}[\mathcal{H}_{U|Y}]$ is not a sequence of uniformly distributed bits, $D_b$ is a sequence of uniformly distributed bits over $\llbracket 1, 2^{L|\mathcal{H}_{U|Y} |} \rrbracket$. 
\end{rem}

\textbf{High-level description of the initialization phase}: The initialization phase  is depicted in Figure~\ref{figinit} and consists in $B_0$ communication blocks. All the communication blocks are independent, and each Block $b\in\mathcal{B}_0$ will lead to the exchange of a key Key$_b$ between the legitimate users, which will be shown to be secret from the eavesdropper. Additionally, $B_0$ is chosen such that the length of the keys $(\textup{Key}_b)_{b \in \mathcal{B}_0}$ is sufficiently large to be used in  the main coding scheme, which is described in the next section and allows the exchange of a secret message between the legitimate users. It will also be shown that the initialization phase considered jointly with the main coding scheme has a negligible effect on the overall communication rate and the overall information leakage to the~eavesdropper.
\begin{figure} 
\centering
 \includegraphics[width=8 cm]{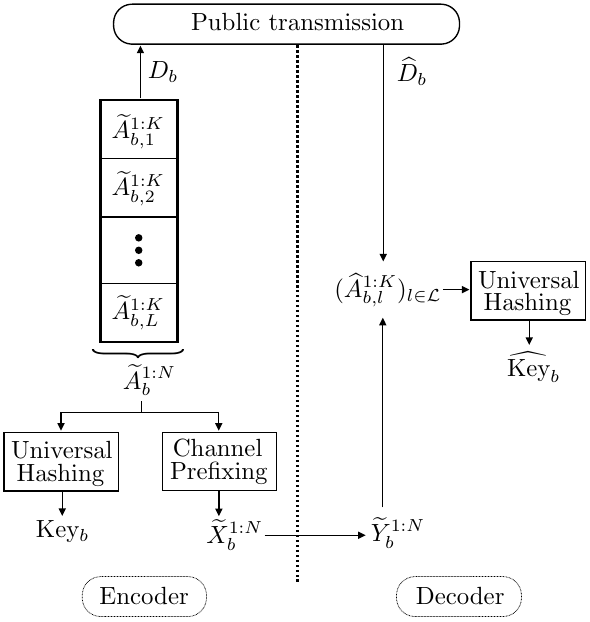}
  \caption{Initialization phase for Block $b \in \mathcal{B}_0$. The encoder creates $\widetilde{A}^{1:N}_b$, which is made of $L$ sub-blocks $(\widetilde{A}^{1:K}_{b,l})_{l \in \mathcal{L}}$. Then, from $\widetilde{A}^{1:N}_b$, the encoder creates Key$_b$ (by universal hashing), and the codeword $\widetilde{X}^{1:N}_b$  (via channel prefixing), which is sent over the channel and whose noisy observation by the legitimate receiver is $\widetilde{Y}^{1:N}_b$. The decoder creates an estimate of $\widetilde{A}^{1:N}_b$ from $\widetilde{Y}^{1:N}_b$ and an estimate of ${D}_b$, which is sent to him via a channel code, as described in Line 10 of Algorithm \ref{alg:p3}. Finally,  the decoder creates $\widehat{\textup{Key}}_b$, an estimate of Key$_b$, from his estimate of $\widetilde{A}^{1:N}_b$.}
  \label{figinit}
\end{figure}

 Consider Block $b\in\mathcal{B}_0$ in Algorithm \ref{alg:p3}. As described in Lines 3-4, the encoder creates $(\widetilde{U}_{b,l}^{1:K})_{l \in \mathcal{L}}$  such that the distribution of $(\widetilde{U}_{b,l}^{1:K})_{l \in \mathcal{L}}$ is close to the product distribution $q_{U^{1:N}}$. 
Then, as described in Lines 5-6, channel prefixing is performed to create from $(\widetilde{U}_{b,l}^{1:K})_{l \in \mathcal{L}}$ the codewords $( \widetilde{X}_{b,l}^{1:K})_{l \in \mathcal{L}}$ that are sent over the channel, and whose noisy observations at the legitimate receiver are  $(\widetilde{Y}_{b,l}^{1:K})_{l \in \mathcal{L}}$. Additionally, the key $\textup{Key}_b$ is formed from $(\widetilde{U}_{b,l}^{1:K})_{l \in \mathcal{L}}$ through universal hashing, as described in Line 12. As shown later, secrecy of the key is ensured via an appropriate choice of the hash function output length. As described in Line 10, the encoder sends $D_b$ to the legitimate receiver using a regular channel code (without security guarantees) - see also Remark~\ref{rem1}.

 Finally, as described in Lines~2-7 of Algorithm \ref{alg:p4}, upon estimating  $D_b$, the legitimate receiver  forms an estimate of $(\widetilde{U}_{b,l}^{1:K})_{l \in \mathcal{L}}$ from $(\widetilde{Y}_{b,l}^{1:K})_{l \in \mathcal{L}}$.   Then, as described in Line 9 of Algorithm \ref{alg:p4}, from the estimate of $(\widetilde{U}_{b,l}^{1:K})_{l \in \mathcal{L}}$, the legitimate receiver creates an estimate of Key$_b$.

\subsection{Coding scheme: Phase II - Secure communication} \label{secCS}
The encoding scheme operates over $B$ blocks of length $N \triangleq KL$, where $L,K \in \mathbb{N}$ and $K$ is a power of two.  We define $\mathcal{B} \triangleq \llbracket 1 , B \rrbracket$ and $\mathcal{L} \triangleq \llbracket 1 , L \rrbracket$. Encoding at the transmitter and decoding at the receiver are described in Algorithms \ref{alg:p1} and~\ref{alg:p}, respectively. 
In each block $b \in \mathcal{B}$, the transmitter encodes, as described in Algorithm~\ref{alg:p1}, a message $M_b$ uniformly distributed over $\llbracket 1 , 2^{|M_b|} \rrbracket$ and represented by a binary sequence with length 
\begin{equation*} 
|M_b| \triangleq  \begin{cases} |M_1| \text{ if } b=1\\ |M_1| - L |\mathcal{V}_{U|Y}| \text{ otherwise } \end{cases}.
\end{equation*}
Algorithms \ref{alg:p1} and \ref{alg:p} depend on the parameter 
\begin{equation} \label{eqparamr}
r \triangleq   |M_1|,
\end{equation}
 which will be specified later. 
 
In each block $b \in \mathcal{B}$, as described in Algorithm~\ref{alg:p1}, the encoder uses the local randomness $R_b'$, a binary randomization sequence uniformly distributed over $\llbracket 1 , 2^{|R_b'|} \rrbracket$. The sequences  $R_{1:B}' \triangleq \left(R_b'\right)_{b \in \mathcal{B}}$ are mutually independent.
The length of the sequences $\left(R_b'\right)_{b \in \mathcal{B}}$ is defined for $b \in \mathcal{B}$ as $|R_b'| \triangleq L |\mathcal{V}_{U}| -r.$
In each block $b \in \mathcal{B}$, the encoder also uses, as described in Algorithm~\ref{alg:p1}, $R_b$, a binary randomization sequence with length $L|\mathcal{V}_U|$, uniformly distributed over $\mathcal{R} \triangleq \{0,1\}^{L|\mathcal{V}_U|} \backslash \{\mathbf{0}\}$. The sequences  $R_{1:B} \triangleq \left(R_b\right)_{b \in \mathcal{B}}$ are mutually independent. Moreover, it is assumed that $M_{1:B}$, $R_{1:B}$, and $R_{1:B}'$ are mutually independent.

\begin{rem} \label{rem1a}
In Algorithm \ref{alg:p1}, observe that $T^{1:|\mathcal{V}_U|L}_b$, $b \in \mathcal{B}$, is uniformly distributed over $ \{0,1\}^{|\mathcal{V}_U|L}$ because $(M_b\lVert M'_b\lVert R_b')$ is uniformly distributed over $\{0,1\}^{|\mathcal{V}_U|L}$ and independent of $R_b$. Hence, the $L$ random variables $(T^{1:|\mathcal{V}_U|}_{b,l})_{l \in \mathcal{L}}$ are uniformly distributed over $\{0,1\}^{|\mathcal{V}_U|}$ and independent. When the elements of $\mathbf{s}_b$ are all equal to $s$, then, by construction, the conditional probability $\widetilde{p}_{{Z}^{1:K}_{b,l}(s)|T^{1:|\mathcal{V}_U|}_{b,l}}$ is the same for all $l \in \mathcal{L}$, and the $L$ pairs  $( (T^{1:|\mathcal{V}_U|}_{b,l}, \widetilde{Z}^{1:K}_{b,l}(s)) )_{l \in \mathcal{L}}$ are independently and identically distributed according to the joint distribution $\widetilde{p}_{{T}^{1:|\mathcal{V}_U|}_{b,1} {Z}^{1:K}_{b,1}(s)}$.	
\end{rem}
\begin{rem} \label{rem2}
In Algorithm \ref{alg:p1}, consider $\widetilde{X}^{1:K}_{b,l}[\mathcal{A}_{b,l}]$, $b \in \mathcal{B}$, $l\in\mathcal{L}$, where for all $l\in\mathcal{L}$, $\mathcal{A}_{b,l} \subset \llbracket 1,K \rrbracket$ and $\sum_{l \in \mathcal{L}} |\mathcal{A}_{b,l}| = \alpha N$ such that $\widetilde{X}^{1:N}_{b}[\mathcal{A}_b] \triangleq \lVert_{l \in \mathcal{L}} \widetilde{X}^{1:K}_{b,l}[\mathcal{A}_{b,l}]$ corresponds to the $\alpha N$ symbols of the codewords emitted at the transmitter that the eavesdropper has chosen to have access to. Similar to Remark~\ref{rem1a}, the $L$ triplets $( (T^{1:|\mathcal{V}_U|}_{b,l}, \widetilde{X}^{1:K}_{b,l}[\mathcal{A}_{b,l}], \widetilde{Z}^{1:K}_{b,l}(\mathbf{s}_{b,l})) )_{l \in \mathcal{L}}$ are independent, however, they are not necessarily identically distributed because the components of $\mathbf{s}_{b,l}$ are arbitrary, and because the sets $(\mathcal{A}_{b,l})_{l \in \mathcal{L}}$ are arbitrarily chosen by the eavesdropper. 
\end{rem}

\begin{algorithm}
  \caption{Encoding}
  \label{alg:p1}
  \begin{algorithmic}[1]
\REQUIRE Randomization sequences $(R_b)_{b \in \mathcal{B}}$, $(R_b')_{b \in \mathcal{B}}$, and messages $(M_b)_{b \in \mathcal{B}}$ \vspace{0.1em}
              \STATE Define $M'_1 \triangleq \emptyset$ \vspace{0.15em}
              \FOR{Block $b \in \mathcal{B}$}\vspace{0.1em}
    \STATE Define $M'_b \triangleq \displaystyle\concat_{l\in \mathcal{L}}f^{\textbf{SC}}_1 \left( \widetilde{A}^{1:K}_{b-1,l} \right)$ if $b\neq1$   \vspace{0.1em} 
    \STATE      Define    $T^{1:|\mathcal{V}_U|L}_b \triangleq f^{\textbf{UH}}_{R_b}( M_b, M'_b \lVert R_b')$ \vspace{0.15em}
    \FOR{Sub-block $l \in \mathcal{L}$} \vspace{0.1em}
         \STATE Consider  the notation  $T^{1:|\mathcal{V}_U|}_{b,l} \triangleq T^{(l-1)|\mathcal{V}_U| + 1: l |\mathcal{V}_U| }_b$ \vspace{0.1em}
        \STATE Define  $\widetilde{A}_{b,l}^{1:K} \triangleq  f^{\textbf{DA}}\left( T^{1:|\mathcal{V}_U|}_{b,l}\right)$  \vspace{0.1em}
         \STATE Define $\widetilde{U}_{b,l}^{1:K} \triangleq \widetilde{A}_{b,l}^{1:K} G_K $ \vspace{0.1em}
  \STATE Define $\widetilde{V}_{b,l}^{1:K} \triangleq f^{\textbf{CP}} \left( \widetilde{U}_{b,l}^{1:K} \right)$ \vspace{0.1em}
  	\STATE Define $\widetilde{X}_{b,l}^{1:K} \triangleq \widetilde{V}_{b,l}^{1:K} G_K$ \vspace{0.1em}
	\ENDFOR
	\STATE Transmit $\widetilde{X}^{1:N}_{b}  \!\! \triangleq \!\! \displaystyle\concat_{l\in \mathcal{L}} (\widetilde{X}_{b,l}^{1:K})$ over the channel \vspace{0.1em}
	\STATE Let $\widetilde{Y}^{1:N}_{b} \triangleq \displaystyle\concat_{l\in \mathcal{L}} \widetilde{Y}_{b,l}^{1:K}$, $\widetilde{Z}^{1:N}_{b}(\mathbf{s}_b) \triangleq \displaystyle\concat_{l\in \mathcal{L}} \widetilde{Z}_{b,l}^{1:K}(\mathbf{s}_{b,l})$ denote the channel outputs
	\ENDFOR	
	\STATE Using a pre-shared secret, apply a one-time pad to $(f^{\textbf{SC}}_2(\widetilde{A}^{1:K}_{b,l}))_{l\in \mathcal{L}, b\in \mathcal{B}}$, and $(f^{\textbf{SC}}_1(\widetilde{A}^{1:K}_{B,l}))_{l\in \mathcal{L}}$, then transmit the result with a channel code \cite{Arikan09}. 
  \end{algorithmic}  
\end{algorithm}
\begin{algorithm}
  \caption{Decoding}
  \label{alg:p}
   \begin{algorithmic}[1]
\REQUIRE $(R_b)_{b \in \mathcal{B}}$, $(f^{\textbf{SC}}_2(\widetilde{A}^{1:K}_{b,l}))_{l\in \mathcal{L}, b\in \mathcal{B}}$,  $(f^{\textbf{SC}}_1(\widetilde{A}^{1:K}_{B,l}))_{l\in \mathcal{L}}$\vspace{0.1em}
\STATE Define $\widehat{A}^{1:K}_{B,l} [\mathcal{V}_{U|Y}]\triangleq f^{\textbf{SC}}_1(\widetilde{A}^{1:K}_{B,l})$ for any $l \in \mathcal{L}$ 
              \FOR{Block $b \in \mathcal{B}$ from $b=B$ to $b=1$}
    \FOR{$l \in \mathcal{L}$}
        \STATE Form an estimate of $\widetilde{A}^{1:K}_{b,l}$ as $$\widehat{A}^{1:K}_{b,l} \triangleq g^{\textup{SC}}(\widehat{A}^{1:K}_{b,l}[\mathcal{V}_{U|Y}],f^{\textbf{SC}}_2(\widetilde{A}^{1:K}_{b,l}), \widetilde{Y}^{1:K}_{b,l})$$
	\ENDFOR
	\STATE From Line 7 in Algorithm~\ref{alg:p1}, determine an estimate of $T^{1:|\mathcal{V}_U|L}_b$ as $$\widehat{T}^{1:|\mathcal{V}_U|L}_b \triangleq \displaystyle\concat_{l\in \mathcal{L}} \widehat{A}^{1:K}_{b,l}[\mathcal{V}_U]$$ \vspace{-1em}
	\STATE  From Line 4 in Algorithm~\ref{alg:p1}, form   an estimate of $({M}_b \Vert {M'_b} \lVert{R'_b})$ as
$$
(\widehat{M}_b \Vert \widehat{M'_b} \lVert\widehat{R'_b}) \triangleq R_b \odot \widehat{T}^{1:|\mathcal{V}_U|L}_b 
$$ \vspace{-1em}
		\STATE From Line 3 in Algorithm~\ref{alg:p1} and $ \widehat{M'_b}$, form $\left(\widehat{A}^{1:K}_{b-1,l}[\mathcal{V}_{U|Y}]\right)_{l\in\mathcal{L}}$  an estimate of $ \left( f^{\textbf{SC}}_1 \left( \widetilde{A}^{1:K}_{b-1,l} \right) \right)_{l\in \mathcal{L}}$  	\ENDFOR
  \end{algorithmic}  
\end{algorithm}
\begin{figure}[h]
\centering
  \includegraphics[width=8.5 cm]{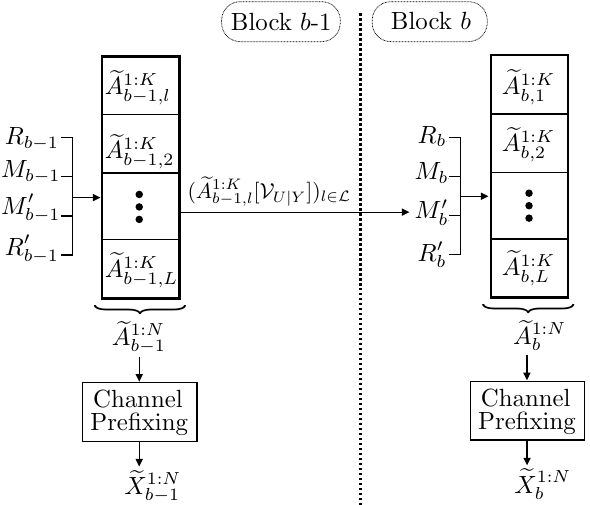}
  \caption{In Block $b \in \mathcal{B}$, $\widetilde{A}^{1:N}_b$ is made of $L$ sub-blocks $(\widetilde{A}^{1:K}_{b,l})_{l \in \mathcal{L}}$, which are constructed from $R_b$ (randomness for universal hashing), $M_b$ (secret message), $M_b'$ (a part of $\widetilde{A}^{1:N}_{b-1}$ from Block $b-1$), and $R_b'$ (local randomness). The construction of $M_b'$ in Line 3 of Algorithm \ref{alg:p1} creates a dependency between Block $b \in \llbracket 2, B \rrbracket$ and Block $b-1$. In Block $b \in \mathcal{B}$, the codeword $\widetilde{X}^{1:N}_b$, to be sent over the channel, is then obtain via channel prefixing from~$\widetilde{A}^{1:N}_b$.}
  \label{fig}
\end{figure}

\begin{figure}
\centering
  \includegraphics[width=8.5 cm]{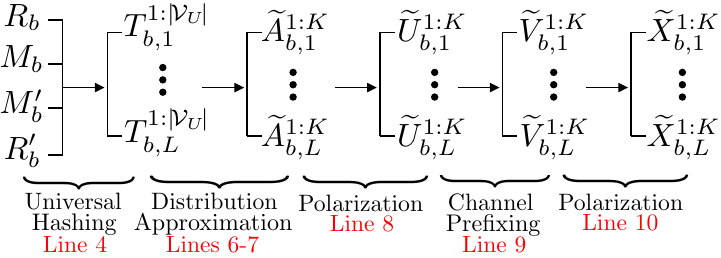}
  \caption{Summary of the steps in Algorithm \ref{alg:p1} to obtain the codeword $\widetilde{X}^{1:N}_b \triangleq \concat_{l\in \mathcal{L}} (\widetilde{X}_{b,l}^{1:K})$ from $R_b$ (randomness for universal hashing), $M_b$ (secret message), $M_b'$ (a part of Block $b-1$), and $R_b'$ (local randomness). Line 7 (distribution approximation) describes the creation of the $L$ sub-blocks $(\widetilde{A}_{b,l}^{1:K})_{l\in\mathcal{L}}$ from $T^{1:|\mathcal{V}_U|L}_b$,  and ensures that their distribution is close to the product distribution $q_{A^{1:N}}$, which will be a crucial fact to analyze the information leakage of the coding~scheme.}
  \label{fig2}
\end{figure}
\textbf{High-level description of the coding scheme}:
We depict in Figure \ref{fig} how codewords are created at the transmitter. Note that there exists an interdependence between two consecutive encoding blocks since $M'_{b}$, $b\in \llbracket 2,B \rrbracket$, used in Block $b$, is obtained from Block $b-1$, as described in Line~3 of~Algorithm~\ref{alg:p1}. 

Consider Block $b\in \mathcal{B}$ of Algorithm~\ref{alg:p1}. The encoder starts by creating $T^{1:|\mathcal{V}_U|L}_b$ via universal hashing applied on  the sequence created by $M'_b$, the secret message $M_b$, and the local randomness $R_b'$, as described in Line 4. Next,  $T^{1:|\mathcal{V}_U|L}_b$ is broken down into $L$ pieces with same length in Line 6, from which the encoder creates $L$ sub-blocks $(\widetilde{A}_{b,l}^{1:K})_{l\in\mathcal{L}}$, as described   in Line 7. Then, from $(\widetilde{A}_{b,l}^{1:K})_{l\in\mathcal{L}}$, the codewords $(\widetilde{X}_{b,l}^{1:K})_{l\in\mathcal{L}}$, are obtained via channel prefixing, as described   in Lines~8-10. The codewords $(\widetilde{X}_{b,l}^{1:K})_{l\in\mathcal{L}}$ are sent over the channel and their noisy observations at the legitimate receiver are denoted by $( \widetilde{Y}_{b,l}^{1:K})_{l\in\mathcal{L}}$.
Note that the $L$ sub-blocks $(\widetilde{A}_{b,l}^{1:K})_{l\in\mathcal{L}}$ are created such that their distribution is close to the product distribution $q_{A^{1:N}}$. A crucial point to ensure this property comes from the uniformity of $M'_b$, i.e., the uniformity of  $\widetilde{A}^{1:K}_{b-1,l}[\mathcal{V}_{U|Y}]$, ${l\in \mathcal{L}}$, which follows from Line~7 and the property $\mathcal{V}_{U|Y} \subset \mathcal{V}_U$.    Finally, as described in Line~15, using a pre-shared secret (obtained from the initialization phase in Section \ref{sec:initk}), the encoder applies a one-time pad to $(\widetilde{A}^{1:K}_{b,l}[\mathcal{H}_{U|Y} \backslash \mathcal{V}_{U|Y}])_{l\in \mathcal{L}, b\in \mathcal{B}}$, and $(\widetilde{A}^{1:K}_{B,l}[\mathcal{V}_{U|Y}])_{l\in \mathcal{L}}$, and sends the result to the legitimate receiver with a channel code \cite{Arikan09}. This step is done for technical reasons: $(\widetilde{A}^{1:K}_{b,l}[\mathcal{H}_{U|Y} \backslash \mathcal{V}_{U|Y}])_{l\in \mathcal{L}, b\in \mathcal{B}}$ are not uniformly distributed and could not be included in the definition of $M'_b$, $b\in \mathcal{B}$, as our analysis relies on the uniformity of $M'_b$, $b\in \mathcal{B}$. However, as shown later, the length of $(\widetilde{A}^{1:K}_{b,l}[\mathcal{H}_{U|Y} \backslash \mathcal{V}_{U|Y}])_{l\in \mathcal{L}, b\in \mathcal{B}}$ is negligible compared to $NB$ such that the overall communication rate is not affected. It will also be shown that this has a negligible effect on the overall information leakage to the~eavesdropper.

In a given block  $b\in \mathcal{B}$, we depict in Figure \ref{fig2} a summary of the different phases in Algorithm~\ref{alg:p1} through which the encoder output  is obtained from the local randomness $R_b'$, the secret message $M_b$, the randomness $R_b$ used for universal hashing, and $M'_b$. Note that $R_b$ needs to be shared between the legitimate users but does not need to be secret from the eavesdropper, and can be recycled over several blocks so that the exchange of necessary randomness for universal hashing between the legitimate users does not affect the overall communication rate.

At the decoder, the legitimate receiver first estimates $(\widetilde{A}_{B,l}^{1:K})_{l\in\mathcal{L}}$ from  $(\widetilde{A}^{1:K}_{B,l}[\mathcal{H}_{U|Y}])_{l\in \mathcal{L}}$ and $( \widetilde{Y}_{B,l}^{1:K})_{l\in\mathcal{L}}$, as described in Lines 2-4 of Algorithm \ref{alg:p} for Block $B$. Then, from this estimate of $(\widetilde{A}_{B,l}^{1:K})_{l\in\mathcal{L}}$,  the legitimate receiver forms an estimate of $M_B$ and $M_{B'}$, as described in Lines 6-7 of Algorithm \ref{alg:p} for Block $B$. Next, to estimate the message $M_{B-1}$ and $M_{B-1}'$, the legitimate receiver uses the estimate of $M_{B'}$ along with $(\widetilde{A}^{1:K}_{B-1,l}[\mathcal{H}_{U|Y} \backslash \mathcal{V}_{U|Y}])_{l\in \mathcal{L}}$, and $( \widetilde{Y}_{B-1,l}^{1:K})_{l\in\mathcal{L}}$, as described in Lines 2-7 of Algorithm~\ref{alg:p} for Block $B-1$. 
 Hence, the legitimate receiver can estimate all the messages $(M_b)_{b\in\mathcal{B}}$ starting from the last block and iterating through the previous blocks via the loop in Line 2 of Algorithm \ref{alg:p}.

 Note that in the analysis of the coding scheme secrecy rate, one needs to account for $(i)$ the one-time pad in Line~15 of Algorithm~\ref{alg:p1}, $(ii)$ the transmission  of the randomness $(R_b)_{1:B}$ that is used in Algorithms \ref{alg:p1} and \ref{alg:p}, and $(iii)$ the initialization phase (Algorithms \ref{alg:p3} and \ref{alg:p4}). We will show that $(i)$, $(ii)$, and $(iii)$ are done with a negligible impact on the secrecy rate in Sections \ref{secneg1}, \ref{secneg2}, and \ref{sec:impactk}, respectively.

\section{Proof of Theorem \ref{thmain1} with a pre-shared  key} \label{sec:proof1a}
In this section, we prove Theorem \ref{thmain1} when the legitimate users have access to a pre-shared secret key whose rate is negligible. Hence, we ignore in this section the initialization phase, i.e., Algorithms~\ref{alg:p3},~\ref{alg:p4}. We also assume in this section that  all the components of $\mathbf{s}_b$, $b \in \mathcal{B}$, are identical and equal to $s$. To simplify notation, we write $s$ instead of $\mathbf{s}_b$, $b\in \mathcal{B}$.   
\subsection{Characterization of the distribution induced by the encoder}
Let $\widetilde{p}_{U_{b}^{1:N}X_{b}^{1:N}Y_{b}^{1:N}Z_{b}^{1:N}(s)}$ denote the distribution induced by the encoding scheme described in Algorithm~\ref{alg:p1}. Lemma \ref{lemdist} gives an approximation of $\widetilde{p}_{U_{b}^{1:N}X_{b}^{1:N}Y_{b}^{1:N}Z_{b}^{1:N}(s)}$ in  terms of the distribution $q_{UXYZ(s)}$ defined in Section \ref{sec:notation}. This result will be useful in our subsequent analysis.
\begin{lem} \label{lemdist}
For $b\in \mathcal{B}$, we have
\begin{align*}
 \mathbb{D}( q_{U^{1:N}X^{1:N}Y^{1:N}Z^{1:N}(s)} \lVert \widetilde{p}_{U_{b}^{1:N}X_{b}^{1:N}Y_{b}^{1:N}Z_{b}^{1:N}(s)} )
 \leq 2LK \delta_K,
\end{align*}
where $ q_{U^{1:N}X^{1:N}Y^{1:N}Z^{1:N}(s)} \triangleq \prod_{i=1}^Nq_{UXYZ(s)}$.
\end{lem}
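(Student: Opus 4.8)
The plan is to bound the divergence by a telescoping/chain-rule argument that tracks how $\widetilde p$ deviates from $q$ at each stage of the encoder in Algorithm~\ref{alg:p1}: first the construction of $\widetilde A_{b,l}^{1:K}$ from the uniform bits on $\mathcal V_U$ together with the conditional law on $\mathcal V_U^c$ (Equations~\eqref{eqenc1}--\eqref{eqeqa}), then the deterministic transform $\widetilde U_{b,l}^{1:K}=\widetilde A_{b,l}^{1:K}G_K$, then channel prefixing \eqref{eqeqvu}, then $\widetilde X_{b,l}^{1:K}=\widetilde V_{b,l}^{1:K}G_K$, and finally passing through the memoryless channel $p_{YZ(s)|X}$. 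Since $G_K$ is a bijection and the channel is the same on both sides, the divergence between the full distributions equals the divergence at the level of the $A$ and $V$ coordinates, and the channel step contributes nothing (the conditional $p_{Y^{1:N}Z^{1:N}(s)|X^{1:N}}$ is common to both $q$ and $\widetilde p$, so by the chain rule for KL divergence it drops out).

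First I would reduce to a single sub-block. Because the encoder produces the $L$ sub-blocks $(\widetilde A_{b,l}^{1:K})_{l\in\mathcal L}$ so that each is generated by the same conditional rule and the seeds $T^{1:|\mathcal V_U|}_{b,l}$ are i.i.d.\ uniform (Remark~\ref{rem1a}), the induced distribution $\widetilde p_{A_b^{1:N}}$ factorizes as $\prod_{l\in\mathcal L}\widetilde p_{A_{b,l}^{1:K}}$, matching the factorization $q_{A^{1:N}}=\prod_{l\in\mathcal L} q_{A^{1:K}}$; by additivity of divergence over independent blocks it suffices to show $\mathbb D(q_{A^{1:K}}\Vert \widetilde p_{A_{b,l}^{1:K}})\le 2K\delta_K/L$, and in fact the cleaner route is to prove the per-block bound $\mathbb D(q_{U^{1:N}X^{1:N}Y^{1:N}Z^{1:N}(s)}\Vert\widetilde p_{\ldots})\le 2LK\delta_K$ directly via two chain-rule expansions, one over the $K$ coordinates of $A_{b,l}$ summed over $l$, and one over the $K$ coordinates of $V_{b,l}$ summed over $l$.

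The core estimate is the standard polar-coding one: for the $A$-coordinates,
\begin{align*}
\mathbb D\bigl(q_{A^{1:K}}\,\big\Vert\,\widetilde p_{A_{b,l}^{1:K}}\bigr)
&=\sum_{j\in\mathcal V_U}\Bigl(\log 2 - H\bigl(q_{A^j|A^{1:j-1}}\bigr)\Bigr)
\;\le\;\sum_{j\in\mathcal V_U}\bigl(1-(1-\delta_K)\bigr)\;\le\;K\delta_K,
\end{align*}
where the first equality uses that on $\mathcal V_U^c$ the conditionals agree (so those terms vanish) and on $\mathcal V_U$ the $\widetilde p$-conditional is uniform, and the inequality uses the defining property $H(A^j|A^{1:j-1})>1-\delta_K$ for $j\in\mathcal V_U$ from \eqref{eqvu}; one must also invoke the fact (from \cite{Arikan10,Chou16}) that $\widetilde p_{A^j|A^{1:j-1}}\le q$-marginals are close enough that the entropy under $\widetilde p$ on $\mathcal V_U^c$ coincides with that under $q$, which holds because the construction \eqref{eqeqa} copies $q$ exactly there. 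The identical argument applied to the channel-prefixing step using \eqref{eqvxu} and \eqref{eqeqvu} gives another $K\delta_K$ per sub-block from the $\mathcal V_{X|U}$ coordinates. Summing over $l\in\mathcal L$ yields $LK\delta_K+LK\delta_K=2LK\delta_K$. Then invariance of divergence under the bijections $G_K$ and under appending the common channel $p_{Y^{1:N}Z^{1:N}(s)|X^{1:N}}$ (chain rule, the extra conditional-divergence term being zero) upgrades the bound on $\widetilde p_{A_b^{1:N}V_b^{1:N}}$ to the claimed bound on $\widetilde p_{U_b^{1:N}X_b^{1:N}Y_b^{1:N}Z_b^{1:N}(s)}$.

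\textbf{Main obstacle.} The one subtlety I expect to have to be careful about is that $\widetilde p_{A_{b,l}^{1:K}}$ is \emph{not} literally $q_{A^{1:K}}$ even on $\mathcal V_U^c$, because the conditioning in \eqref{eqeqa} is on the \emph{$\widetilde p$-generated} prefix $\widetilde a_{b,l}^{1:j-1}$, whose law differs from the $q$-law; so one cannot just say "the terms cancel." The standard fix (as in \cite{Chou16}) is to expand $\mathbb D(q\Vert\widetilde p)$ by the chain rule so that each term is $\mathbb E_{q_{A^{1:j-1}}}[\mathbb D(q_{A^j|A^{1:j-1}}\Vert \widetilde p_{A^j|A^{1:j-1}})]$, observe that for $j\in\mathcal V_U^c$ this is exactly zero because \eqref{eqeqa} defines $\widetilde p_{A^j|A^{1:j-1}}(\cdot|a^{1:j-1})=q_{A^j|A^{1:j-1}}(\cdot|a^{1:j-1})$ as a \emph{function of the prefix value}, so the two kernels are identically equal pointwise, and for $j\in\mathcal V_U$ bound it by $1-H(A^j|A^{1:j-1})<\delta_K$ using that $\widetilde p_{A^j|A^{1:j-1}}\equiv 1/2$ and that $\mathbb D(q(\cdot)\Vert\tfrac12)=\log 2-H(q(\cdot))$ with a further average over the prefix. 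This is routine once set up correctly, but it is the place where a careless argument goes wrong, and it is why the chain-rule-with-$q$-expectation bookkeeping (rather than a variational-distance or direct-computation approach) is the right tool.
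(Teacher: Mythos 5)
Your proposal follows essentially the same route as the paper's Appendix~\ref{App_lemdist}: chain rule for relative entropy with $q$-expectations over prefixes, exact cancellation on $\mathcal V_U^c$ and $\mathcal V_{X|U}^c$ because \eqref{eqeqa} and \eqref{eqeqvu} copy the $q$-kernels pointwise, a $\delta_K$-per-coordinate bound on $\mathcal V_U$ and $\mathcal V_{X|U}$ via uniformity of the injected bits, invariance of divergence under the bijection $G_K$, factorization over the $L$ independent sub-blocks (Remark~\ref{rem1a}), and cancellation of the common memoryless channel kernel. The self-correction in your final paragraph is exactly the right way to dispose of the prefix-dependence issue (the kernels agree identically on $\mathcal V_U^c$, so no approximation is needed), and the factor bookkeeping $LK\delta_K + LK\delta_K = 2LK\delta_K$ matches the paper.
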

\begin{proof}
See Appendix \ref{App_lemdist}.
\end{proof}

\subsection{Reliability} \label{sec:reliability}
We now show that the receiver is able to recover the original message with a vanishing error probability. Define $\widehat{M}_{1:B} \triangleq (\widehat{M}_b)_{b\in \mathcal{B}}$. Define for $b \in \mathcal{B}$, $ \widehat{A}^{1:N}_{b} \triangleq \concat_{l\in \mathcal{L}} \widehat{A}_{b,l}^{1:K}$, $ \widetilde{A}^{1:N}_{b} \triangleq \concat_{l\in \mathcal{L}} \widetilde{A}_{b,l}^{1:K}$, $ {A}^{1:N}_{b} \triangleq {A}^{1:LK}$, $\mathcal{E}_{b-1} \triangleq \{ \widehat{A}^{1:N}_{b} \neq  \widetilde{A}_{b}^{1:N} \}$, and $\mathcal{E}_{A_b} \triangleq \{ (\widetilde{Y}_{b}^{1:N},\widetilde{A}_{b}^{1:N}) \neq ({Y}_{b}^{1:N},{A}_{b}^{1:N}) \}$. For $b \in \mathcal{B}$, consider a coupling~\cite[Lemma~3.6]{Aldous83} between $\widetilde{p}_{{Y}_{b}^{1:N}{A}_{b}^{1:N}}$ and $q_{{Y}_{b}^{1:N}{A}_{b}^{1:N}}$ such that $\mathbb{P} [\mathcal{E}_{A_b}] = \mathbb{V}(\widetilde{p}_{{Y}_{b}^{1:N}{A}_{b}^{1:N}} ,q_{{Y}_{b}^{1:N}{A}_{b}^{1:N}})$. For $b \in \mathcal{B}$,  consider $({A}_{b}^{1:N},{Y}_{b}^{1:N},\widetilde{A}_{b}^{1:N},\widetilde{Y}_{b}^{1:N})$ distributed according to this coupling, then 
\begin{align}
&\mathbb{P}\left[\widehat{M}_{1:B} \neq M_{1:B} \right] \nonumber \\ \nonumber
& \leq \sum_{b\in\mathcal{B}} \mathbb{P}\left[\widehat{M}_b \neq M_b\right]  \displaybreak[0]\\ \nonumber
& \stackrel{(a)}{\leq} \sum_{b\in\mathcal{B}} \mathbb{P}\left[\widehat{T}^{1:|\mathcal{V}_U|L}_b \neq {T}^{1:|\mathcal{V}_U|L}_b\right]  \displaybreak[0]\\ \nonumber
&\stackrel{(b)}{\leq} \sum_{b\in\mathcal{B}} \mathbb{P}\left[ \widehat{A}^{1:N}_{b} \neq  \widetilde{A}^{1:N}_{b} \right] \displaybreak[0]\\ \nonumber
& \leq \sum_{b\in\mathcal{B}}  \left[  \mathbb{P}\left[ \widehat{A}^{1:N}_{b} \neq \widetilde{A}^{1:N}_{b}|\mathcal{E}_{A_b}^c \cap \mathcal{E}_{b}^c \right] + \mathbb{P} [\mathcal{E}_{A_b} \cup \mathcal{E}_{b}] \right] \nonumber \displaybreak[0] \\
& \stackrel{(c)}{\leq} \sum_{b\in\mathcal{B}}  \left[ \sum_{l\in\mathcal{L}} \mathbb{P}\left[ \widehat{A}^{1:K}_{b,l} \neq \widetilde{A}^{1:K}_{b,l}|\mathcal{E}_{A_b}^c \cap \mathcal{E}_{b}^c \right] + \mathbb{P} [\mathcal{E}_{A_b}] + \mathbb{P} [ \mathcal{E}_{b}] \right] \nonumber \\
& \stackrel{(d)}{\leq}  \sum_{b\in\mathcal{B}} \left[ KL \delta_K + \sqrt{2 \ln 2} \sqrt{2LK \delta_K}  + \mathbb{P}\left[ \widehat{A}^{1:N}_{b+1} \neq \widetilde{A}^{1:N}_{b+1} \right]  \right]\nonumber \\
& \stackrel{(e)}{\leq}  \sum_{b\in\mathcal{B}} \left[ ( KL \delta_K + \sqrt{2 \ln 2} \sqrt{2LK \delta_K}) (B-b +1)  \right] \nonumber \\
& = ( KL \delta_K +\sqrt{2 \ln 2} \sqrt{2LK \delta_K}) B(B+1)/2 , \label{eq:reliability}  
\end{align}
where $(a)$ holds by Line 7 in Algorithm \ref{alg:p}, $(b)$ holds by Line~6 in Algorithm \ref{alg:p}, $(c)$ holds by the union bound, $(d)$ holds because $\mathbb{P}\left[ \widehat{A}^{1:K}_{b,l} \neq \widetilde{A}^{1:K}_{b,l}|\mathcal{E}_{A_b}^c \cap \mathcal{E}_{b}^c \right] \leq K \delta_K$ by \eqref{lemscs} and because $\mathbb{P} [\mathcal{E}_{A_b}] =  \mathbb{V}(\widetilde{p}_{{Y}_{b}^{1:N}{A}_{b}^{1:N}} ,q_{{Y}_{b}^{1:N}{A}_{b}^{1:N}}) \leq \sqrt{2 \ln 2} \sqrt{2LK \delta_K}$ by Lemma \ref{lemdist} and Pinsker's inequality, $(e)$~holds by induction.

\subsection{Pre-shared key rate} \label{secneg1}
The coding scheme described in Algorithms \ref{alg:p1} and \ref{alg:p} involves a one-time pad to securely transmit $(f^{\textbf{SC}}_2(\widetilde{A}^{1:K}_{b,l}))_{l\in \mathcal{L}, b\in \mathcal{B}}$, and $(f^{\textbf{SC}}_1(\widetilde{A}^{1:K}_{B,l}))_{l\in \mathcal{L}}$, which requires a pre-shared key with length $l_{\textup{OTP}} \triangleq  LB|\mathcal{H}_{U|Y} \backslash \mathcal{V}_{U|Y}|  + L | \mathcal{V}_{U|Y}|$ and rate 
\begin{align*}
	\frac{l_{\textup{OTP}}}{NB} 
	& =  \frac {|\mathcal{H}_{U|Y}| -|\mathcal{V}_{U|Y}|}{K}   + \frac{| \mathcal{V}_{U|Y}|}{KB}  \displaybreak[0]\\
		& \leq  \frac {|\mathcal{H}_{U|Y}| -|\mathcal{V}_{U|Y}|}{K}   + \frac{1}{B} \displaybreak[0] \\
	& = \delta(K) + 1/B,
\end{align*}
where $\delta(K)$ is such that $\lim_{K \to \infty} \delta(K) =0$ since $\lim_{K \to \infty} |\mathcal{H}_{U|Y}| / K = H(U|Y)$ \cite{Arikan10}, and $\lim_{K \to \infty} |\mathcal{V}_{U|Y}| / K = H(U|Y)$ \cite{Honda13,Chou14rev}. 

\subsection{Blockwise Security Analysis} \label{sec:secindiv}

We prove in this section that security holds in each block $b \in \mathcal{B}$ individually. We use a series of lemmas to obtain this result and determine acceptable values for the parameter $r$  defined in \eqref{eqparamr}. For $(X,Z)$ distributed according to $p_{XZ}$, defined over the finite alphabet $\mathcal{X}\times \mathcal{Z}$, recall that the $\epsilon$-smooth min-entropy of $X$ given $Z$ is defined as \cite{renner2008security}
\begin{align*}
H_{\infty}^{\epsilon}(p_{XZ}|p_Z) \triangleq \max_{r_{XZ} \in \mathcal{B}^{\epsilon}(p_{XZ})} \min_{  z \in \textup{Supp}(p_{Z})} \min_{ x\in \mathcal{X}} \log   \frac{p_Z(z)}{r_{XZ}(x,z)} , 
\end{align*}
where $\textup{Supp}(p_Z) \triangleq \{ z \in \mathcal{Z}: p_Z(z) >0\}$ and $\mathcal{B}^{\epsilon}(p_{XZ}) \triangleq \{ (r_{XZ}:\mathcal{X} \times \mathcal{Z} \to [0,1]) : \mathbb{V} (p_{XZ},r_{XZ}) \leq \epsilon \}$.
We will also need the following version of the leftover hash lemma. 
\begin{lem}[\!\cite{renner2008security}]\label{lemamp}
Let $T$ and $Z$ be distributed according to $p_{TZ}$ over $\mathcal{T} \times \mathcal{Z}$. Consider $F : \mathcal{R} \times \{0,1\}^{k} \rightarrow \{0,1\}^{r}$, where the first input, denoted by $R$, is uniformly distributed over $\mathcal{R}$ to indicate that $F$ is chosen uniformly at random in a family  of two-universal hash functions.   Then, for any $\epsilon \in [0,1[$,  
\begin{align} \label{eq:lohl}
\mathbb{V} ( p_{F(R,T),R,Z}, p_{U_{\mathcal{K}}} p_{U_{\mathcal{R}}}p_Z)  
&\leq  2\epsilon + \sqrt{ 2^{ r - {H}_{\infty}^{\epsilon}\left( p_{TZ}| p_{Z }\right)} 
}, 
\end{align}
where $p_{U_{\mathcal{K}}}$ and $p_{U_{\mathcal{R}}}$ are the uniform distribution over $\{0,1\}^{r}$ and $\mathcal{R}$, respectively.
\end{lem}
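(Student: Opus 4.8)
The plan is to obtain \eqref{eq:lohl} by combining the \emph{non-smooth} leftover hash lemma with a standard smoothing argument. First I would isolate the following non-smooth statement: if $F$ is drawn uniformly from a two-universal family as above (so that its index $R$ is uniform on $\mathcal{R}$ and independent of $(T,Z)$), and $\sigma_{TZ}$ is \emph{any} nonnegative function on $\mathcal{T}\times\mathcal{Z}$ satisfying $\sigma_{TZ}(t,z)\le 2^{-\lambda}p_Z(z)$ for all $(t,z)$, then, writing $\sigma_Z$ for its $Z$-marginal and $S=F(R,T)$, the law $\sigma_{SRZ}$ induced by $\sigma_{TZ}$ obeys $\mathbb{V}(\sigma_{SRZ},\,p_{U_{\mathcal{K}}}p_{U_{\mathcal{R}}}\sigma_Z)\le\sqrt{2^{r-\lambda}}$. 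Phrasing it for sub-normalized $\sigma_{TZ}$ is exactly what makes it pluggable into the definition of $H_\infty^{\epsilon}$.

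To prove the non-smooth statement I would condition on $(R,Z)=(\rho,z)$ and, after normalizing the conditional $\sigma_{T\mid Z}$, write $\mathbb{V}(\sigma_{SRZ},p_{U_{\mathcal{K}}}p_{U_{\mathcal{R}}}\sigma_Z)=\sum_{\rho,z}p_R(\rho)\sigma_Z(z)\sum_s\bigl|\,\sigma_{S\mid R=\rho,Z=z}(s)-2^{-r}\,\bigr|$. Two applications of Cauchy--Schwarz --- bounding the inner $\ell_1$ norm on $\{0,1\}^r$ by $\sqrt{2^r}$ times the $\ell_2$ norm, then pulling the $(\rho,z)$-average inside the square root by concavity --- reduce the task to bounding $\sum_{\rho,z}p_R(\rho)\sigma_Z(z)\sum_s(\sigma_{S\mid R=\rho,Z=z}(s)-2^{-r})^2$, i.e.\ the $R$-averaged conditional collision probability of the hashed value, shifted by $2^{-r}$. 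Two-universality gives, for each $z$, $\mathbb{E}_R\bigl[\sum_s\sigma_{S\mid R,Z=z}(s)^2\bigr]\le 2^{-r}+\mathrm{P}_{\mathrm{coll}}(\sigma_{T\mid Z=z})$, and the domination hypothesis yields $\sigma_Z(z)\,\mathrm{P}_{\mathrm{coll}}(\sigma_{T\mid Z=z})=\sum_t\sigma_{TZ}(t,z)^2/\sigma_Z(z)\le 2^{-\lambda}p_Z(z)$; summing over $z$ gives $2^{-\lambda}$, hence $\mathbb{V}\le\sqrt{2^r\cdot 2^{-\lambda}}$ (with the paper's convention $\mathbb{V}=\sum|\cdot|$ the constants are exactly $1$).

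For the smoothing step I would take $r_{TZ}\in\mathcal{B}^{\epsilon}(p_{TZ})$ achieving the maximum in the definition of $H_\infty^{\epsilon}(p_{TZ}\mid p_Z)$, so that $r_{TZ}(t,z)\le 2^{-H_\infty^{\epsilon}(p_{TZ}\mid p_Z)}p_Z(z)$ for all $(t,z)$ and $\mathbb{V}(p_{TZ},r_{TZ})\le\epsilon$. Since $R$ is independent of $(T,Z)$, replacing $p_{TZ}$ by $r_{TZ}$ perturbs the joint law of $(R,T,Z)$ by at most $\epsilon$ in $\mathbb{V}$, and pushing this through the deterministic map $(\rho,t,z)\mapsto(F(\rho,t),\rho,z)$ does not increase $\mathbb{V}$; the non-smooth statement applied to $\sigma_{TZ}=r_{TZ}$ with $\lambda=H_\infty^{\epsilon}(p_{TZ}\mid p_Z)$ bounds the smoothed law's distance to $p_{U_{\mathcal{K}}}p_{U_{\mathcal{R}}}r_Z$ by $\sqrt{2^{\,r-H_\infty^{\epsilon}(p_{TZ}\mid p_Z)}}$; and replacing the sub-normalized marginal $r_Z$ by $p_Z$ in the reference costs a further $\mathbb{V}(r_Z,p_Z)\le\epsilon$. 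A triangle inequality assembles these into $2\epsilon+\sqrt{2^{\,r-H_\infty^{\epsilon}(p_{TZ}\mid p_Z)}}$, as claimed. (Alternatively one may simply cite \cite{renner2008security}.) I expect the only delicate point to be the bookkeeping in this last step: checking that the non-smooth lemma still applies verbatim to the sub-normalized optimizer $r_{TZ}$ --- it does, since its proof uses only the pointwise domination $r_{TZ}\le 2^{-\lambda}p_Z$, not normalization --- and attributing the two copies of $\epsilon$, one to the perturbation $p_{TZ}\to r_{TZ}$ and one to the mismatch $r_Z$ versus $p_Z$ in the reference distribution.
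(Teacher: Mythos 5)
The paper does not prove this lemma; it is stated by direct citation to \cite{renner2008security}, so there is no in-paper proof to compare against. Your sketch correctly reconstructs the standard argument from that reference: the non-smooth leftover hash lemma via the collision-probability bound from two-universality and Cauchy--Schwarz, followed by the smoothing step that swaps $p_{TZ}$ for the $\mathcal{B}^{\epsilon}$-optimizer $r_{TZ}$ and pays two $\epsilon$-costs via the triangle inequality (one for the perturbation $p_{TZ}\to r_{TZ}$ pushed through the map $(\rho,t,z)\mapsto(F(\rho,t),\rho,z)$, one for the mismatch $r_Z$ versus $p_Z$ in the reference product). Your bookkeeping is consistent with the paper's conventions, in particular $\mathbb{V}=\sum|\cdot|$ (so the constant in front of the square root is $1$) and the fact that the paper's $H^{\epsilon}_{\infty}(p_{TZ}|p_Z)$ conditions on $p_Z$ rather than on the smoothing candidate's own marginal, which is exactly the pointwise domination $r_{TZ}(t,z)\le 2^{-H^{\epsilon}_{\infty}}p_Z(z)$ you feed into the non-smooth step. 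One small point worth making explicit if you write this out: the minimum in the definition ranges only over $z\in\mathrm{Supp}(p_Z)$, so you should observe that the optimizer $r_{TZ}$ can be taken, without loss of generality, to vanish off $\mathrm{Supp}(p_Z)$ (truncating there only decreases $\mathbb{V}(p_{TZ},r_{TZ})$ and leaves the minimized log-ratio unchanged), so the domination holds for all $(t,z)$ with $\sigma_{TZ}(t,z)>0$ and the collision calculation goes through.
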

We now would like to use Lemma~\ref{lemamp} to make $(M_b \lVert M'_b)$ almost independent from the eavesdropper channel observations. However, in the encoding scheme described in Algorithm \ref{alg:p1}, $(M_b \lVert M'_b)$ is not defined as the output of a two-universal hash function as required in Lemma \ref{lemamp}. To overcome this challenge, we show in the following lemma  that the distribution $\widetilde{p}$ induced by the encoder in Algorithm \ref{alg:p1} also describes a process for which $(M_b \lVert M'_b)$ is  defined as $(M_b \lVert M'_b) \triangleq g^{\textbf{UH}}_{R_b}(T_b^{1:|\mathcal{V}_U|L} ,r )$ where $r$ is defined in~\eqref{eqparamr}. For convenience, we write in the following $F(R_b,T_b^{1:|\mathcal{V}_U|L}) \triangleq g^{\textbf{UH}}_{R_b}(T_b^{1:|\mathcal{V}_U|L} ,r )$.

\begin{lem} \label{lemint}
Fix $b\in \mathcal{B}$. To simplify notation, we write $T_b$ instead of $T_b^{1:|\mathcal{V}_U|L}$, $\widetilde{Z}_b(s)$ instead of $\widetilde{Z}_b^{1:N}(s)$, $\widetilde{X}_b$ instead of $\widetilde{X}_b^{1:N}$, and $Z_b(s)$ instead of $Z_b^{1:N}(s)$. We also define $\bar{M}_b \triangleq (M_b \lVert M'_b)$ such that $T_b \triangleq R_b^{-1} \odot (\bar{M}_b \lVert R_b')$. Next, define
\begin{align}
\widetilde{q}_{\bar{M}_b T_b  X_b  Z_b(s)    R_b } \triangleq \widetilde{p}_{ X_b Z_b(s)   |T_b  } \widetilde{q}_{T_b  } \widetilde{q}_{R_b}  \widetilde{q}_{\bar{M}_b|T_b   R_b},
\end{align}
with $\widetilde{q}_{T_b  }$ the uniform distribution over $\{0,1\}^{|\mathcal{V}_U|L}$, $ \widetilde{q}_{R_b}$ the uniform distribution over $\mathcal{R}$,  and $\forall \bar{m}_b, \forall t_b  , \forall r_{b}$, $\widetilde{q}_{\bar{M}_b|T_b   R_b} ( \bar{m}_b|t_b,  r_{b}) \triangleq  \mathds{1} \{ \bar{m}_b = F(r_b,t_b  ) \}$. Then, we~have  $$\widetilde{p}_{\bar{M}_b T_b X_b Z_b(s)    R_b} = \widetilde{q}_{\bar{M}_b T_b   X_b Z_b(s)     R_b}.$$
\end{lem}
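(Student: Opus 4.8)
The plan is to show that the two joint distributions $\widetilde{p}_{\bar{M}_b T_b X_b Z_b(s) R_b}$ and $\widetilde{q}_{\bar{M}_b T_b X_b Z_b(s) R_b}$ factor identically, by working marginal-by-marginal and invoking the structure of Algorithm~\ref{alg:p1}. First I would identify the chain of dependencies that the encoder actually induces: the randomizer $R_b$ is drawn independently and uniformly over $\mathcal{R}$; the pair $(\bar{M}_b \lVert R_b')$ is uniform over $\{0,1\}^{|\mathcal{V}_U|L}$ and independent of $R_b$ (this is exactly Remark~\ref{rem1a}, using $\mathcal{V}_{U|Y}\subset\mathcal{V}_U$ for the uniformity of $M'_b$); then $T_b \triangleq R_b^{-1}\odot(\bar{M}_b\lVert R_b')$ is a deterministic bijective function of $(\bar{M}_b\lVert R_b')$ for each fixed $R_b$; and finally $(X_b, Z_b(s))$ is generated from $T_b$ alone through Lines 6--12 of Algorithm~\ref{alg:p1} together with the channel, so that $\widetilde{p}_{X_b Z_b(s)\mid T_b \bar{M}_b R_b} = \widetilde{p}_{X_b Z_b(s)\mid T_b}$.

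The key computation is then to check that $\widetilde{p}$ already has the claimed factorization. Since $(\bar M_b\lVert R_b')$ is uniform and independent of $R_b$, and since for each value $r_b$ of $R_b$ the map $(\bar m_b\lVert r_b')\mapsto r_b^{-1}\odot(\bar m_b\lVert r_b')$ is a bijection of $\{0,1\}^{|\mathcal{V}_U|L}$, the induced $T_b$ is uniform over $\{0,1\}^{|\mathcal{V}_U|L}$ and independent of $R_b$; this gives $\widetilde{p}_{T_b R_b} = \widetilde{q}_{T_b}\,\widetilde{q}_{R_b}$. Next, because $F(r_b,t_b) = (r_b\odot t_b)|_{1:r}$ and $\bar m_b$ consists of exactly the leftmost $r$ bits of $r_b\odot t_b$ (note $|\bar M_b| = |M_b|+|M'_b| = r$ by the definitions of $|M_b|$, $|M'_b|$ and $r$ in~\eqref{eqparamr}), one has deterministically $\bar M_b = F(R_b,T_b)$ under $\widetilde{p}$; hence $\widetilde{p}_{\bar M_b\mid T_b R_b}(\bar m_b\mid t_b,r_b) = \mathds{1}\{\bar m_b = F(r_b,t_b)\} = \widetilde{q}_{\bar M_b\mid T_b R_b}(\bar m_b\mid t_b,r_b)$. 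Combining these with $\widetilde{p}_{X_b Z_b(s)\mid T_b \bar M_b R_b} = \widetilde{p}_{X_b Z_b(s)\mid T_b}$ yields
\[
\widetilde{p}_{\bar M_b T_b X_b Z_b(s) R_b}
= \widetilde{p}_{X_b Z_b(s)\mid T_b}\,\widetilde{q}_{T_b}\,\widetilde{q}_{R_b}\,\widetilde{q}_{\bar M_b\mid T_b R_b}
= \widetilde{q}_{\bar M_b T_b X_b Z_b(s) R_b},
\]
which is the assertion.

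The main obstacle, and the step I would be most careful about, is verifying that $\bar M_b$ really is recoverable as $F(R_b,T_b)$ under $\widetilde p$, i.e. that inverting $T_b\triangleq R_b^{-1}\odot(\bar M_b\lVert R_b')$ to $R_b\odot T_b = (\bar M_b\lVert R_b')$ holds as an identity in $\mathrm{GF}(2^{|\mathcal V_U|L})$ and that truncation to the leftmost $r$ bits exactly extracts $\bar M_b$ — this requires bookkeeping the bit-ordering convention used for the GF multiplication and the concatenation $(\bar M_b\lVert R_b')$, and confirming $|\bar M_b| = r$, $|R_b'| = L|\mathcal{V}_U| - r$. The other subtlety is the independence claim $\widetilde p_{T_bR_b} = \widetilde q_{T_b}\widetilde q_{R_b}$: it hinges on $(\bar M_b\lVert R_b')$ being uniform over the full space $\{0,1\}^{|\mathcal V_U|L}$, which in turn relies on the uniformity of $M'_b = \lVert_{l\in\mathcal L}\widetilde A^{1:K}_{b-1,l}[\mathcal V_{U|Y}]$; this uniformity is exactly the point flagged after Figure~\ref{fig2} (it follows from \eqref{eqenc1} and $\mathcal V_{U|Y}\subset\mathcal V_U$, and for $b=1$ from the convention $M'_1=\emptyset$), so I would cite that observation rather than re-derive it. Everything else is routine substitution into the definition of $\widetilde q$.
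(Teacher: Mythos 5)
Your proof is correct and rests on exactly the same facts the paper uses: bijectivity of multiplication by a nonzero element of $\text{GF}(2^{|\mathcal{V}_U|L})$, the deterministic recoverability $\bar M_b = F(R_b,T_b)$ (equivalently, the existence of a unique $r_b'$ with $r_b\odot t_b=(\bar m_b\lVert r_b')$), and the conditional independence $\widetilde p_{X_bZ_b(s)\mid T_b\bar M_bR_b}=\widetilde p_{X_bZ_b(s)\mid T_b}$. The paper presents this as a single pointwise chain of equalities (summing out $R_b'$ and reducing the sum of indicators to $\mathds{1}\{F(r_b,t_b)=\bar m_b\}$), whereas you factor $\widetilde p$ into the three conditional pieces $\widetilde p_{T_bR_b}$, $\widetilde p_{\bar M_b\mid T_bR_b}$, $\widetilde p_{X_bZ_b(s)\mid T_b}$ and match them to $\widetilde q$ one at a time; this is a reorganization of the same argument rather than a different route, and your bookkeeping of $|\bar M_b|=r$ and of the uniformity of $M_b'$ is handled correctly.
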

\begin{proof}
See Appendix \ref{App_lemimt}.
\end{proof}

Let $\mathcal{A}_b \subset \llbracket 1, N \rrbracket$ such that $|\mathcal{A}_b| = \alpha N$ and consider $\widetilde{X}^{1:N}_b[\mathcal{A}_b]$, the $\alpha N$ symbols that the eavesdropper has chosen to have access to in Block $b\in \mathcal{B}$.  
We study, by combining Lemmas~\ref{lemamp},~\ref{lemint},  the independence between $(R_b,\widetilde{Z}_b^{1:N}(s),\widetilde{X}^{1:N}_b[\mathcal{A}_b])$, i.e., all the knowledge  at the eavesdropper in Block $b\in \mathcal{B}$, and $(M_b \lVert M'_b)$ as follows.

\begin{lem} \label{lemampapp}
Fix $b\in \mathcal{B}$. We adopt the same notation as in Lemma \ref{lemint} and also write $\widetilde{X}_b[\mathcal{A}_b]$ instead of $\widetilde{X}^{1:N}_b[\mathcal{A}_b]$ for convenience. We have for any $\gamma \in]0,1[$
\begin{align}
& \mathbb{V} ( \widetilde{p}_{\bar{M}_{b} R_{b} Z_b(s) {X}_b[\mathcal{A}_b] }, \widetilde{p}_{\bar{M}_{b}} \widetilde{p}_{R_{b}Z_b (s){X}_b[\mathcal{A}_b] })  \nonumber \displaybreak[0] \\
& \leq 2^{1-L^{\gamma}} +  \sqrt{ 2^{ r -  {H} \left(T_{b}| \widetilde{Z}_{b} (s) \widetilde{X}_b[\mathcal{A}_b]\right) + N \delta^{(1)}(K,L)  }} , \label{eqam}
\end{align}
where $\delta^{(1)}(K,L) \triangleq  (K^{-1}+1)\sqrt{2 L^{\gamma -1}} $. 
\end{lem}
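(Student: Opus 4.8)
The plan is to apply the leftover hash lemma (Lemma~\ref{lemamp}) with the source random variable being $T_b \triangleq T_b^{1:|\mathcal{V}_U|L}$, the side information being the eavesdropper's knowledge $(\widetilde{Z}_b(s),\widetilde{X}_b[\mathcal{A}_b])$, the hash function being $F$, and the output being $\bar{M}_b = (M_b\lVert M'_b)$. Lemma~\ref{lemint} is exactly what licenses this substitution: under $\widetilde{p}$, the joint distribution of $(\bar{M}_b, T_b, X_b, Z_b(s), R_b)$ coincides with the distribution $\widetilde{q}$ in which $\bar{M}_b = F(R_b, T_b)$ is the genuine hash output, $R_b$ is uniform on $\mathcal{R}$ and independent, and $T_b$ is uniform on $\{0,1\}^{|\mathcal{V}_U|L}$. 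So I would first invoke Lemma~\ref{lemint} to rewrite $\widetilde p_{\bar M_b R_b Z_b(s) X_b[\mathcal{A}_b]}$ as a hash-output distribution, then invoke Lemma~\ref{lemamp} with this choice, obtaining the bound $2\epsilon + \sqrt{2^{r - H_\infty^\epsilon(\widetilde p_{T_b \widetilde Z_b(s)\widetilde X_b[\mathcal{A}_b]} \mid \widetilde p_{\widetilde Z_b(s)\widetilde X_b[\mathcal{A}_b]})}}$ for any $\epsilon\in[0,1[$.

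The crux is then to lower-bound the $\epsilon$-smooth min-entropy $H_\infty^\epsilon(\widetilde p_{T_b \widetilde Z_b(s)\widetilde X_b[\mathcal{A}_b]}\mid \cdot)$ in terms of the Shannon conditional entropy $H(T_b\mid \widetilde Z_b(s)\widetilde X_b[\mathcal{A}_b])$, up to the slack $N\delta^{(1)}(K,L)$ and with $\epsilon$ chosen so that $2\epsilon = 2^{1-L^\gamma}$, i.e.\ $\epsilon = 2^{-L^\gamma}$. This is where the sub-block structure is essential: by Remark~\ref{rem1a} (resp.\ Remark~\ref{rem2}), $T_b$ splits into $L$ pieces $(T_{b,l}^{1:|\mathcal{V}_U|})_{l\in\mathcal{L}}$ that, jointly with the corresponding pieces of the eavesdropper's observation, are \emph{independent} across $l$ (though not identically distributed in the general case). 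I would therefore apply an asymptotic-equipartition / min-entropy-vs-entropy conversion for independent blocks — the standard tool being a bound of the form $H_\infty^\epsilon(p^{\otimes\text{-like, independent}}) \geq \sum_l H(T_{b,l}\mid\cdots) - (\text{const})\cdot\sqrt{L\log(1/\epsilon)}$ type estimate, or the quantum-AEP-style bound from \cite{renner2008security} specialized to independent (not i.i.d.) blocks. Plugging $\epsilon = 2^{-L^\gamma}$ gives a fluctuation term of order $\sqrt{L\cdot L^\gamma} = L^{(1+\gamma)/2}$ per symbol-count normalization; after multiplying by the per-block dimension $K$ and collecting constants one gets exactly $N\delta^{(1)}(K,L) = N(K^{-1}+1)\sqrt{2L^{\gamma-1}}$, since $N=KL$ so $K\sqrt{L^{1+\gamma}} \asymp N\sqrt{L^{\gamma-1}}$. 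One must also argue $\sum_l H(T_{b,l}\mid \widetilde Z_{b,l}\widetilde X_{b,l}[\mathcal{A}_{b,l}]) = H(T_b\mid \widetilde Z_b(s)\widetilde X_b[\mathcal{A}_b])$, which follows from the independence of the $L$ triples (Remark~\ref{rem2}) plus the fact that $T_b$ is a deterministic bijective-per-block rearrangement of $(T_{b,l})_l$.

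The main obstacle I anticipate is the precise min-entropy-to-Shannon-entropy conversion for \emph{independent but non-identically-distributed} blocks: the textbook AEP statements in \cite{renner2008security} are phrased for i.i.d.\ sequences, and for Theorem~\ref{thmain}.\ref{thmain1} (all components of $\mathbf{s}_b$ equal) Remark~\ref{rem1a} does give i.i.d.\ pairs, but the sets $\mathcal{A}_{b,l}$ chosen by the eavesdropper break the identical-distribution property, so one genuinely needs the non-i.i.d.\ version (each block still has bounded alphabet size $2^{|\mathcal{V}_U|}\times 2^K\times\cdots$, uniformly in $l$, which is what makes the fluctuation term uniform). A careful statement and proof of this generalized AEP bound — or a reduction to a known one by a symmetrization/padding argument — is the technical heart; everything else (applying Lemmas~\ref{lemamp} and~\ref{lemint}, bookkeeping the constants $K^{-1}+1$ and $\sqrt 2$, and matching $\epsilon$ to $2^{1-L^\gamma}$) is routine. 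I would relegate the generalized AEP estimate to an appendix and cite the independence structure from Remarks~\ref{rem1a}--\ref{rem2} to justify its hypotheses.
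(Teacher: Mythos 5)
Your plan matches the paper's proof step for step: invoke Lemma~\ref{lemint} to recast $\bar M_b$ as a genuine hash output, apply Lemma~\ref{lemamp}, and then lower-bound the smooth min-entropy by the Shannon conditional entropy using the independence of the $L$ sub-blocks guaranteed by Remark~\ref{rem2}. The ``generalized AEP estimate'' you flag as the technical heart is supplied in the paper as Lemma~\ref{lemholenstein}, cited from Holenstein--Renner \cite{holenstein2011randomness}, which is already stated for products $\prod_{i=1}^L p_{X_iZ_i}$ of independent but not necessarily identically distributed pairs --- precisely the non-i.i.d.\ version you correctly identified as necessary (because the eavesdropper's sets $\mathcal{A}_{b,l}$ break identical distribution) --- and with $\epsilon=2^{-L^\gamma}$ it yields the slack $L\delta^{(0)}(K,L)\leq N\delta^{(1)}(K,L)$ exactly as you sketch.
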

\begin{proof}
See Appendix \ref{App_lemampapp}.
\end{proof}

Next, using Lemma \ref{lemdist}, we lower bound the conditional entropy in \eqref{eqam} in the following lemma.
\begin{lem}\label{lemcond}
Fix $b\in \mathcal{B}$. We adopt the same notation as in Lemmas \ref{lemint}, \ref{lemampapp}. We have
\begin{align*}
&H \left(T_{b}| \widetilde{Z}_{b}(s) \widetilde{X}_b[\mathcal{A}_b]  \right) \\
&\geq N[(1-\alpha) H(U|Z(s) ) +\alpha H(U|X) - \delta^{(2)}(K,L)], 
\end{align*}
with $ \delta^{(2)}(K,L) \triangleq  2 \sqrt{2 \ln2} \sqrt{2N \delta_K} ( \log (|\mathcal{X}|^2 \max_{s \in \mathfrak{S}}|\mathcal{Z}_s| )  - N^{-1} \log (\sqrt{2 \ln2} \sqrt{2N \delta_K}))+ N^{-1}{H}_b ( N \delta_K  ) + N \delta_K   + o(1)$, and $H_b(\cdot)$ the binary entropy.
\end{lem}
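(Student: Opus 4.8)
\textbf{Proof plan for Lemma \ref{lemcond}.}
The plan is to lower bound $H(T_b \mid \widetilde Z_b(s)\widetilde X_b[\mathcal{A}_b])$ by relating it to the entropy of the same quantity under the ideal product distribution $q$, and then compute the latter explicitly using the sub-block structure. First, recall from Remark \ref{rem1a} that $T_b$ is uniform over $\{0,1\}^{|\mathcal{V}_U|L}$ and, by \eqref{eqenc1}, $T_b^{1:|\mathcal{V}_U|} = \widetilde A_{b,l}^{1:K}[\mathcal{V}_U]$ for each $l\in\mathcal{L}$; hence conditioning on $T_b$ is the same as conditioning on $(\widetilde A_{b,l}^{1:K}[\mathcal{V}_U])_{l\in\mathcal{L}}$. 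Since $\mathcal{V}_U$ carries (asymptotically) the full randomness of $U$, I expect $H(T_b \mid \cdots)$ to differ from $H((\widetilde U_{b,l}^{1:K})_{l}\mid \widetilde Z_b(s)\widetilde X_b[\mathcal{A}_b])$ only by a term controlled by $|\mathcal{V}_U^c|\,L$, which is $O(LK\delta_K)$ by the source-coding interpretation of the sets; so it suffices to lower bound the conditional entropy of the $\widetilde U$'s.

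The second and main step is to pass from the induced distribution $\widetilde p$ to the target product distribution $q$. By Lemma \ref{lemdist}, $\mathbb{D}(q_{U^{1:N}X^{1:N}Y^{1:N}Z^{1:N}(s)}\|\widetilde p_{U_b^{1:N}X_b^{1:N}Y_b^{1:N}Z_b^{1:N}(s)}) \leq 2LK\delta_K$, so by Pinsker's inequality $\mathbb{V}(\widetilde p, q)\leq \sqrt{2\ln 2}\sqrt{2LK\delta_K}$ on the relevant marginals, and this bound survives marginalization/restriction to the coordinates in $\mathcal{A}_b$. I would then invoke the standard continuity-of-entropy estimate (e.g.\ \cite[Lemma 2.7]{Csiszar78}-type bound, as used elsewhere in the paper): if $\mathbb{V}(p,q)\leq \eta \leq 1/4$ on an alphabet of size $d$, then $|H_p - H_q|\leq \eta\log d + H_b(\eta)$. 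Here the conditioning random variable $(\widetilde Z_b(s),\widetilde X_b[\mathcal{A}_b])$ ranges over an alphabet of size at most $(\max_{s}|\mathcal{Z}_s|)^N |\mathcal{X}|^{\alpha N}\leq (|\mathcal{X}|^2\max_s|\mathcal{Z}_s|)^{N}$, which is exactly where the $\log(|\mathcal{X}|^2\max_s|\mathcal{Z}_s|)$ factor in $\delta^{(2)}(K,L)$ comes from; the $(LK)^{-1}H_b(LK\delta_K)$, $LK\delta_K$, and the $\log(\sqrt{2\ln2}\sqrt{2LK\delta_K})$ terms collect the $H_b(\eta)$ contribution and the entropy gap attributable to $\mathcal{V}_U^c$.

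The third step is the explicit computation under $q$. Under $q_{U^{1:N}X^{1:N}Z^{1:N}(s)} = \prod_{i=1}^N q_{UXZ(s)}$, the coordinates are i.i.d., so
\begin{align*}
H\big(U^{1:N} \,\big|\, Z^{1:N}(s),\, X^{1:N}[\mathcal{A}_b]\big) = \sum_{i\in\mathcal{A}_b} H(U\mid X) + \sum_{i\notin\mathcal{A}_b} H(U\mid Z(s)) = \alpha N\, H(U\mid X) + (1-\alpha)N\, H(U\mid Z(s)),
\end{align*}
using $|\mathcal{A}_b| = \alpha N$ and that conditioning on $X^i$ dominates conditioning on $Z^i(s)$ for those coordinates. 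Combining the three steps and folding the $o(1)$ lower-order terms (from the $|\mathcal{V}_U|/K \to H(U)$, $|\mathcal{V}_U^c|$ estimates, and the passage $H(T_b\mid\cdot) \to H((\widetilde U_{b,l})_l\mid\cdot)$) into the stated $\delta^{(2)}(K,L)$ yields the claim.

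I expect the main obstacle to be bookkeeping rather than conceptual: carefully tracking which alphabet-size bound applies at each conditioning step so that the continuity estimate produces exactly the coefficient $\log(|\mathcal{X}|^2\max_{s\in\mathfrak{S}}|\mathcal{Z}_s|)$, and making sure the passage between $T_b$ and $(\widetilde U_{b,l}^{1:K})_{l\in\mathcal{L}}$ (which involves the invertible map $G_K$ and discarding the $\mathcal{V}_U^c$ positions, whose conditional entropies are each $\leq \delta_K$) only costs the advertised $O(LK\delta_K)$ and $o(1)$ terms. The factor-of-two in front of $\sqrt{2\ln2}\sqrt{2LK\delta_K}$ in $\delta^{(2)}$ suggests the continuity estimate is applied twice (once for $\widetilde p$ vs.\ $q$ on the joint, once on the marginal of the conditioning variables), which I would make precise via $H(A\mid B) = H(AB) - H(B)$ and bounding each term separately.
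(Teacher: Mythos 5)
Your overall route---Pinsker to go from $\widetilde p$ to $q$, the Csisz\'ar--K\"orner-type continuity-of-entropy bound applied twice (once to the joint, once to the conditioning marginal, via $H(A\mid B)=H(AB)-H(B)$), and then the explicit i.i.d.\ computation giving $\alpha N H(U\mid X)+(1-\alpha)N H(U\mid Z(s))$---is the same as the paper's, and your reading of where the $\log(|\mathcal{X}|^2\max_s|\mathcal{Z}_s|)$ and the factor $2$ come from is correct.

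However, there is a genuine gap in the first step, the passage from $H(T_b\mid\cdots)=H(\widetilde A_b[\mathcal{V}_U]\mid\cdots)$ to the conditional entropy of the full $U$-sequence. You claim the gap is ``controlled by $|\mathcal{V}_U^c|L$, which is $O(LK\delta_K)$,'' and justify this by asserting that ``the $\mathcal{V}_U^c$ positions'' have ``conditional entropies each $\leq\delta_K$.'' That last assertion is false: by the definitions \eqref{eqvu}--\eqref{eqvxu}, only the positions in $\mathcal{H}_U^c\subsetneq\mathcal{V}_U^c$ satisfy $H(A^j\mid A^{1:j-1})\leq\delta_K$; the positions in $\mathcal{H}_U\setminus\mathcal{V}_U$ can have conditional entropy anywhere between $\delta_K$ and $1-\delta_K$. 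Consequently $\sum_{j\in\mathcal{V}_U^c}H(A^j\mid A^{1:j-1})$ is \emph{not} $O(K\delta_K)$; it is $O(K\delta_K)+|\mathcal{H}_U\setminus\mathcal{V}_U|$, and the second term is only $o(K)$ (by \cite{Arikan10} and \cite{Honda13,Chou14rev}), which is a much weaker bound. The paper handles this correctly by (i) splitting $\mathcal{V}_U$ against $\mathcal{H}_U$ to isolate the $L|\mathcal{H}_U\setminus\mathcal{V}_U|=o(LK)$ gap, and then (ii) invoking Fano's inequality to bound $H(U_b\mid A_b[\mathcal{H}_U],\cdots)\leq H_b(LK\delta_K)+(LK)^2\delta_K$, using the source-coding error bound of Lemma~\ref{lemscs}. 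Your plan contains neither the $\mathcal{H}_U^c$ vs.\ $\mathcal{H}_U\setminus\mathcal{V}_U$ decomposition nor the Fano step, so your accounting of the error terms ($LK\delta_K$ and $(LK)^{-1}H_b(LK\delta_K)$ in $\delta^{(2)}$ come from Fano; the $o(1)$ absorbs the $o(LK)/(LK)$ from $|\mathcal{H}_U\setminus\mathcal{V}_U|$) does not line up, and as written the claimed $O(LK\delta_K)$ bound on the $T_b\to U$ gap is wrong. The argument can be repaired---either by Fano as in the paper, or by directly summing position-by-position after the split---but the split into $\mathcal{H}_U^c$ and $\mathcal{H}_U\setminus\mathcal{V}_U$ is the missing ingredient.

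One smaller point: you also need to be careful that the explicit computation under $q$, namely $H(U^{1:N}\mid Z^{1:N}(s),X^{1:N}[\mathcal{A}_b])=\alpha N H(U\mid X)+(1-\alpha)N H(U\mid Z(s))$, requires dropping the eavesdropper's observations $Z^{1:N}(s)[\mathcal{A}_b]$ on the positions in $\mathcal{A}_b$, which is justified by the Markov chain $U_b-(Z_b(s)[\mathcal{A}_b^c],X_b[\mathcal{A}_b])-Z_b(s)[\mathcal{A}_b]$ and the product structure of $q$; your one-line derivation passes over this silently.
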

\begin{proof}
See Appendix \ref{App_lemcond}.
\end{proof}

By combing Lemma \ref{lemampapp} and Lemma \ref{lemcond} we obtain the following result.
\begin{lem} \label{lemampapp2}
Fix $b\in \mathcal{B}$. We adopt the same notation as in Lemma \ref{lemcond}. 
We have for any $\gamma \in]0,1[$
\begin{align*}
& \mathbb{V} ( \widetilde{p}_{\bar{M}_{b} R_{b}Z_b(s) {X}_b[\mathcal{A}_b] }, \widetilde{p}_{\bar{M}_{b}} \widetilde{p}_{R_{b}Z_b (s){X}_b[\mathcal{A}_b] }) 
\\
& \leq 2^{1-L^{\gamma}} +  \sqrt{ 2^{ r -  N [(1-\alpha) H(U|Z(s) ) +\alpha H(U|X) - \delta^{(3)}(K,L) ]  }} , 
\end{align*}
where $\delta^{(3)}(K,L) \triangleq \delta^{(1)}(K,L) + \delta^{(2)}(K,L)$, with $\delta^{(1)}(K,L)  $ defined in Lemma \ref{lemampapp} and $\delta^{(2)}(L,K)$ defined in Lemma~\ref{lemcond}.
\end{lem}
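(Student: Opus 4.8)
The plan is to simply substitute the lower bound on the conditional entropy from Lemma~\ref{lemcond} into the bound of Lemma~\ref{lemampapp}. Concretely, Lemma~\ref{lemampapp} gives, for any $\gamma \in ]0,1[$,
\begin{align*}
\mathbb{V} ( \widetilde{p}_{\bar{M}_{b} R_{b} Z_b(s) {X}_b[\mathcal{A}_b] }, \widetilde{p}_{\bar{M}_{b}} \widetilde{p}_{R_{b}Z_b (s){X}_b[\mathcal{A}_b] })  \leq 2^{1-L^{\gamma}} +  \sqrt{ 2^{ r -  {H} \left(T_{b}| \widetilde{Z}_{b} (s) \widetilde{X}_b[\mathcal{A}_b]\right) + N \delta^{(1)}(K,L)  }},
\end{align*}
and Lemma~\ref{lemcond} gives
\begin{align*}
{H} \left(T_{b}| \widetilde{Z}_{b}(s) \widetilde{X}_b[\mathcal{A}_b]  \right) \geq N[(1-\alpha) H(U|Z(s) ) +\alpha H(U|X) - \delta^{(2)}(K,L)].
\end{align*}

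The key observation is that the right-hand side of the Lemma~\ref{lemampapp} bound is monotonically increasing in the exponent $r - H(T_b \mid \widetilde Z_b(s)\widetilde X_b[\mathcal{A}_b]) + N\delta^{(1)}(K,L)$, so replacing $-H(T_b\mid\cdots)$ by the larger quantity $-N[(1-\alpha)H(U\mid Z(s)) + \alpha H(U\mid X) - \delta^{(2)}(K,L)]$ only inflates the bound. Carrying out this substitution, the exponent becomes
\begin{align*}
r - N[(1-\alpha) H(U|Z(s)) + \alpha H(U|X) - \delta^{(2)}(K,L)] + N\delta^{(1)}(K,L) = r - N[(1-\alpha)H(U|Z(s)) + \alpha H(U|X) - \delta^{(3)}(K,L)],
\end{align*}
where $\delta^{(3)}(K,L) \triangleq \delta^{(1)}(K,L) + \delta^{(2)}(K,L)$ exactly as defined in the statement. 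The leading term $2^{1-L^{\gamma}}$ is untouched. This yields the claimed inequality directly.

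There is essentially no obstacle here: the proof is a one-line monotonicity argument followed by collecting the error terms into $\delta^{(3)}(K,L)$. The only thing to be careful about is that the substitution is valid for \emph{every} $s \in \mathfrak{S}$ and every admissible $\mathcal{A}_b$ with $|\mathcal{A}_b| = \alpha N$, which is already the scope under which both Lemma~\ref{lemampapp} and Lemma~\ref{lemcond} are stated, so the combined statement inherits the same universality. I would therefore just write: "By substituting the bound of Lemma~\ref{lemcond} into \eqref{eqam} and using that the right-hand side of \eqref{eqam} is nondecreasing in $r - H(T_b\mid \widetilde Z_b(s)\widetilde X_b[\mathcal{A}_b])$, we obtain the claim with $\delta^{(3)}(K,L) = \delta^{(1)}(K,L) + \delta^{(2)}(K,L)$."
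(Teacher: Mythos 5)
Your proof is correct and matches the paper's approach exactly: the paper simply states that Lemma~\ref{lemampapp2} follows ``by combining Lemma~\ref{lemampapp} and Lemma~\ref{lemcond},'' which is precisely the monotone substitution you carry out and verify.
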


Finally, we obtain security in a given block as follows.
\begin{lem} \label{lemsecurblockwise}
Fix $b\in \mathcal{B}$ and $\xi>0$. We choose $$r \triangleq \! N \!\!\left[ (1 \! -\! \alpha) \min_{s\in \mathfrak{S}} H ({U}| {Z}(s) )\! +\! \alpha H(U|X) \! -\!  \delta^{(3)}(K,L) \!- \!\xi \right]$$ with $\delta^{(3)}(K,L)$ defined in Lemma \ref{lemampapp2}. Then, for $L$ large enough
\begin{align*}
 I\left(M_{b}M'_{b}; \widetilde{Z}_{b}(s) {X}_b[\mathcal{A}_b]R_b\right) \leq  \delta^{(4)}(K,L,\xi),
\end{align*}
where $\delta^{(4)}(K,L,\xi) \triangleq (2^{1-L^{\gamma}} + \sqrt{ 2^{ -N \xi }}) \log \frac{2^N}{2^{1-L^{\gamma}} + \sqrt{ 2^{ -N \xi }} }$ .
\end{lem}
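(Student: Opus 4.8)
The plan is to convert the variational-distance bound of Lemma \ref{lemampapp2} into a mutual-information bound via the standard continuity-of-entropy estimate, after first choosing $r$ to cancel the dominant exponent. First I would substitute the prescribed value of $r$ into the right-hand side of Lemma \ref{lemampapp2}. Since $r \leq N\big((1-\alpha)H(U|Z(s)) + \alpha H(U|X) - \delta^{(3)}(K,L) - \xi\big)$ for every $s$ (because we took the minimum over $\mathfrak{S}$ in the definition of $r$), the exponent $r - N[(1-\alpha)H(U|Z(s)) + \alpha H(U|X) - \delta^{(3)}(K,L)]$ is at most $-N\xi$, so Lemma \ref{lemampapp2} gives
\begin{align*}
\mathbb{V}\big(\widetilde{p}_{\bar{M}_b R_b Z_b(s) X_b[\mathcal{A}_b]}, \widetilde{p}_{\bar{M}_b}\widetilde{p}_{R_b Z_b(s) X_b[\mathcal{A}_b]}\big) \leq 2^{1-L^{\gamma}} + \sqrt{2^{-N\xi}} \triangleq \eta.
\end{align*}
Note $\bar{M}_b = (M_b \lVert M'_b)$, so this is exactly a bound on the variational distance between the joint law of $(M_b M'_b, \text{eavesdropper's block-}b\text{ knowledge})$ and the corresponding product of marginals.

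Next I would invoke a continuity bound for mutual information in terms of variational distance: for random variables $A$ (taking values in a set of size at most $2^N$, here $A = (M_b\lVert M'_b)$ lives in $\{0,1\}^{\leq N}$) and $C$, if $\mathbb{V}(p_{AC}, p_A p_C) \leq \eta \leq 1$ then $I(A;C) \leq \eta \log \frac{|\mathcal{A}|}{\eta}$; this is the bound commonly attributed to Csiszár (e.g. \cite{Csiszar78}) and used throughout this literature. Applying it with $A = \bar{M}_b$, $C = (\widetilde{Z}_b(s), X_b[\mathcal{A}_b], R_b)$, $|\mathcal{A}| = 2^N$ (a crude but sufficient cardinality bound, since $|M_b| + |M'_b| \leq N|\mathcal{V}_U| \leq N$ would actually give a better constant, but $2^N$ suffices and matches the statement), and $\eta = 2^{1-L^{\gamma}} + \sqrt{2^{-N\xi}}$, yields
\begin{align*}
I\big(M_b M'_b; \widetilde{Z}_b(s) X_b[\mathcal{A}_b] R_b\big) \leq \eta \log \frac{2^N}{\eta} = \delta^{(4)}(K,L,\xi),
\end{align*}
which is exactly the claimed inequality. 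The phrase "for $L$ large enough" is needed to guarantee $\eta \leq 1$ (and $\eta < 1$ so the logarithm is positive and the map $\eta \mapsto \eta\log(2^N/\eta)$ is increasing on the relevant range), which holds once $2^{1-L^{\gamma}} < 1$, i.e. for $L$ exceeding a threshold depending only on $\gamma$.

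The only real subtlety — and the one step I would be most careful about — is making sure the continuity bound is applied to the correct pair of distributions and that $\delta^{(3)}(K,L)$ really is a vanishing quantity (so that the chosen $r$ is a legitimate, eventually-positive rate). The former is handled by Lemma \ref{lemampapp2} verbatim; the latter follows from $\delta^{(1)}(K,L) = (K^{-1}+1)\sqrt{2L^{\gamma-1}} \to 0$ as $L\to\infty$ for $\gamma \in ]0,1[$, and from the fact (inspected from its definition in Lemma \ref{lemcond}, using $\delta_K = 2^{-K^\beta}$ with $\beta \in ]0,1/2[$ so that $LK\delta_K \to 0$ for $N = LK$ growing appropriately) that $\delta^{(2)}(K,L) \to 0$. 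Everything else is a direct substitution, so I do not anticipate a genuine obstacle here; the lemma is essentially the packaging step that turns the leftover-hash estimate into the strong-secrecy-per-block statement that the rest of the proof will aggregate over $b \in \mathcal{B}$.
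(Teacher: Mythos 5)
Your proposal is correct and follows essentially the same two-step route as the paper: substitute the chosen $r$ into Lemma \ref{lemampapp2} to get the variational-distance bound $\eta = 2^{1-L^{\gamma}} + \sqrt{2^{-N\xi}}$, then apply the Csisz\'ar-style continuity estimate $I(\bar M_b; \cdot) \leq \eta\log(2^N/\eta)$ (the paper cites \cite{bookCsizar}), with ``$L$ large enough'' ensuring the map $x\mapsto x\log(2^N/x)$ is increasing on $[0,\eta]$. The added remarks about $\delta^{(3)}$ vanishing are not needed for this lemma itself, but they do not detract from an otherwise matching argument.
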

\begin{proof}
We adopt the same notation as in the previous lemmas. By definition of $r$ and by Lemma \ref{lemampapp2}, we have  
\begin{align}
\mathbb{V} ( \widetilde{p}_{\bar{M}_{b}R_{b}Z_b(s) {X}_b[\mathcal{A}_b] }, \widetilde{p}_{\bar{M}_{b}} \widetilde{p}_{R_{b}Z_b (s){X}_b[\mathcal{A}_b] }) 
 \leq 2^{1-L^{\gamma}} + \sqrt{ 2^{ -N \xi }}. \label{eqintermed}	
\end{align}
We thus have 
\begin{align}
&I(M_{b}M'_{b}; \widetilde{Z}_{b}(s) {X}_b[\mathcal{A}_b] R_b) \nonumber  \\ \nonumber
& = I(\bar{M}_{b}; \widetilde{Z}_{b}(s){X}_b[\mathcal{A}_b] R_b) 
\\ \nonumber
& \stackrel{(a)}{\leq} f(\mathbb{V} ( \widetilde{p}_{\bar{M}_{b}R_{b}Z_b(s) {X}_b[\mathcal{A}_b] }, \widetilde{p}_{\bar{M}_{b}} \widetilde{p}_{R_{b}Z_b (s){X}_b[\mathcal{A}_b] })  ) \displaybreak[0]\\
& \stackrel{(b)}{\leq}  f(2^{1-L^{\gamma}} + \sqrt{ 2^{ -N \xi }}), \label{eqsmax}
\end{align}
where $(a)$ holds by \cite[Lemma 2.7]{bookCsizar} with $f:x \mapsto x \log (2^N/x)$, $(b)$ holds for $L$ large enough   since $f$ is increasing for small enough~values. 
\end{proof}

\subsection{Analysis of security over all blocks jointly } \label{sec:jointsec}

We obtain security over all blocks jointly from Lemma \ref{lemsecurblockwise} as follows.
\begin{lem} \label{lemjointsec}
For convenience, we define for $i,j \in \mathcal{B}$, $\widetilde{Z}_{1:i}(s)\triangleq (\widetilde{Z}^{1:N}_{b}(s))_{b\in \llbracket 1,i \rrbracket}$, $\widetilde{X}_{1:i}[\mathcal{A}]\triangleq (\widetilde{X}^{1:N}_{b}[\mathcal{A}_b])_{b\in \llbracket 1,i \rrbracket}$, $R_{i:j} \triangleq (R_b)_{b\in \llbracket i,j \rrbracket}$, and $M_{i:j} \triangleq (M_b)_{b\in \llbracket i,j \rrbracket}$. We have 
\begin{align*}
\max_{s\in \mathfrak{S}} \max_{\mathcal{A} \in \mathbb{A}} I(M_{1:B}; \widetilde{Z}_{1:B}(s)\widetilde{X}_{1:B}[\mathcal{A}] R_{1:B})
\leq 2B \delta^{(4)}(L,K,\xi), 
\end{align*}
where $\delta^{(4)}(L,K,\xi)$ is defined in Lemma~\ref{lemsecurblockwise}. 
\end{lem}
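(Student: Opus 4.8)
The plan is to bound the joint mutual information $I(M_{1:B}; \widetilde Z_{1:B}(s)\widetilde X_{1:B}[\mathcal A]R_{1:B})$ by a telescoping/chain-rule argument over the $B$ blocks, reducing it to the blockwise leakage bounds of Lemma~\ref{lemsecurblockwise}. First I would fix an arbitrary $s\in\mathfrak S$ and $\mathcal A\in\mathbb A$ (the maximum over these finite choices is attained, and the bound I derive will not depend on them) and abbreviate $\widetilde W_b \triangleq (\widetilde Z^{1:N}_b(s),\widetilde X^{1:N}_b[\mathcal A_b])$, so the eavesdropper's total observation is $(\widetilde W_{1:B}, R_{1:B})$. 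Using the chain rule for mutual information,
\begin{align*}
I(M_{1:B}; \widetilde W_{1:B} R_{1:B}) = \sum_{b\in\mathcal B} I(M_{1:B}; \widetilde W_b R_b \mid \widetilde W_{1:b-1} R_{1:b-1} R_{b+1:B}),
\end{align*}
after first observing that $R_{1:B}$ is independent of $M_{1:B}$ and that the blocks' randomization sequences are mutually independent, so that conditioning can be reorganized freely. The key structural fact, visible from Algorithm~\ref{alg:p1} and Figure~\ref{fig}, is that block $b$'s transmitted quantities depend on the past only through $M'_b = \concat_{l}\widetilde A^{1:K}_{b-1,l}[\mathcal V_{U|Y}]$, i.e.\ through a deterministic function of block $b-1$'s data; conversely $\widetilde W_{1:b-1}$ together with $M_{1:b-1}$ and the past randomness determines (a function of) $M'_b$. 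So each term in the chain rule should be controllable once I relate it to $I(M_b M'_b; \widetilde W_b R_b)$.

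The core step is therefore to show, for each $b$, that the $b$-th term in the chain-rule sum is at most $2\delta^{(4)}(L,K,\xi)$. I would argue: $\widetilde W_b$ is, given $M'_b$ and the local quantities $(M_b,R_b,R'_b)$, conditionally independent of everything in blocks $1,\dots,b-1$ and of $R_{b+1:B}$; moreover the pair $M_b M'_b$ "absorbs" the dependence on the past, because any information about $M_{b+1:B}$ that leaks through $\widetilde W_b$ does so only via $M'_{b+1}$, which is a function of $\widetilde A^{1:K}_{b,l}[\mathcal V_{U|Y}]$, hence of block $b$'s data. Concretely I expect to bound the conditional term by $I(M_b M'_b; \widetilde W_b R_b)$ plus possibly a second copy of the blockwise bound coming from the adjacent block $b-1$ (this is why the final constant is $2B\delta^{(4)}$ rather than $B\delta^{(4)}$): leakage "about block $b$'s message" is counted once through block $b$'s observation and once through block $b+1$'s observation via the recursion $M'_{b+1}\subseteq\widetilde A_b$. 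Each of these pieces is then $\le \delta^{(4)}(L,K,\xi)$ by Lemma~\ref{lemsecurblockwise}, since that lemma gives $I(M_b M'_b; \widetilde Z_b(s)\widetilde X_b[\mathcal A_b] R_b)\le\delta^{(4)}(K,L,\xi)$ for the chosen value of $r$.

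The main obstacle — and the part requiring genuine care rather than routine manipulation — is handling the block-Markov interdependence cleanly: because $M'_b$ is built from block $b-1$ and is itself one of the quantities protected in block $b-1$'s leakage bound, a naive chain-rule expansion produces cross terms of the form $I(M_{b+1:B};\widetilde W_b R_b \mid \text{past})$ that one must re-express in terms of $M'_{b+1}$ and fold into the block-$b$ (or block-$(b+1)$) bound. I would make this rigorous by a careful choice of the conditioning order (peeling blocks from $b=1$ up to $b=B$ or vice versa), repeated use of the fact that conditioning on independent randomness and on deterministic functions does not increase the relevant mutual informations, and the data-processing inequality to pass from $\widetilde W_{1:b-1}$ to the function $M'_b$ it determines. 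Once the per-block term is pinned to at most $2\delta^{(4)}(L,K,\xi)$, summing over $b\in\mathcal B$ gives $2B\delta^{(4)}(L,K,\xi)$, and since the bound is uniform in $s$ and $\mathcal A$ the maximum over $s\in\mathfrak S$ and $\mathcal A\in\mathbb A$ satisfies the same inequality, completing the proof.
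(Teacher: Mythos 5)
Your high-level plan matches the paper's: expand $I(M_{1:B};\widetilde Z_{1:B}(s)\widetilde X_{1:B}[\mathcal A]R_{1:B})$ by the chain rule over blocks, reduce each term to quantities controlled by Lemma~\ref{lemsecurblockwise}, and observe that each block contributes two applications of that lemma, giving $2B\delta^{(4)}$. You also correctly identify that $M'_b$ is the only channel through which adjacent blocks interact. So the strategy is the right one.

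However, the critical step is left as a sketch, and where you make it concrete it does not hold. You write that ``$\widetilde W_{1:b-1}$ together with $M_{1:b-1}$ and the past randomness determines (a function of) $M'_b$'' and propose to use the data-processing inequality to replace $\widetilde W_{1:b-1}$ by $M'_b$. But $M'_b$ is not a deterministic function of $(\widetilde W_{1:b-1}, M_{1:b-1}, R_{1:b-1})$: unwinding the recursion $T_{b-1} = R_{b-1}^{-1}\odot(M_{b-1}\lVert M'_{b-1}\lVert R'_{b-1})$ shows that $M'_b$ depends on the encoder's local randomization sequences $R'_{1:b-1}$, which the eavesdropper never observes and which are not part of the conditioning. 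So data processing ``down to the function $M'_b$'' is not available, and the cross terms you flag as the main obstacle are not actually handled by the argument you outline.

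What the paper uses instead is a conditional-independence statement rather than a deterministic-function statement. Writing $L_{i+1} \triangleq (\widetilde Z_{i+1}(s),\widetilde X_{i+1}[\mathcal A_{i+1}],R_{i+1})$ and $L_{1:i}$ for the accumulated observations, the key facts are: (i) $I(M_{i+2:B};L_{i+1}\mid L_{1:i}M_{1:i+1})=0$, since the future messages are independent of all observations and messages up to block $i+1$; and (ii) the Markov chain $(M_{1:i},L_{1:i}) - M'_{i+1} - (L_{i+1},M_{i+1})$, which holds because, given $M'_{i+1}$, block $i{+}1$'s encoding and channel outputs depend only on the fresh and independent quantities $(M_{i+1},R_{i+1},R'_{i+1})$ and the channel noise. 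Using (i), the chain-rule term $I(M_{1:B};L_{i+1}\mid L_{1:i})$ reduces to $I(M_{1:i+1};L_{i+1}\mid L_{1:i})$, and then (ii), together with the independence of $M'_{i+1}$ and $M_{i+1}$, collapses the residual cross term $I(M_{1:i}L_{1:i};L_{i+1}\mid M_{i+1})$ to $I(M'_{i+1};L_{i+1}\mid M_{i+1}) \leq I(M_{i+1}M'_{i+1};L_{i+1})$. Both $I(M_{i+1};L_{i+1})$ and $I(M_{i+1}M'_{i+1};L_{i+1})$ are then bounded by Lemma~\ref{lemsecurblockwise}. You should state and verify the Markov chain (ii) explicitly; without it, the asserted per-block bound of $2\delta^{(4)}$ is not established, and the informal accounting of ``leakage counted once through block $b$ and once through block $b{+}1$'' does not substitute for it.
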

\begin{proof}
For convenience, define for $i \in \mathcal{B}$, $L_{i} \triangleq  (\widetilde{Z}_{i}(s),\widetilde{X}_{i}[\mathcal{A}_{i}],R_{i})$ and $L_{1:i} \triangleq  (\widetilde{Z}_{1:i}(s), \widetilde{X}_{1:i}[\mathcal{A}], R_{1:i})$. Then, 
\begin{align}
&I(M_{1:B};L_{1:B}) \nonumber \\ \nonumber
& = \sum_{i=0}^{B-1}  I(M_{1:B}; L_{i+1}| L_{1:i}) \displaybreak[0]\\ \nonumber
&\stackrel{(a)}{=} \sum_{i=0}^{B-1}  I(M_{1:i+1}; L_{i+1}| L_{1:i}) \displaybreak[0] \\ \nonumber
&\leq \sum_{i=0}^{B-1}  I(M_{1:i+1}L_{1:i} ;  L_{i+1})\displaybreak[0] \\ \nonumber
& = \sum_{i=0}^{B-1}  I(M_{i+1};  L_{i+1}) + I(M_{1:i} L_{1:i} ;  L_{i+1}|M_{i+1}) \displaybreak[0] \\ \nonumber
& \stackrel{(b)}{\leq} B \delta^{(4)}(K,L,\xi) +  \sum_{i=0}^{B-1}  I(M_{1:i} M_{i+1}' L_{1:i} ;L_{i+1} M_{i+1}) \displaybreak[0]\\ \nonumber
& \stackrel{(c)}{=} B \delta^{(4)}(K,L,\xi) +  \sum_{i=0}^{B-1}  I( M_{i+1}' ; L_{i+1} M_{i+1}) \displaybreak[0]\\ \nonumber
& \stackrel{(d)}{=}  B \delta^{(4)}(K,L,\xi) +  \sum_{i=0}^{B-1}  I( M_{i+1}' ; L_{i+1}|M_{i+1})\displaybreak[0] \\ \nonumber
& \leq  B \delta^{(4)}(K,L,\xi) +  \sum_{i=0}^{B-1}  I( M_{i+1}M_{i+1}' ; L_{i+1})\\
& \stackrel{(e)}{\leq} 2B \delta^{(4)}(K,L,\xi) ,\label{eqmaxj}
\end{align}
where  $(a)$ holds by the chain rule and since we have 
$
I(M_{i+2:B}; L_{i+1}| L_{1:i} M_{1:i+1}) \leq  I(M_{i+2:B}; L_{1:i+1} M_{1:i+1} ) = 0$, $(b)$ holds by Lemma~\ref{lemsecurblockwise}, $(c)$~holds by the chain rule and because $(M_{1:i},L_{1:i}) -  M_{i+1}' - (L_{i+1},M_{i+1})$ forms a Markov chain, $(d)$ holds by independence between $M_{i+1}'$ and $M_{i+1}$, $(e)$ holds by Lemma~\ref{lemsecurblockwise}. 
The lemma holds since \eqref{eqmaxj} holds for any $s\in \mathfrak{S}$ and any $\mathcal{A} \in \mathbb{A}$. 
\end{proof}
\subsection{Secrecy Rate}

The rate of the transmitted messages is
\begin{align*}
& \frac{\sum_{b\in\mathcal{B}}|M_b|}{BN} \displaybreak[0]  \stackrel{(a)}{=} \frac{r + (B-1) (r - L |\mathcal{V}_{U|Y}| )}{BN} \displaybreak[0]\\
& \geq \frac{r}{N} -  \frac{ |\mathcal{V}_{U|Y}| }{K} \displaybreak[0]\\
& \stackrel{(b)}{=} I(U;Y) - \alpha I(U;X) - (1-\alpha) \smash{\max_{s\in \mathfrak{S}}} I \left({U}; {Z}(s) \right) \\
& \phantom{--}   -  \delta^{(3)}(K,L) - \xi +o(1),
\end{align*}
where $(a)$ holds by \eqref{eqparamr}, $(b)$ holds by the choice of $r$ in Lemma~\ref{lemsecurblockwise} and because $\displaystyle\lim_{K \to \infty} |\mathcal{V}_{U|Y}| / K = H(U|Y)$ by~\cite{Chou14rev}. 

\subsection{Randomness amortization} \label{secneg2}
The randomness $(R_b)_{1:B}$ in the coding scheme of Section~\ref{secCS} needs to be shared between the legitimate users. This can be done with negligible impact on the overall communication rate similar to~\cite{Bellare2012} using an hybrid argument by repeating  the coding scheme of Section \ref{secCS} with the same randomness~$(R_b)_{1:B}$.

\section{Proof of Theorem \ref{thmain1} without pre-shared key} \label{sec:initphase}
The coding scheme of Section \ref{secCS} requires a pre-shared secret key between the legitimate users. We now consider the initialization phase, described in Algorithms \ref{alg:p3}, \ref{alg:p4}, to generate such a key with negligible impact on the overall communication rate. 
We study the reliability and the secrecy of the generated key in Sections \ref{sec:keyre} and \ref{sec:keysec}, respectively, the impact of the initialization phase on the overall communication rate in Section \ref{sec:impactk}, and the joint secrecy of the initialization phase and the coding scheme of Section \ref{secCS} in Section \ref{secjoint}. We adopt the same notation as in Section \ref{sec:proof1a}.

\subsection{Key reliability} \label{sec:keyre}
Similar to Lemma \ref{lemdist}, we have the following result. 
\begin{lem} \label{lemdistB}
For $b\in\mathcal{B}_0$, the distribution $\widetilde{p}$ induced by the encoder of Algorithm \ref{alg:p3} is approximated as follows.
\begin{align*}
\mathbb{D} ({q}_{U^{1:N}X^{1:N}Y^{1:N}Z^{1:N}(s)} \lVert  \widetilde{p}_{U^{1:N}_bX^{1:N}_bY_b^{1:N}Z_b^{1:N}(s)} ) \leq 2LK\delta_K.
\end{align*}
\end{lem}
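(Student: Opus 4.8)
The plan is to mirror the proof of Lemma~\ref{lemdist}, so the first thing I would do is recall the structure of that argument and identify what changes in the initialization phase. In both Algorithm~\ref{alg:p1} and Algorithm~\ref{alg:p3}, the encoder builds $\widetilde A_{b,l}^{1:K}$ sub-block by sub-block, then applies the linear transform $G_K$ to get $\widetilde U_{b,l}^{1:K}$, performs channel prefixing via \eqref{eqeqvu} to get $\widetilde V_{b,l}^{1:K}$, applies $G_K$ again to get $\widetilde X_{b,l}^{1:K}$, and transmits over the memoryless channel $p_{YZ(s)|X}$. The only structural difference is how $\widetilde A_{b,l}^{1:K}$ is generated on its ``free'' coordinates: in Algorithm~\ref{alg:p1} the coordinates in $\mathcal V_U$ are set deterministically from $T_{b,l}$ (which, by Remark~\ref{rem1a}, is uniform and independent across $l$), whereas in Algorithm~\ref{alg:p3}, Eq.~\eqref{eqeqvu2}, the coordinates in $\mathcal V_U$ are drawn i.i.d.\ uniform $1/2$. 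In both cases the resulting conditional law of $\widetilde A_{b,l}^{1:K}$ is exactly
\begin{equation*}
\widetilde p_{A^j|A^{1:j-1}}(a^j|a^{1:j-1}) = \begin{cases} 1/2 & j \in \mathcal V_U,\\ q_{A^j|A^{1:j-1}}(a^j|a^{1:j-1}) & j \in \mathcal V_U^c,\end{cases}
\end{equation*}
so the induced distribution $\widetilde p_{U_b^{1:N}X_b^{1:N}Y_b^{1:N}Z_b^{1:N}(s)}$ is identical to the one analyzed in Lemma~\ref{lemdist}. Hence the statement and the bound $2LK\delta_K$ carry over verbatim.

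Concretely, the key steps I would carry out are: (i) use the chain rule for KL divergence together with the fact that $U,X,Y,Z(s)$ and $\widetilde U_b,\widetilde X_b,\widetilde Y_b,\widetilde Z_b(s)$ are obtained from $A^{1:N}$ and $\widetilde A_b^{1:N}$ respectively by the \emph{same} deterministic/stochastic post-processing ($G_K$, channel prefixing, and the channel $p_{YZ(s)|X}$), so that by the data-processing inequality for divergence, $\mathbb D(q_{UXYZ(s)}^{\otimes N}\Vert\widetilde p_{U_bX_bY_bZ_b(s)}) \le \mathbb D(q_{A^{1:N}}\Vert\widetilde p_{A_b^{1:N}})$, and similarly one adds the channel-prefixing contribution from $\mathcal V_{X|U}$; (ii) decompose $\mathbb D(q_{A^{1:N}}\Vert\widetilde p_{A_b^{1:N}})$ sub-block by sub-block using independence across $l\in\mathcal L$ and the chain rule over the $K$ coordinates, getting $L$ times a sum of per-coordinate terms; (iii) bound each per-coordinate term: the coordinates $j\in\mathcal V_U^c$ contribute nothing since the conditionals match $q$, while each coordinate $j\in\mathcal V_U$ contributes at most $1 - H(A^j|A^{1:j-1}) < \delta_K$ by the definition \eqref{eqvu} of $\mathcal V_U$ (this is where the ``$\delta_K$ per coordinate'' comes from); (iv) do the analogous count for the channel-prefixing coordinates $j\in\mathcal V_{X|U}$, each contributing at most $\delta_K$ by \eqref{eqvxu}; (v) sum: at most $LK$ coordinates of each type, giving $LK\delta_K + LK\delta_K = 2LK\delta_K$. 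This is exactly the Appendix~\ref{App_lemdist} computation.

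Since the lemma statement in the paper already says ``whose proof is similar to the one of Lemma~\ref{lemdist}, and is thus omitted,'' the right move is simply to point to that proof and flag the one-line justification for why it applies unchanged. I expect the only place that needs a sentence of care is the observation that the i.i.d.\ uniform generation in \eqref{eqeqvu2} and the hash-based generation in \eqref{eqenc1}--\eqref{eqeqa} induce the \emph{same} marginal law on $\widetilde A_{b,l}^{1:K}$ (uniform on $\mathcal V_U$, $q$-conditional on $\mathcal V_U^c$), because in Algorithm~\ref{alg:p1} the security analysis already establishes via Remark~\ref{rem1a} that $T_{b,l}^{1:|\mathcal V_U|}$ is uniform and independent across $l$ — here it is even more immediate since the bits are drawn uniform by fiat. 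So there is no genuine obstacle; the ``hard part'' is purely bookkeeping, namely making sure the channel-prefixing coordinates ($\mathcal V_{X|U}$) are accounted for in addition to the $\mathcal V_U$ coordinates so that the constant is $2LK\delta_K$ and not $LK\delta_K$.
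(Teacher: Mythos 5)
Your proposal is correct and follows the same route the paper intends (the paper omits the proof of Lemma~\ref{lemdistB}, saying it is ``similar to the one of Lemma~\ref{lemdist}''), and your key observation is exactly the right one: Algorithm~\ref{alg:p3} via \eqref{eqeqvu2} and Algorithm~\ref{alg:p1} via \eqref{eqenc1}--\eqref{eqeqa} induce the identical law on each $\widetilde{A}_{b,l}^{1:K}$ (uniform on $\mathcal{V}_U$, $q$-conditional on $\mathcal{V}_U^c$, i.i.d. over $l\in\mathcal{L}$), the channel-prefixing step \eqref{eqeqvu} is literally the same in both algorithms, and the actual channel $p_{YZ(s)|X}$ is applied identically, so Appendix~\ref{App_lemdist} transfers verbatim with the same constant $2LK\delta_K$. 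One small imprecision in step~(i): the intermediate inequality $\mathbb{D}(q_{UXYZ(s)}^{\otimes N}\,\Vert\,\widetilde p_{U_bX_bY_bZ_b(s)}) \leq \mathbb{D}(q_{A^{1:N}}\Vert\widetilde p_{A_b^{1:N}})$ does not follow from the data-processing inequality, because the channel-prefixing kernel $\widetilde p_{V_{b,l}^{1:K}|U_{b,l}^{1:K}}$ differs from $q_{V^{1:K}|U^{1:K}}$ on $\mathcal{V}_{X|U}$; the correct bookkeeping is the chain rule giving an exact decomposition $\mathbb{D}(q_{UXYZ}\Vert\widetilde p) = \mathbb{D}(q_{A}\Vert\widetilde p_{A}) + \mathbb{E}_{q}\,\mathbb{D}(q_{V|U}\Vert\widetilde p_{V|U})$ per sub-block (the $(Y,Z)|X$ term vanishing since the channel matches), and then both terms are bounded by $K\delta_K$ exactly as in \eqref{eqdist1} and \eqref{eqdist2}. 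You do self-correct by ``adding the channel-prefixing contribution,'' and the final count is right, so this is a wording slip rather than a gap.
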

 Then, we have
\begin{align*}
&\mathbb{P} \left[ (\widehat{\textup{Key}}_{b})_{b \in \mathcal{B}_0} \neq (\textup{Key}_{b})_{b \in \mathcal{B}_0}  \right] \\
& \leq \mathbb{P} [ (\widehat{U}^{1:N}_{b})_{b \in \mathcal{B}_0} \neq (\widetilde{U}^{1:N}_{b})_{b \in \mathcal{B}_0}  ] \\
& \leq B_0 L (\sqrt{2 \ln 2}\sqrt{2K \delta_K} + 2K\delta_K),
\end{align*}
where the last inequality holds similar to \eqref{eq:reliability}. 

\subsection{Key secrecy} \label{sec:keysec}

We first show secrecy in a given Block $b \in \mathcal{B}_0$. Let $\mathcal{A}_b \subset \llbracket 1, N \rrbracket$ such that $|\mathcal{A}_b| = \alpha N$ and consider $\widetilde{X}^{1:N}_b[\mathcal{A}_b]$, the $\alpha N$ symbols that the eavesdropper has chosen to have access to in Block $b\in \mathcal{B}_0$. Define ${p}_{\mathcal{U}_{\textup{Key}}}$ the uniform distribution over $\{ 0,1\}^{l'_{\textup{key}}}$. We have
\begin{align}
&\mathbb{V} ( \widetilde{p}_{\textup{Key}_{b}R_{b}^{\textup{init}} Z_b(s){X}_b[\mathcal{A}_b] D_b R^{\textup{init}'}_b }, {p}_{\mathcal{U}_{\textup{Key}}} \widetilde{p}_{R_{b}^{\textup{init}}Z_b(s) {X}_b[\mathcal{A}_b]D_b R_b^{\textup{init}'}})  \nonumber \\ \nonumber
& \stackrel{(a)}{\leq} 2\epsilon + \sqrt{ 2^{ l_{\textup{key}}' - {H}_{\infty}^{\epsilon} \left(\widetilde{p}_{U_{b} {Z}_{b}(s) {X}_b[\mathcal{A}_b] D_b R_b^{\textup{init}'}} | \widetilde{p}_{ {Z}_{b}(s) {X}_b[\mathcal{A}_b] D_b R_b^{\textup{init}'}} \right)  }} \\
& \stackrel{(b)}{\leq} 2 \cdot 2^{-L^{\gamma}} + \sqrt{ 2^{ l_{\textup{key}}' - {H} \left(\widetilde{U}_{b} | \widetilde{Z}_{b} (s) \widetilde{X}_b[\mathcal{A}_b] D_b R_b^{\textup{init}'} \right) + N \delta^{(1)}(K,L)  }} , \label{eqvardK}
\end{align}
where $(a)$ holds by Lemma \ref{lemamp}, $(b)$ holds by Lemma \ref{lemholenstein} with $\gamma \in ]0,1[$ as in the proof of Lemma \ref{lemampapp} with $\delta^{(1)}(K,L)$ defined in Lemma \ref{lemampapp}.

\begin{lem} \label{lementK}
For $b \in \mathcal{B}_0$, we have
\begin{align*}
&{H} \left(\widetilde{U}_{b} | \widetilde{Z}_{b} (s) \widetilde{X}_b[\mathcal{A}_b] D_b R_b^{\textup{init}'} \right) \\
& \geq \!N[I(U;Y) \!-\! \alpha I(U;X)\! -\! (1\!-\!\alpha) I({U}; {Z}(s))\!-\!\delta^{(5)}(K,L)],
\end{align*}
where $\delta^{(5)}(K,L) \triangleq 2  \sqrt{2 \ln2} \sqrt{2N \delta_K} ( \log (|\mathcal{X}|^2 \max_{s \in \mathfrak{S}}|\mathcal{Z}_s|)-N^{-1}\log (\sqrt{2 \ln2} \sqrt{2N\delta_K}) ) + o(1).$
\end{lem}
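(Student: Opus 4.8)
The plan is to imitate the proof of Lemma~\ref{lemcond}, with two changes: first, reduce the conditioning on $(D_b,R_b^{\textup{init}'})$ to conditioning on the coordinates $(\widetilde A^{1:K}_{b,l}[\mathcal H_{U|Y}])_{l\in\mathcal L}$ that are made available to the receiver; second, bound the entropy contribution of those coordinates trivially by their number. For the first step, I would argue that $R_b^{\textup{init}'}$ is independent of $(\widetilde A_b^{1:N},\widetilde X_b^{1:N},\widetilde Y_b^{1:N},\widetilde Z_b^{1:N}(s))$ since it is only used to one-time-pad $D_b$, and that, by the definition of $D_b$ in Line~9 of Algorithm~\ref{alg:p3}, the pair $(D_b,R_b^{\textup{init}'})$ generates the same $\sigma$-algebra as $\big((\widetilde A^{1:K}_{b,l}[\mathcal H_{U|Y}])_{l\in\mathcal L},R_b^{\textup{init}'}\big)$ (using $\mathcal V_{U|Y}\subseteq\mathcal H_{U|Y}$, so that $\mathcal H_{U|Y}\setminus\mathcal V_{U|Y}$ and $\mathcal V_{U|Y}$ partition $\mathcal H_{U|Y}$). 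Combining this with the bijection between $\widetilde U_b^{1:N}$ and $\widetilde A_b^{1:N}$ through the block-diagonal $G_K$, I obtain
\[
H(\widetilde U_b^{1:N}\mid \widetilde Z_b^{1:N}(s)\,\widetilde X_b^{1:N}[\mathcal A_b]\,D_b\,R_b^{\textup{init}'}) = H\big(\widetilde A_b^{1:N}[\mathcal H_{U|Y}^c]\,\big|\,\widetilde Z_b^{1:N}(s)\,\widetilde X_b^{1:N}[\mathcal A_b]\,\widetilde A_b^{1:N}[\mathcal H_{U|Y}]\big),
\]
where $\mathcal H_{U|Y}^c$ is taken sub-block-wise; and I would then rewrite this as $H(\widetilde A_b^{1:N}\,\widetilde Z_b^{1:N}(s)\,\widetilde X_b^{1:N}[\mathcal A_b]) - H(\widetilde A_b^{1:N}[\mathcal H_{U|Y}]\,\widetilde Z_b^{1:N}(s)\,\widetilde X_b^{1:N}[\mathcal A_b])$.

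Next I would replace the induced distribution $\widetilde p$ by the target product distribution $q^{\otimes N}$. By Lemma~\ref{lemdistB} and Pinsker's inequality, $\mathbb V(q^{\otimes N}_{U^{1:N}X^{1:N}Y^{1:N}Z^{1:N}(s)},\widetilde p)\le \sqrt{2\ln 2}\sqrt{2LK\delta_K}=:\epsilon$, which controls the variational distance between the $\widetilde p$- and $q^{\otimes N}$-marginals of $(\widetilde A_b^{1:N},\widetilde X_b^{1:N}[\mathcal A_b],\widetilde Z_b^{1:N}(s))$ as well (common deterministic processing does not increase $\mathbb V$, and $A\leftrightarrow U$ is bijective). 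Applying the continuity bound for entropy of~\cite{bookCsizar} to each of the two joint entropies above, whose alphabets have size at most $(|\mathcal X|^2\max_{s}|\mathcal Z_s|)^N$, shows that the quantity of interest is at least
\[
H_q(A^{1:N}[\mathcal H_{U|Y}^c]\mid Z^{1:N}(s)\,X^{1:N}[\mathcal A]\,A^{1:N}[\mathcal H_{U|Y}]) - 2\epsilon N\log(|\mathcal X|^2\textstyle\max_s|\mathcal Z_s|) - 2H_b(\epsilon),
\]
and, after dividing by $N=LK$, the two penalty terms are precisely those appearing in $\delta^{(5)}(K,L)$ up to the $o(1)$.

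Finally I would evaluate the remaining $q^{\otimes N}$-entropy by a genie argument, writing it as $H_q(A^{1:N}\mid Z^{1:N}(s)X^{1:N}[\mathcal A]) - H_q(A^{1:N}[\mathcal H_{U|Y}]\mid Z^{1:N}(s)X^{1:N}[\mathcal A])$. Using the bijection $A^{1:N}\leftrightarrow U^{1:N}$, the chain rule with $\mathcal A$ placed first in the ordering, the memorylessness of $q^{\otimes N}$, and the Markov chain $U-X-(Y,Z(s))$, the first term equals $\alpha N\,H(U|X)+(1-\alpha)N\,H(U|Z(s))$, independently of which size-$\alpha N$ set $\mathcal A$ is chosen; the second term is at most $H_q(A^{1:N}[\mathcal H_{U|Y}])\le L|\mathcal H_{U|Y}|$. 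Since $\lim_{K\to\infty}|\mathcal H_{U|Y}|/K=H(U|Y)$ by~\cite{Arikan10}, and $\alpha H(U|X)+(1-\alpha)H(U|Z(s))-H(U|Y)=I(U;Y)-\alpha I(U;X)-(1-\alpha)I(U;Z(s))$, the remaining entropy is at least $N\big(I(U;Y)-\alpha I(U;X)-(1-\alpha)I(U;Z(s))\big) - N\,o(1)$; collecting the penalties gives the claimed bound.

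The steps are essentially routine given the machinery already established (the genie computation and the continuity argument mirror those in the proofs of Lemmas~\ref{lemdist} and~\ref{lemcond}); the only point that requires genuine care is the first one — verifying rigorously that conditioning on $(D_b,R_b^{\textup{init}'})$ is equivalent to conditioning on $(\widetilde A_b^{1:N}[\mathcal H_{U|Y}],R_b^{\textup{init}'})$ and that $R_b^{\textup{init}'}$ can subsequently be discarded — together with matching the continuity-of-entropy bookkeeping to the exact form of $\delta^{(5)}(K,L)$.
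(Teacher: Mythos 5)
Your proof is correct and arrives at the same bound, but your handling of the conditioning on $(D_b, R_b^{\textup{init}'})$ is different from the paper's. The paper simply writes $H(\widetilde U_b \mid \widetilde Z_b(s)\widetilde X_b[\mathcal A_b] D_b R_b^{\textup{init}'}) = H(\widetilde U_b \mid \widetilde Z_b(s)\widetilde X_b[\mathcal A_b]) - I(D_b R_b^{\textup{init}'};\widetilde U_b\mid\widetilde Z_b(s)\widetilde X_b[\mathcal A_b])$, bounds the mutual information crudely by the total bit-length $L(|\mathcal H_{U|Y}| + |\mathcal H_{U|Y}\setminus\mathcal V_{U|Y}|)$, and then re-runs the continuity-of-entropy plus genie calculation of Lemma~\ref{lemcond} directly on $H(\widetilde U_b \mid \widetilde Z_b(s)\widetilde X_b[\mathcal A_b])$ (without the Fano step, since $\widetilde U_b$ is already the object of interest). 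You instead observe that $(D_b,R_b^{\textup{init}'})$ and $\big((\widetilde A^{1:K}_{b,l}[\mathcal H_{U|Y}])_{l\in\mathcal L},R_b^{\textup{init}'}\big)$ generate the same $\sigma$-algebra and that $R_b^{\textup{init}'}$ can then be dropped by independence, which turns the inequality into an exact identity before the continuity/genie step. This sharper bookkeeping saves the extraneous $L|\mathcal H_{U|Y}\setminus\mathcal V_{U|Y}| = o(N)$ term (invisible after the $o(1)$), at the cost of justifying the $\sigma$-algebra equivalence and tracking two alphabets rather than one in the continuity bound. Both routes use the same continuity-of-entropy estimate (matching $\delta^{(5)}(K,L)$ up to the $o(1)$ slack) and the same memorylessness/Markov-chain evaluation that pins the product-distribution entropy to $N\big(\alpha H(U|X)+(1-\alpha)H(U|Z(s))\big)$; the remaining $L|\mathcal H_{U|Y}|\to N H(U|Y)$ conversion uses \cite{Arikan10} in both.
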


\begin{proof}
We have 
\begin{align*}
&{H} \left(\widetilde{U}_{b} | \widetilde{Z}_{b} (s) \widetilde{X}_b[\mathcal{A}_b] D_b R_b^{\textup{init}'} \right) \\
& = {H} \left(\widetilde{U}_{b} | \widetilde{Z}_{b} (s) \widetilde{X}_b[\mathcal{A}_b]  \right) - {I} \left(D_b R_b^{\textup{init}'};\widetilde{U}_{b}| \widetilde{Z}_{b} (s) \widetilde{X}_b[\mathcal{A}_b]  \right) \\
& \geq    {H} \left(\widetilde{U}_{b} | \widetilde{Z}_{b} (s) \widetilde{X}_b[\mathcal{A}_b]  \right) -  L  (|\mathcal{H}_{U|Y}| +|\mathcal{H}_{U|Y} \backslash \mathcal{V}_{U|Y}| )\\
& \stackrel{(a)}{\geq}    {H} \left(\widetilde{U}_{b} | \widetilde{Z}_{b} (s) \widetilde{X}_b[\mathcal{A}_b]  \right) - N H(U|Y) - o(KL)\\
& \stackrel{(b)}{\geq} N(1-\alpha) H(U|Z(s) ) +N\alpha H(U|X) - N H(U|Y) \\
& \phantom{--}- N\delta^{(5)}(K,L),
\end{align*}
where $(a)$ holds because $\lim_{K \to \infty} |\mathcal{H}_{U|Y}| / K = H(U|Y)$~\cite{Arikan10}, and $\lim_{K \to \infty} |\mathcal{V}_{U|Y}| / K = H(U|Y)$ \cite{Honda13,Chou14rev}, $(b)$ holds similar to the proof of Lemma \ref{lemcond}.
\end{proof}

Next, we choose 

\begin{align*}
&l_{\textup{key}}' \triangleq N[I(U;Y) - \alpha I(U;X) - (1-\alpha) \smash{ \max_{s\in \mathfrak{S}}} I \left({U}; {Z}(s) \right) \\
& \phantom{----}   - \delta^{(1)}(K,L)- \delta^{(5)}(K,L) - \xi],
\end{align*} with $\xi>0$. By \eqref{eqvardK} and Lemma \ref{lementK}, we obtain for $b \in \mathcal{B}_0$,
\begin{align}
& \mathbb{V} ( \widetilde{p}_{\textup{Key}_{b}R_{b}^{\textup{init}} Z_b(s){X}_b[\mathcal{A}_b] D_b R^{\textup{init}'} }, {p}_{\mathcal{U}_{\textup{Key}}} \widetilde{p}_{R_{b}^{\textup{init}}Z_b(s) {X}_b[\mathcal{A}_b]D_b R_b^{\textup{init}'}}) \nonumber \\
&   \leq 2 \cdot 2^{-L^{\gamma}} + \sqrt{ 2^{ -N \xi}} . \label{eqvardistKey}
\end{align}

\begin{lem} \label{lemsecurkey}
We have for $L$ large enough
\begin{align*}
 I\left( \textup{Key}_b; \widetilde{Z}_{b}(s)\widetilde{X}_b[\mathcal{A}_b]D_b R^{\textup{init}}_b R^{\textup{init}'}_b\right)& \leq   \delta^{(4)}(K,L,\xi),\\
\log |\mathcal{K}_b| - H( \textup{Key}_b)& \leq  \delta^{(4)}(K,L,\xi),
\end{align*}
with $\mathcal{K}_b \triangleq \{0,1\}^{l'_{\textup{key}}}$ and $\delta^{(4)}(K,L,\xi)$ defined in Lemma~\ref{lemsecurblockwise}.
\end{lem}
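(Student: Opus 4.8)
The statement to prove is Lemma~\ref{lemsecurkey}, which converts the variational-distance bound \eqref{eqvardistKey} into (i) a bound on the leakage $I(\textup{Key}_b;\widetilde Z_b(s)\widetilde X_b[\mathcal A_b]D_bR_b^{\textup{init}}R_b^{\textup{init}'})$ and (ii) a bound on the uniformity deficit $\log|\mathcal K_b|-H(\textup{Key}_b)$. Both follow the same template already used to pass from \eqref{eqintermed} to \eqref{eqsmax} in the proof of Lemma~\ref{lemsecurblockwise}: apply the continuity-of-mutual-information bound of \cite{bookCsizar} with $f:x\mapsto x\log(2^N/x)$ to the variational distance between the true joint law and the relevant product law, then note that $f$ is increasing on $[0,1/2]$ so the bound $2\cdot2^{-L^\gamma}+\sqrt{2^{-N\xi}}$ can be substituted into $f$ once $L$ is large enough that this quantity is small. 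So the first thing I would do is identify the right product distribution to compare against and observe that \eqref{eqvardistKey} already gives it, up to absorbing the constant in front of $2^{-L^\gamma}$.

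\textbf{Step 1 (leakage bound).} Write $E_b\triangleq(\widetilde Z_b(s),\widetilde X_b[\mathcal A_b],D_b,R_b^{\textup{init}},R_b^{\textup{init}'})$ for the eavesdropper's total observation in block $b$. Note that $\textup{Key}_b=(R_b^{\textup{init}}\odot\widetilde U_b^{1:N})|_{1:l'_{\textup{key}}}$ is a deterministic function of $(R_b^{\textup{init}},\widetilde U_b^{1:N})$, hence of the data feeding $E_b$ plus $\widetilde U_b$; the relevant joint law is $\widetilde p_{\textup{Key}_b E_b}$. By \eqref{eqvardistKey},
\[
\mathbb{V}\!\left(\widetilde p_{\textup{Key}_b E_b},\, p_{\mathcal U_{\textup{Key}}}\widetilde p_{E_b}\right)\le 2\cdot2^{-L^\gamma}+\sqrt{2^{-N\xi}}.
\]
Since $p_{\mathcal U_{\textup{Key}}}\widetilde p_{E_b}$ makes $\textup{Key}_b$ and $E_b$ independent, the bound of \cite{bookCsizar} yields $I(\textup{Key}_b;E_b)\le f\!\left(2\cdot2^{-L^\gamma}+\sqrt{2^{-N\xi}}\right)$, which is $\le f\!\left(2^{1-L^\gamma}+\sqrt{2^{-N\xi}}\right)=\delta^{(4)}(K,L,\xi)$ for $L$ large enough that $f$ is increasing at that point. (Here one should be slightly careful: $2\cdot 2^{-L^\gamma}=2^{1-L^\gamma}$, so in fact the two expressions coincide and no loss is incurred — this is exactly the definition of $\delta^{(4)}$ in Lemma~\ref{lemsecurblockwise}.)

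\textbf{Step 2 (uniformity bound).} For the second inequality, marginalize the same variational bound down to $\textup{Key}_b$ alone: $\mathbb{V}(\widetilde p_{\textup{Key}_b},p_{\mathcal U_{\textup{Key}}})\le 2^{1-L^\gamma}+\sqrt{2^{-N\xi}}$. Then $\log|\mathcal K_b|-H(\textup{Key}_b)=\mathbb{D}(\widetilde p_{\textup{Key}_b}\,\|\,p_{\mathcal U_{\textup{Key}}})$, and a standard continuity estimate (again \cite{bookCsizar}, or the uniform-continuity bound $|H(p)-H(q)|\le \mathbb V(p,q)\log(|\mathcal X|/\mathbb V(p,q))$ applied with $|\mathcal K_b|\le 2^N$) gives $\log|\mathcal K_b|-H(\textup{Key}_b)\le f\!\left(2^{1-L^\gamma}+\sqrt{2^{-N\xi}}\right)=\delta^{(4)}(K,L,\xi)$, valid for $L$ large. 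Both claims then hold simultaneously for $L$ large enough.

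\textbf{Main obstacle.} There is no real analytic difficulty here; this lemma is a mechanical application of the already-invoked continuity bound to the already-established \eqref{eqvardistKey}. The one point requiring care is \emph{bookkeeping of the conditioning variables}: making sure that the tuple $E_b$ in \eqref{eqvardistKey} is literally the same as the eavesdropper's observation appearing in the lemma statement (in particular that $D_b$, $R_b^{\textup{init}'}$, and the public $R_b^{\textup{init}}$ are all included, since $D_b$ carries the one-time-padded source bits and must be accounted for), and that the deterministic map producing $\textup{Key}_b$ does not reintroduce dependence. Once that is checked, the proof is a two-line invocation of $f(\cdot)$ exactly as in Lemma~\ref{lemsecurblockwise}.
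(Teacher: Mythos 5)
Your proof is correct and follows the same approach as the paper: the first inequality is obtained by applying the bound $I \leq f(\mathbb{V}(\cdot,\cdot))$ from the proof of Lemma~\ref{lemsecurblockwise} with \eqref{eqvardistKey} in place of \eqref{eqintermed}, and the second inequality follows from the marginalized version of \eqref{eqvardistKey} together with the entropy-continuity bound of \cite[Lemma~2.7]{bookCsizar} (noting $|\mathcal{K}_b|\leq 2^N$). The bookkeeping point you flag --- that the eavesdropper's observation tuple in \eqref{eqvardistKey} matches the conditioning variables in the lemma statement --- is exactly the right thing to check, and it does.
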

\begin{proof}
	The first inequality holds as the proof of Lemma~\ref{lemsecurblockwise} by using \eqref{eqvardistKey} in place of \eqref{eqintermed}. The second inequality holds by~\cite[Lemma~2.7]{bookCsizar} and \eqref{eqvardistKey}.
\end{proof}

By mutual independence of all the $B_0$ blocks of the initialization phase, we obtain from Lemma~\ref{lemsecurkey} the following result.

\begin{lem} \label{lemsecurkeyjoint}
Define $\textup{Key} \triangleq (\textup{Key}_b)_{b \in \mathcal{B}_0}$ and $\mathcal{K} \triangleq \mathcal{K}_b^{B_0}$. Let $\widetilde{Z}^{\textup{init}}(s)$ denote all the knowledge of the eavesdropper related to the initialization phase, i.e., $\widetilde{Z}^{\textup{init}}(s) \triangleq (\widetilde{Z}_{b}(s),\widetilde{X}_b[\mathcal{A}_b],D_b, R^{\textup{init}}_b, R^{\textup{init}'}_b)_{b \in \mathcal{B}_0}$. Then, for $K$ large~enough
\begin{align*}
\max_{s\in \mathfrak{S}} \max_{\mathcal{A} \in \mathbb{A}} I\left( \textup{Key} ; \widetilde{Z}^{\textup{init}}(s)\right)& \leq B_0 \delta^{(4)}(K,L,\xi), \displaybreak[0]\\
\log |\mathcal{K}| - H( \textup{Key})& \leq  B_0 \delta^{(4)}(K,L,\xi).
\end{align*}
\end{lem}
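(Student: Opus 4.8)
The plan is to reduce the statement over the $B_0$ initialization blocks to the single-block bounds of Lemma~\ref{lemsecurkey}, exploiting that the $B_0$ blocks of the initialization phase are mutually independent. Fix $s\in\mathfrak{S}$ and $\mathcal{A}=(\mathcal{A}_b)_{b\in\mathcal{B}_0}\in\mathbb{A}$, and write $V_b\triangleq(\widetilde{Z}_b(s),\widetilde{X}_b[\mathcal{A}_b],D_b,R^{\textup{init}}_b,R^{\textup{init}'}_b)$, so that $\widetilde{Z}^{\textup{init}}(s)=(V_b)_{b\in\mathcal{B}_0}$. Since in Algorithm~\ref{alg:p3} Block~$b$ is built solely from block-$b$ randomness (the sequences $R^{\textup{init}}_b$, $R^{\textup{init}'}_b$ and the independently drawn $\widetilde{A}^{1:K}_{b,l}$), and the channel is memoryless, the pairs $(\textup{Key}_b,V_b)_{b\in\mathcal{B}_0}$ are mutually independent; in particular $\textup{Key}_b$ is a deterministic function of block-$b$ quantities only.

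First I would prove the leakage bound. By mutual independence of the pairs $(\textup{Key}_b,V_b)$ we get $H((V_b)_b)=\sum_b H(V_b)$, and by the chain rule together with the fact that conditioning on $(\textup{Key}_{1:b-1},(V_{b'})_{b'\neq b})$ leaves the law of $(\textup{Key}_b,V_b)$ unchanged, $H((V_b)_b\mid\textup{Key})=\sum_b H(V_b\mid\textup{Key}_b)$. Subtracting gives $I(\textup{Key};\widetilde{Z}^{\textup{init}}(s))=\sum_{b\in\mathcal{B}_0}I(\textup{Key}_b;V_b)\le B_0\,\delta^{(4)}(K,L,\xi)$ for $L$ large enough, by the first bound of Lemma~\ref{lemsecurkey}. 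As this holds for every fixed $(s,\mathcal{A})$ while the right-hand side does not depend on them, taking $\max_{s\in\mathfrak{S}}\max_{\mathcal{A}\in\mathbb{A}}$ yields the first displayed inequality.

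Next I would handle the uniformity deficit. Here $\mathcal{K}=\mathcal{K}_b^{B_0}$, so $\log|\mathcal{K}|=B_0\log|\mathcal{K}_b|$, whereas independence of the $\textup{Key}_b$ gives $H(\textup{Key})=\sum_b H(\textup{Key}_b)$; hence $\log|\mathcal{K}|-H(\textup{Key})=\sum_{b\in\mathcal{B}_0}\big(\log|\mathcal{K}_b|-H(\textup{Key}_b)\big)\le B_0\,\delta^{(4)}(K,L,\xi)$ by the second bound of Lemma~\ref{lemsecurkey}, again for $L$ large enough; the requirement ``$K$ large enough'' is inherited directly from that lemma.

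I do not anticipate a real obstacle: all the analytical work is already carried by the single-block Lemma~\ref{lemsecurkey}, and the multi-block step is just bookkeeping with independence. The one point that requires care is the order of operations — one must fix the eavesdropper's choices $(s,\mathcal{A})$ \emph{first} and then invoke independence across blocks, since the adaptive selection of $(\mathcal{A}_b)_{b\in\mathcal{B}_0}$ (and, where relevant, of per-block channel states) could otherwise appear to couple the blocks; it does not, because each block of Algorithm~\ref{alg:p3} is transmitted using fresh, independent randomization sequences.
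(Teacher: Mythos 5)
Your proposal is correct and follows the same route as the paper: the paper dispatches this lemma in one line by invoking the mutual independence of the $B_0$ initialization blocks and Lemma~\ref{lemsecurkey}, which is precisely the decomposition you carry out in detail (tensorization of mutual information and of the uniformity deficit over independent blocks, after fixing $(s,\mathcal{A})$). The only cosmetic mismatch is that you write ``for $L$ large enough'' (inherited verbatim from Lemma~\ref{lemsecurkey}) where the lemma statement says ``for $K$ large enough''; this is a discrepancy present in the paper itself and does not affect the argument.
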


\subsection{Impact of the initialization phase on the overall communication rate} \label{sec:impactk}

The initialization phase requires $\rho NB_0$ channel uses, for some fixed $\rho \in \mathbb{N}$, to generate the secret key and transmit $(D_b,R_b^{\textup{init}},R_b^{\textup{init}'})_{b \in \mathcal{B}_0}$.
We choose $B_0$ such that  
 \begin{align*}
B_0 =\left\lceil \frac{  l_{\textup{OTP}}}{l_{key}'}\right\rceil ,
 \end{align*}
 where $l_{\textup{OTP}}= o(NB)$ represents the key length necessary to perform the one-time pad that appears in Algorithms \ref{alg:p3}, \ref{alg:p4}.
Hence, the impact of the initialization phase on the overall communication  is  
\begin{align} \label{eqimpact}
 \rho N B_0  
& < \rho N \left(1+ \frac{ l_{\textup{OTP}} }{ l_{key}'} \right) =  \rho \frac{ o(NB) }{ l_{key}'/N} = o(NB).
 \end{align}

We deduce from \eqref{eqimpact} that the communication rate of the coding scheme of Section \ref{secCS} and the initialization phase (considered jointly) is the same as the communication rate of the coding scheme of Section \ref{secCS} alone.

\subsection{Security of Algorithms \ref{alg:p1}, \ref{alg:p} and the initialization phase when considered jointly} \label{secjoint}

Let $M_{\textup{OTP}}$ be the sequence that needs to be secretly transmitted with a one-time pad in Algorithm~\ref{alg:p1}. Let $C \triangleq M_{\textup{OTP}} \oplus \textup{Key}$ be the encrypted version of $M_{\textup{OTP}}$ using $\textup{Key}$, obtained in the initialization phase. Let $\widetilde{Z}_{\mathcal{B}}(s) \triangleq (\widetilde{Z}_{1:B}(s),\widetilde{X}_{1:B}[\mathcal{A}], R_{1:B})$ denote all the observations of the eavesdropper related to the coding scheme of Section~\ref{secCS}, excluding $C$. Let $Z^{\textup{init}}(s)$, defined as in Lemma \ref{lemsecurkeyjoint}, denote all the observations of the eavesdropper related to the initialization phase.
The following lemma shows that strong secrecy holds for the coding scheme of Section \ref{secCS} and the initialization phase considered jointly.
\begin{lem}
We have
\begin{align*}
\max_{s\in \mathfrak{S},\mathcal{A} \in \mathbb{A}}\! I(M_{1:B} ;C \widetilde{Z}_{\mathcal{B}}(s) \widetilde{Z}^{\textup{init}}(s)) \leq  2(B\!+\!B_0) \delta^{(4)}(K,L,\xi),
\end{align*}
where $\delta^{(4)}(K,L,\xi)$ is defined in Lemma \ref{lemsecurblockwise}.
\end{lem}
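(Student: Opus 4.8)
The plan is to apply the chain rule once to peel off the ciphertext $C$, reduce the "non-ciphertext" part to Lemma~\ref{lemjointsec}, and treat $C$ with a one-time-pad argument fed by Lemma~\ref{lemsecurkeyjoint}. Concretely, write
\begin{align*}
I(M_{1:B};C\widetilde{Z}_{\mathcal{B}}(s)\widetilde{Z}^{\textup{init}}(s))
= I(M_{1:B};\widetilde{Z}_{\mathcal{B}}(s)\widetilde{Z}^{\textup{init}}(s))
+ I(M_{1:B};C\,|\,\widetilde{Z}_{\mathcal{B}}(s)\widetilde{Z}^{\textup{init}}(s)).
\end{align*}
For the first term, note that the initialization phase is driven only by $(R^{\textup{init}}_b,R^{\textup{init}'}_b)_{b\in\mathcal{B}_0}$ and the channel noise on its own (disjoint) channel uses of a memoryless channel; hence $\widetilde{Z}^{\textup{init}}(s)$ is independent of $(M_{1:B},\widetilde{Z}_{\mathcal{B}}(s))$, which gives $I(M_{1:B};\widetilde{Z}_{\mathcal{B}}(s)\widetilde{Z}^{\textup{init}}(s)) = I(M_{1:B};\widetilde{Z}_{\mathcal{B}}(s)) \le 2B\,\delta^{(4)}(K,L,\xi)$ by Lemma~\ref{lemjointsec}, since $\widetilde{Z}_{\mathcal{B}}(s)=(\widetilde{Z}_{1:B}(s),\widetilde{X}_{1:B}[\mathcal{A}],R_{1:B})$ is exactly the object bounded there.

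For the second term, let $\textup{Key}'$ denote the $l_{\textup{OTP}}$ bits of $\textup{Key}$ actually XOR-ed onto $M_{\textup{OTP}}$, so that $C=M_{\textup{OTP}}\oplus\textup{Key}'$ and $|C|=l_{\textup{OTP}}$. Then
\begin{align*}
I(M_{1:B};C\,|\,\widetilde{Z}_{\mathcal{B}}(s)\widetilde{Z}^{\textup{init}}(s))
&= H(C\,|\,\widetilde{Z}_{\mathcal{B}}(s)\widetilde{Z}^{\textup{init}}(s)) - H(C\,|\,M_{1:B}\widetilde{Z}_{\mathcal{B}}(s)\widetilde{Z}^{\textup{init}}(s))\\
&\le l_{\textup{OTP}} - H(C\,|\,M_{1:B}M_{\textup{OTP}}\widetilde{Z}_{\mathcal{B}}(s)\widetilde{Z}^{\textup{init}}(s))
= l_{\textup{OTP}} - H(\textup{Key}'\,|\,\widetilde{Z}^{\textup{init}}(s)),
\end{align*}
where the last equality uses that $C$ and $\textup{Key}'$ determine each other given $M_{\textup{OTP}}$, together with the conditional independence $(\textup{Key},\widetilde{Z}^{\textup{init}}(s))\perp(M_{1:B},M_{\textup{OTP}},\widetilde{Z}_{\mathcal{B}}(s))$ (same disjoint-resources argument, noting $M_{\textup{OTP}}$ is a function of the Section~\ref{secCS} variables only). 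Using the data-processing inequality and $\textup{Key}'$ being a function of $\textup{Key}$, $H(\textup{Key}'\,|\,\widetilde{Z}^{\textup{init}}(s)) \ge H(\textup{Key}') - I(\textup{Key};\widetilde{Z}^{\textup{init}}(s))$, and $H(\textup{Key}') \ge H(\textup{Key}) - (\log|\mathcal{K}| - l_{\textup{OTP}}) \ge l_{\textup{OTP}} - B_0\delta^{(4)}(K,L,\xi)$ by the second bound of Lemma~\ref{lemsecurkeyjoint}; combined with $I(\textup{Key};\widetilde{Z}^{\textup{init}}(s)) \le B_0\delta^{(4)}(K,L,\xi)$ from its first bound, this gives $I(M_{1:B};C\,|\,\widetilde{Z}_{\mathcal{B}}(s)\widetilde{Z}^{\textup{init}}(s)) \le 2B_0\delta^{(4)}(K,L,\xi)$. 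Adding the two terms and taking the maximum over $s\in\mathfrak{S}$ and $\mathcal{A}\in\mathbb{A}$ (already present in Lemmas~\ref{lemjointsec} and~\ref{lemsecurkeyjoint}) yields the claimed bound $2(B+B_0)\delta^{(4)}(K,L,\xi)$.

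I do not expect a genuine obstacle here: the work is entirely in the bookkeeping of the conditional-independence structure — making precise that $\widetilde{Z}_{\mathcal{B}}(s)$ excludes $C$, that $M_{\textup{OTP}}$ depends only on the Section~\ref{secCS} randomness, and that the initialization-phase observations are independent of everything in Section~\ref{secCS}. The one mildly delicate point is that only $l_{\textup{OTP}}\le\log|\mathcal{K}|=B_0\,l_{\textup{key}}'$ bits of $\textup{Key}$ are consumed by the one-time pad, which forces us to combine both estimates of Lemma~\ref{lemsecurkeyjoint} (residual non-uniformity of $\textup{Key}$ and its leakage) rather than invoke a single one; this has no asymptotic cost.
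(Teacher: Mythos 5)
Your proposal is correct and follows essentially the same route as the paper: split off the ciphertext with the chain rule, drop $\widetilde{Z}^{\textup{init}}(s)$ from the first term by independence of the initialization and main-phase resources, and bound the second term using both estimates of Lemma~\ref{lemsecurkeyjoint}. The only presentational difference is that you bound $I(M_{1:B};C\,|\,\cdot)$ directly via $l_{\textup{OTP}}-H(\textup{Key}'\,|\,\widetilde{Z}^{\textup{init}}(s))$ while the paper splits into $I(C;M_{1:B}\widetilde{Z}_{\mathcal{B}}(s))\le \log|\mathcal{K}|-H(\textup{Key})$ and $I(C;\widetilde{Z}^{\textup{init}}(s)\,|\,M_{1:B}\widetilde{Z}_{\mathcal{B}}(s))\le I(\textup{Key};\widetilde{Z}^{\textup{init}}(s))$; your explicit use of $\textup{Key}'$ (the $l_{\textup{OTP}}$ bits actually consumed) is a slightly cleaner bookkeeping of the length mismatch between $M_{\textup{OTP}}$ and $\textup{Key}$ that the paper handles implicitly, but the information-theoretic content and the final bound are the same.
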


\begin{proof}
We have
\begin{align}
&I(M_{1:B} ; C \widetilde{Z}_{\mathcal{B}}(s) \widetilde{Z}^{\textup{init}}(s)) \nonumber \\ \nonumber
& \stackrel{(a)}{=} I(M_{1:B} ; \widetilde{Z}_{\mathcal{B}}(s))  + I(M_{1:B};C | \widetilde{Z}_{\mathcal{B}}(s) \widetilde{Z}^{\textup{init}}(s)) \displaybreak[0]\\ \nonumber
& \leq I(M_{1:B} ; \widetilde{Z}_{\mathcal{B}}(s))  + I(M_{1:B} \widetilde{Z}_{\mathcal{B}}(s) \widetilde{Z}^{\textup{init}}(s) ;C ) \displaybreak[0]\\
& = I(M_{1:B} ; \widetilde{Z}_{\mathcal{B}}(s) )  + I( C; M_{1:B}\widetilde{Z}_{\mathcal{B}}(s)) \nonumber \displaybreak[0]\\
& \phantom{--}+ I( C;  \widetilde{Z}^{\textup{init}}(s) |M_{1:B} \widetilde{Z}_{\mathcal{B}}(s) ), \label{eqmotp1}
\end{align}
where $(a)$ holds by the chain rule and because $I(M_{1:B} ;  \widetilde{Z}^{\textup{init}}(s) | \widetilde{Z}_{\mathcal{B}}(s)) \leq I(M_{1:B} \widetilde{Z}_{\mathcal{B}}(s) ;  \widetilde{Z}^{\textup{init}}(s) ) = 0$. Next, we have 
\begin{align}
& I( C; M_{1:B} \widetilde{Z}_{\mathcal{B}}(s) )  \nonumber  \displaybreak[0] \\
 & \leq \log |\mathcal{K}| - H(C| M_{1:B} \widetilde{Z}_{\mathcal{B}}(s)  )\nonumber  \displaybreak[0] \\ \nonumber
 & \leq \log |\mathcal{K}| -  H(\textup{Key} \oplus  M_{\textup{OTP}} |  M_{\textup{OTP}}M_{1:B} \widetilde{Z}_{\mathcal{B}}(s) ) \displaybreak[0] \\  \nonumber
  & = \log |\mathcal{K}| -  H(\textup{Key}  |  M_{\textup{OTP} } M_{1:B} \widetilde{Z}_{\mathcal{B}}(s)   ) \displaybreak[0] \\   
  & = \log |\mathcal{K}| -  H(\textup{Key} ). \label{eqmotp2}
\end{align}
We also have
\begin{align}
&  I( C;  \widetilde{Z}^{\textup{init}}(s)|M_{1:B} \widetilde{Z}_{\mathcal{B}}(s) ) \nonumber \displaybreak[0]\\
  & \leq   I( C  M_{\textup{OTP}};  \widetilde{Z}^{\textup{init}}(s)|M_{1:B} \widetilde{Z}_{\mathcal{B}}(s) ) \nonumber \\ \nonumber
& =    I( \textup{Key}  M_{\textup{OTP}};  \widetilde{Z}^{\textup{init}}(s)|M_{1:B} \widetilde{Z}_{\mathcal{B}}(s) ) \\ \nonumber
& \stackrel{(b)}{=} I( \textup{Key} ;  \widetilde{Z}^{\textup{init}}(s)| M_{\textup{OTP}}M_{1:B} \widetilde{Z}_{\mathcal{B}}(s)   ) \\ \nonumber
& \leq  I( \textup{Key} M_{\textup{OTP}}M_{1:B} \widetilde{Z}_{\mathcal{B}}(s);  \widetilde{Z}^{\textup{init}}(s)) \\
& \stackrel{(c)}{=}  I( \textup{Key} ;  \widetilde{Z}^{\textup{init}}(s)) , \label{eqmotp3}
\end{align}
where $(b)$ holds by the chain rule and because $I(  M_{\textup{OTP}};  \widetilde{Z}^{\textup{init}}(s)| M_{1:B} \widetilde{Z}_{\mathcal{B}}(s))  \leq  I(  M_{\textup{OTP}} M_{1:B} \widetilde{Z}_{\mathcal{B}}(s);  \widetilde{Z}^{\textup{init}}(s))  = 0$, $(c)$ holds by the chain rule and because $I(  M_{\textup{OTP}}M_{1:B} \widetilde{Z}_{\mathcal{B}}(s);  \widetilde{Z}^{\textup{init}}(s) | \textup{Key}) \leq I(  M_{\textup{OTP}}M_{1:B} \widetilde{Z}_{\mathcal{B}}(s);  \widetilde{Z}^{\textup{init}}(s)  \textup{Key}) = 0$. By combining~\eqref{eqmotp1}, \eqref{eqmotp2},  and \eqref{eqmotp3}, we obtain
$
 I(M_{1:B} ;C \widetilde{Z}_{\mathcal{B}}(s) \widetilde{Z}^{\textup{init}}(s)) \leq  I(M_{1:B} ; \widetilde{Z}_{\mathcal{B}}(s) ) + I(\textup{Key}; \widetilde{Z}^{\textup{init}}(s)) + \log |\mathcal{K}| - H(\textup{Key}). 
$ 
 Finally, we obtain the lemma with Lemmas \ref{lemjointsec} and~\ref{lemsecurkeyjoint}.
\end{proof}

\section{Proof of Theorem  \ref{thmain2}}  \label{sec:proof2}
We assume in the following that there exists a best channel for the eavesdropper \cite{molavianjazi2009arbitrary}, i.e., $ \exists s^*\in \overline{\mathfrak{S}}, \forall s \in \mathfrak{S}$, $X-Z (s^*) -Z(s)$. Similar to the proof of Theorem~\ref{thmain1}, we proceed in two steps. We first ignore the initialization phase and assume that the legitimate users have access to a secret key to perform the one-time pad in Algorithms~\ref{alg:p1},~\ref{alg:p}. We only show blockwise security as the remainder of the proof is similar to the proof in Section \ref{sec:proof1a}. We also omit the second step that consists in analyzing the initialization phase jointly with Algorithms~\ref{alg:p1},~\ref{alg:p}, as it is similar to the analysis in Section \ref{sec:initphase}.

\subsection{Blockwise security analysis}
We adopt the same notation as in Section \ref{sec:proof1a}. We have the following inequality, whose proof is identical to the proof of Lemma \ref{lemdist}.
For $b\in \mathcal{B}$, we have
\begin{align}
 \mathbb{D}( q_{U^{1:N}X^{1:N}Y^{1:N}Z^{1:N}(\mathbf{s}_b)} \lVert \widetilde{p}_{U_{b}^{1:N}X_{b}^{1:N}Y_{b}^{1:N}Z_{b}^{1:N}(\mathbf{s}_b)} )
 \leq 2N \delta_K, \label{lemdist2}
\end{align}
where we have defined $ q_{U^{1:N}X^{1:N}Y^{1:N}Z^{1:N}(\mathbf{s}_b)} \triangleq \prod_{i=1}^Nq_{UXYZ(s_{b,i})}$.
Next, similar to Lemma \ref{lemampapp} using \eqref{lemdist2} in place of Lemma \ref{lemdist}, we have for any $\gamma \in]0,1[$
\begin{align}
& \mathbb{V} ( \widetilde{p}_{\bar{M}_{b} R_{b} Z_b(\mathbf{s}_b) {X}_b[\mathcal{A}_b] }, \widetilde{p}_{\bar{M}_{b}} \widetilde{p}_{R_{b}Z_b (\mathbf{s}_b){X}_b[\mathcal{A}_b] }) \nonumber \\
& \leq 2^{1-L^{\gamma}} + 2 \sqrt{ 2^{ r -  {H} \left(T_{b}| \widetilde{Z}_{b} (\mathbf{s}_b) \widetilde{X}_b[\mathcal{A}_b]\right) + N \delta^{(1)}(K,L)  }} , \label{eqam2}
\end{align}
where $\delta^{(1)}(K,L)$ is defined in Lemma \ref{lemampapp}.
We then have
\begin{align}
&{H} \left(T_{b}| \widetilde{Z}_{b} (\mathbf{s}_b) \widetilde{X}_b[\mathcal{A}_b]\right) \nonumber \\ \nonumber
& \stackrel{(a)}{\geq} {H} \left( U_b|{Z}_{b} (\mathbf{s}_b) {X}_b[\mathcal{A}_b]  \right) -  N \delta^{(2)}(K,L)  \displaybreak[0] \\  \nonumber
&\geq {H} \left( U_b| {Z}_{b}(s^*){Z}_{b} (\mathbf{s}_b) {X}_b[\mathcal{A}_b]  \right) -  N \delta^{(2)}(K,L) \displaybreak[0] \\ \nonumber
&\stackrel{(b)}{\geq} {H} \left( U_b| {Z}_{b}(s^*) {X}_b[\mathcal{A}_b]  \right)  -  N \delta^{(2)}(K,L) \displaybreak[0] \\ 
& \stackrel{(c)}{=} N(1-\alpha) H(U|Z(s^*) ) +N\alpha H(U|X)  -  N \delta^{(2)}(K,L),\label{eqam2b}
\end{align}
where $(a)$ holds as in the proof of Lemma \ref{lemcond} with $ \delta^{(2)}(K,L)$ defined in Lemma \ref{lemcond}, $(b)$ holds because $ ({U}_b,{X}_b) - Z_b(s^*) - Z_b(\mathbf{s}_b) $ forms a Markov chain, $(c)$ holds as in the proof of Lemma \ref{lemcond}. Finally, from~\eqref{eqam2} and \eqref{eqam2b}, we can conclude as in Section \ref{sec:secindiv} that blockwise security holds.

\section{Extension to uncertainty on the main channel} \label{secext}
Assume now that uncertainty on the main channel also holds according to a compound model, i.e.,  the channel of Section \ref{sec:main} is now defined by the conditional probabilities $(p_{Y(t)Z(s)|X})_{s\in\mathfrak{S},t\in\mathfrak{T}}$, where $\mathfrak{T}$ is a finite set. Assume also that for all channel uses $s\in\mathfrak{S}$ and $t\in\mathfrak{T}$ are fixed.   We extend Theorem \ref{thmain1} to this setting in Section \ref{sectionproofth8} using new polar coding schemes for source coding with compound side information and for compound channel coding described in Sections \ref{secscci} and \ref{seccompoundcc}, respectively.

\subsection{Source coding with compound side information} \label{secscci}

 \cite{ye2014universal} provides a polar coding scheme with optimal rate for lossless source coding with compound side information. However, for our purposes, we modify the coding scheme in~\cite{ye2014universal} to ensure near uniformity of the encoder output.

Consider a compound source $\left((\mathcal{U} \times \mathcal{Y}_j)_{j \in \mathcal{J}},(p_{UY_j})_{j \in \mathcal{J}} \right)$,  where $\mathcal{U} \triangleq \{ 0,1\}$ and $\mathcal{J} \triangleq \llbracket 1 ,J \rrbracket$. Let $(t_j)_{j \in \mathcal{J}} \in \mathbb{N}^J$ with $t_1 \triangleq 1$ and define for $j \in \mathcal{J}$, $T_j \triangleq  \prod_{i=1}^{j} t_i$ and $N_j \triangleq  N T_j$, where $N$ is a power of two. Consider for $j\in\mathcal{J}$, $(U^{1:N_J},Y_j^{1:N_J})= (U^{1:N}_{t},(Y_j)^{1:N}_{t})_{t \in \llbracket 1 ,T_J  \rrbracket }$ distributed according to the product distribution $p_{U^{1:N_J}Y_j^{1:N_J}}$. For $j_0 \in \mathcal{J}$, we also use the notation $Y_{j_0}^{1:N_{j}} = (Y_{j_0,t}^{1:N_{j-1}})_{t \in \llbracket 1 , t_j \rrbracket}$,  $j\in \llbracket 2 , J \rrbracket$, to indicate that $Y_{j_0}^{1:N_{j}}$ is made of $t_j$ blocks of length $N_{j-1}$.
Define for $t \in \llbracket 1 ,T_J  \rrbracket$, $A_t^{1:N} \triangleq U_t^{1:N} G_N$ and for $\delta_N \triangleq 2^{-N^{\beta}}$, $\beta \in ]0,1/2[$, and $j \in \mathcal{J}$ define the sets
\begin{align*}
\mathcal{V}_U &\triangleq \left\{ i \in \llbracket 1, N \rrbracket : H( A_1^i | A_1^{1:i-1}) >  1 - \delta_N \right\},\\
\mathcal{H}_{U|Y_j} &\triangleq \left\{ i \in \llbracket 1, N \rrbracket : H( A_1^i | A_1^{1:i-1} (Y_j)_1^{1:N}) > \delta_N \right\}, \\
\mathcal{V}_{U|Y_j} &\triangleq \left\{ i \in \llbracket 1, N \rrbracket : H( A_1^i | A_1^{1:i-1} (Y_j)_1^{1:N}) > 1-\delta_N \right\}.
\end{align*}   
We also use the notation $U^{1:N_{j}} = (U_{t}^{1:N_{j-1}})_{t \in \llbracket 1 , t_j \rrbracket}$,  $j \in \llbracket 2 , J  \rrbracket$, to indicate that $U^{1:N_{j}}$ is made of $t_j$ blocks of length $N_{j-1}$.
The encoding is described in Algorithm \ref{alg:scsic}. By the successive cancellation decoder for polar source coding with side information \cite{Arikan10},  Decoder $1$ with  $[e^{(1)} ( U^{1:N_1}),E'_1] = A_1^{1:N} [\mathcal{H}_{U|Y_1}]$ and $Y_1^{1:N}$ can compute a good estimate $\hat{U}_1^{1:N_1}$ of $U^{1:N_1}$. Now, assume that when $L \in \llbracket 1,J-1 \rrbracket$, for any Decoder $l  \in \llbracket 1, L \rrbracket$, there is a function $g^{(L)}_l$ such that $\hat{U}^{1:N_{L}}_{l} \triangleq g^{(L)}_l (e^{(L)} ( U^{1:N_L}),E'_l, Y_l^{1:N_L}) $ is a good estimate of $U^{1:N_L}$. Then, Algorithms \ref{alg:scsicdec} and \ref{alg:scsicdecL} show that any decoder $l \in \llbracket 1 , L+1 \rrbracket$  can form a good estimate $\hat{U}^{1:N_{L+1}}_{l}$ of $U^{1:N_{L+1}}$ from $[e^{(L+1)} ( U^{1:N_{L+1}}),E'_l, Y_l^{1:N_{L+1}}]$.

The encoding and decoding algorithms for source coding with compound side information are described in Algorithms~\ref{alg:scsic}, \ref{alg:scsicdec}, \ref{alg:scsicdecL}, and yield the following result.

\begin{thm}
	The algorithms \ref{alg:scsic}, \ref{alg:scsicdec}, \ref{alg:scsicdecL} perform source coding with compound side information on sequences with length $T_J N$ with optimal rate $\max_{j\in\mathcal{J}}H(U|Y_j)$ and encoding/decoding complexity $T_J N O(\log N)$.
\end{thm}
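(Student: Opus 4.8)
The plan is to verify three things in sequence: (i) the encoder output has the claimed rate and is near-uniform; (ii) every decoder $l \in \mathcal{J}$ can reconstruct its source block with vanishing error probability; (iii) the complexity bound. For the rate, note that the encoder transmits, for each $j \in \mathcal{J}$, the coordinates of $A_1^{1:N}$ (and the block-Markov copies across the $T_j$ sub-blocks) indexed by a set whose size is controlled by $|\mathcal{H}_{U|Y_j}|$. By the polarization results of \cite{Arikan10} we have $\lim_{N\to\infty}|\mathcal{H}_{U|Y_j}|/N = H(U|Y_j)$ for each $j$, and $\lim_{N\to\infty}|\mathcal{V}_{U|Y_j}|/N = H(U|Y_j)$ by \cite{Honda13,Chou14rev}, so the ``new'' information injected per source symbol across all blocks telescopes to $\max_{j\in\mathcal{J}}H(U|Y_j)$; the base-layer set $\mathcal{H}_{U|Y_1}$ contributes $H(U|Y_1)$ and each refinement layer $j\geq 2$ contributes the incremental amount $H(U|Y_j)-H(U|Y_{j-1})$ when the $j$'s are ordered so that these increments are nonnegative (or, more carefully, one takes the running maximum). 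Near-uniformity of the encoder output follows exactly as in Remark~\ref{rem1}: although $A_1^{1:N}[\mathcal{H}_{U|Y_j}]$ is not uniform, after the layered one-time-pad/XOR structure of Algorithm~\ref{alg:scsic} (which is precisely the modification of \cite{ye2014universal} advertised in Section~\ref{secscci}) the transmitted bundle is uniform on a set of the right cardinality, so its distribution is within $O(N\delta_N)$ of uniform by the same $\mathbb{D}$-to-$\mathbb{V}$ bound used in Appendix~\ref{App_lemdist}.

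For reliability I would argue by induction on the layer index $L$, exactly along the lines sketched in the paragraph preceding the theorem statement. The base case $L=1$: Decoder~$1$, given $A_1^{1:N}[\mathcal{H}_{U|Y_1}]$ and $Y_1^{1:N}$, runs the successive-cancellation source decoder of \cite{Arikan10} and recovers $\widehat U_1^{1:N_1}$ with $\mathbb{P}[\widehat U_1^{1:N_1}\neq U^{1:N_1}] \leq N\delta_N$ by Lemma~\ref{lemscs} (applied to each of the $T_1=1$ constituent length-$N$ blocks). For the inductive step, suppose every decoder $l\in\llbracket 1,L\rrbracket$ has a map $g_l^{(L)}$ producing a good estimate of $U^{1:N_L}$ from $(e^{(L)}(U^{1:N_L}),E'_l,Y_l^{1:N_L})$. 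Then Algorithms~\ref{alg:scsicdec} and~\ref{alg:scsicdecL} show that decoder $l\in\llbracket 1,L+1\rrbracket$, using the new layer-$(L+1)$ parity information together with $t_{L+1}$ invocations of the inductive hypothesis and $t_{L+1}$ invocations of the successive-cancellation decoder, recovers $U^{1:N_{L+1}}$; the error probability is bounded by a union bound over these $O(t_{L+1})$ sub-decodings, each contributing at most $N\delta_N$, hence is $\leq N_{L+1}\delta_N$. Iterating to $L+1=J$ gives each decoder an estimate of $U^{1:N_J}$ with error probability $\leq N_J\delta_N = T_J N\, 2^{-N^\beta} \to 0$.

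The complexity claim is the easiest piece: the encoder and each decoder perform $O(T_J)$ calls to the Arıkan transform $G_N$ and to the successive-cancellation source decoder, each of which costs $O(N\log N)$, plus $O(T_JN)$ bit-XOR operations for the layering; total $O(T_J N\log N)$, which is $T_JN\,O(\log N)$ as stated. The main obstacle I anticipate is not any single step but rather bookkeeping the layered construction correctly: one must check that the parity/XOR structure in Algorithm~\ref{alg:scsic} is simultaneously (a) decodable at layer $l$ given only $Y_l$ and the accumulated side information $E'_l$, and (b) compatible with near-uniformity of the transmitted symbols, and (c) non-redundant so the rate does not inflate beyond $\max_j H(U|Y_j)$. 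The delicate point is that the sets $\mathcal{H}_{U|Y_j}$ are generally \emph{not} nested across $j$, so the incremental information sent at layer $j$ must be indexed by $\mathcal{H}_{U|Y_j}\setminus(\text{what earlier layers already revealed})$, and one needs $\lim_N |\mathcal{H}_{U|Y_j}\cap \mathcal{H}_{U|Y_{j'}}|/N$-type identities — or, equivalently, the running-maximum argument — to see the total rate collapses to $\max_j H(U|Y_j)$; this is where the care in adapting \cite{ye2014universal} really lies, and where I would spend the bulk of the written proof.
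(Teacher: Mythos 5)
The paper's own proof is far shorter than your proposal: it notes that Algorithms~\ref{alg:scsicdec}--\ref{alg:scsicdecL} are exactly the decoder of~\cite{ye2014universal}, invokes that reference for both vanishing error probability and the rate $\max_j H(U|Y_j)$, and then proves only the one new fact --- that the split-off piece $E'$ has negligible length, $|E'| = T_J\sum_{j\in\mathcal{J}}|\mathcal{H}_{U|Y_j}\setminus\mathcal{V}_{U|Y_j}| = o(N_J)$ --- so that splitting it out changes neither the rate nor the decodability and lets $E'$ be one-time padded for near-uniformity of the full output.

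Your from-scratch rate argument has a genuine gap. You describe a successive-refinement mechanism in which the base layer contributes $H(U|Y_1)$, layer $j$ contributes the increment $H(U|Y_j)-H(U|Y_{j-1})$, and intersection identities of the form $|\mathcal{H}_{U|Y_j}\cap\mathcal{H}_{U|Y_{j'}}|/N$ are needed because the high-entropy sets are not nested. That is not how Algorithm~\ref{alg:scsic} works: nothing is indexed by a set difference, and no ordering of the $j$'s or overlap identities enter. At layer $j+1$ the encoder sends, for every one of the $T_j$ length-$N$ sub-blocks, the \emph{entire} index set $\mathcal{V}_{U|Y_{j+1}}$, XORed in a block-Markov chain with the adjacent sub-block's layer-$j$ encoder output, giving $|e^{(j+1)}|=|e^{(j)}|+(t_{j+1}-1)\max(|e^{(j)}|,|f^{(j)}|)+|f^{(j)}|$. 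Since $|f^{(j)}|/N_j\to H(U|Y_{j+1})$ and the two edge terms are an $O(1/t_{j+1})$ fraction of the total, $|e^{(j+1)}|/N_{j+1}\to\max\bigl(|e^{(j)}|/N_j,\,H(U|Y_{j+1})\bigr)$ as $t_{j+1}\to\infty$, and by induction the per-symbol rate converges to the running maximum $\max_{j'\leq j}H(U|Y_{j'})$. The non-nestedness of the $\mathcal{H}_{U|Y_j}$'s is a non-issue: the chaining deliberately sidesteps it rather than resolving it.

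Your near-uniformity explanation is likewise misattributed. You credit ``the layered one-time-pad/XOR structure of Algorithm~\ref{alg:scsic}'' as the modification of~\cite{ye2014universal}, but that XOR chaining is already present in~\cite{ye2014universal} and does not by itself produce uniform output --- the operands are dependent and non-uniform. The actual modification, and the source of near-uniformity, is the split of the encoder output into $E$, built entirely from positions in $\mathcal{V}_{U|Y_j}\subset\mathcal{V}_U$ and hence nearly uniform, and $E'$, which collects the $o(N)$ non-uniform positions in $\mathcal{H}_{U|Y_j}\setminus\mathcal{V}_{U|Y_j}$ and must be padded with an \emph{external} shared uniform key. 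Your reliability induction and complexity count are essentially right and mirror what the paper gestures at in the paragraph preceding the theorem.
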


Note that the encoding is different than in \cite{ye2014universal} as the encoder output is split into $E$ and $E'$, however, the decoder is equivalent to the one in \cite{ye2014universal}. Consequently, the probability of error in the reconstruction of the source asymptotically vanishes by \cite{ye2014universal}. Additionally, remark that the rate of $E'$ is negligible compared to $N_J$ because for any $j \in \mathcal{J}$, $|\mathcal{H}_{U|Y_j} \backslash \mathcal{V}_{U|Y_j}| = |\mathcal{H}_{U|Y_j}|-| \mathcal{V}_{U|Y_j}|= o(N)$ by \cite{Arikan10} and \cite[Lemma 7]{Chou16}. Hence, the coding scheme rate is the same as in \cite{ye2014universal} but now can also be used to ensure a near uniform encoder output  by one-time padding $E'$ with a sequence of $|E'|$ uniformly distributed bits shared by the encoder and decoder.  Note that it generalizes the polar coding schemes for source coding with nearly uniform output \cite{chou2013data} in \cite{chou2015coding,chou2018universal}.  
\begin{algorithm}
  \caption{Encoding}
  \label{alg:scsic}
  \begin{algorithmic}[1] 
\REQUIRE   Assume that the sequence to compress is $U^{1:N_{J}}$
              \STATE Define the function $e^{(1)} : U^{1:N_1} \mapsto  A^{1:N}_1 [ \mathcal{V}_{U|Y_1}]$
              \FOR{$j=1$ to $J-1$}
                  \STATE Define $f^{(j)} : U^{1:N_{j}} \mapsto  (A^{1:N}_t[ \mathcal{V}_{U|Y_{j+1}}])_{t \in \llbracket 1, T_{j} \rrbracket }$
	\STATE				Define the function $e^{(j+1)}$ which maps  $U^{1:N_{j+1}}$ to
	\begin{align*}
	&[e^{(j)}(U^{1:N_{j}}_1),(e^{(j)}(U^{1:N_{j}}_{t+1})\! \oplus \! f^{(j)} (U^{1:N_{j}}_{t} ) )_{t \in \llbracket 1, t_{j+1} -1 \rrbracket },\\
	& \phantom{-}f^{(j)} (U^{1:N_{j}}_{t_{j+1}}  ) ],
		\end{align*}
 (if the two sequences have different lengths, then
the shorter sequence is padded with zeros)
              \ENDFOR  
              \STATE Define $E \triangleq e^{(J)}(U^{1:N_{J}})$
              \STATE For $j \in \mathcal{J}$, define $E'_j \triangleq (A^{1:N}_t[ \mathcal{H}_{U|Y_j} \backslash \mathcal{V}_{U|Y_j}])_{  t \in \llbracket 1, T_{J} \rrbracket } $, and $E' \triangleq (E'_j)_{j \in \mathcal{J}}$. 
              \RETURN $(E,E')$
  \end{algorithmic}  
\end{algorithm}
\begin{algorithm}
  \caption{Decoder $j_0 \in \llbracket 1 , L\rrbracket$}
  \label{alg:scsicdec}
  \begin{algorithmic}[1] 
\REQUIRE  $(E,E')$ and $Y_{j_0}^{1:N_{L+1}}$
	\STATE				Form $\widehat{U}_{j_0,1}^{1:N_{L}} \triangleq g_{j_0}^{(L)}(e^{(L)}(U^{1:N_{L}}_1), E'_{j_0}, Y^{1:N_{L}}_{j_0,1})$ , where $e^{(L)}(U^{1:N_{L}}_1)$ is obtained from $e^{(L+1)}(U^{1:N_{L+1}})$\FOR{Block $t=2$ to Block $t=t_{L+1}$}
	 \STATE Form
	  $\widehat{U}_{j_0,t}^{1:N_{L}} \triangleq g_{j_0}^{(L)}(
	  	  e^{(L)}(U^{1:N_{L}}_t) \oplus f^{(L)} (U^{1:N_{L}}_{t-1} ) \oplus f^{(L)} (\widehat{U}^{1:N_{L}}_{j_0,t-1} ) ,E'_{j_0},Y^{1:N_{L}}_{j_0,t})$
              \ENDFOR  
              \RETURN  $\widehat{U}_{j_0}^{1:N_{L+1}} \triangleq (\widehat{U}_{j_0,t}^{1:N_{L}} )_{t \in \llbracket 1 , t_{L+1} \rrbracket}$, an estimate of ${U}^{1:N_{L+1}}$
  \end{algorithmic}  
\end{algorithm}
\begin{algorithm}
  \caption{Decoder $L+1$}
  \label{alg:scsicdecL}
  \begin{algorithmic}[1] 
\REQUIRE  $(E,E')$ and $Y_{L+1}^{1:N_{L+1}}$
		\STATE			With the successive cancellation decoder for source coding with side information \cite{Arikan10}, form $\widehat{U}_{L+1,t_{L+1}}^{1:N_{L}}$ from 
		$$\left(f^{(L)}(U^{1:N_{L}}_{t_{L+1}}), E'_{L+1} ,Y^{1:N_{L}}_{L+1,t_{L+1}} \right)$$  
		\vspace{-1em}
   \FOR{Block $t=t_{L+1} -1$ to Block $t=1$}
        \STATE Form an estimate $\widehat f^{(L)}  (U^{1:N_{L}}_{t})$ of $f^{(L)} (U^{1:N_{L}}_{t} )$ with   
\begin{multline*}        
        \widehat f^{(L)} (U^{1:N_{L}}_{t} )  \\ \triangleq   f^{(L)} (U^{1:N_{L}}_{t} ) \oplus e^{(L)}(U^{1:N_{L}}_{t+1}) \oplus  e^{(L)}(\widehat{U}^{1:N_{L}}_{L+1,t+1})
        \end{multline*}
        		\vspace{-1em}
     \STATE With the successive cancellation decoder for source coding with side information \cite{Arikan10}, form $\widehat{U}^{1:N_{L}}_{L+1,t}$   from 
     $\left(\widehat f^{(L)} (U^{1:N_{L}}_{t} ),E'_{L+1} , Y^{1:N_{L}}_{L+1,t} \right)$
              \ENDFOR  
              \RETURN  $\widehat{U}^{1:N_{L+1}}_{L+1}  \triangleq (\widehat{U}^{1:N_{L}}_{L+1,t} )_{t \in \llbracket 1 , t_{L+1} \rrbracket}$
  \end{algorithmic}  
\end{algorithm}
 \subsection{Compound channel coding from source coding} \label{seccompoundcc}
 We now propose a capacity-achieving compound channel coding scheme from source coding with compound side information via a technique similar to the one in \cite{Mondelli14} used to obtain channel coding from source coding with side information.

Consider a compound channel $\left(\mathcal{X} ,(p_{Y_j|X})_{j \in \mathcal{J}}, (\mathcal{Y}_j)_{j \in \mathcal{J}} \right)$, where $\mathcal{X} \triangleq  \{ 0,1\}$ and $\mathcal{J} \triangleq \llbracket 1 ,J \rrbracket$.  
Consider an arbitrary distribution $p_X$ on $\mathcal{X}$ and define for $j \in \mathcal{J}$, $p_{XY_j} \triangleq p_X p_{Y_j|X}$.  Consider for $j\in\mathcal{J}$, $(X^{1:N},Y_j^{1:N})$ distributed according to the product distribution $p_{X^{1:N}Y_j^{1:N}}$. Define  $V^{1:N} \triangleq X^{1:N} G_N$ and for $\delta_N \triangleq 2^{-N^{\beta}}$, $\beta \in ]0,1/2[$, and $j \in \mathcal{J}$, define the sets
\begin{align*}
\mathcal{V}_X &\triangleq \left\{ i \in \llbracket 1, N \rrbracket : H( V^i | V^{1:i-1}) >  1 - \delta_N \right\},\\
\mathcal{H}_{X|Y_j} &\triangleq \left\{ i \in \llbracket 1, N \rrbracket : H( V^i | V^{1:i-1} Y_j^{1:N}) > \delta_N \right\}, \\
\mathcal{V}_{X|Y_j} &\triangleq \left\{ i \in \llbracket 1, N \rrbracket : H( V^i | V^{1:i-1} Y_j^{1:N}) > 1-\delta_N \right\}.
\end{align*}   
Let $(t_j)_{j \in \mathcal{J}} \in \mathbb{N}^J$ with $t_1 \triangleq 1$ and define for $j \in \mathcal{J}$, $T_j \triangleq  \prod_{i=1}^{j} t_i$ and $N_j \triangleq  N T_j$. We use the same notation as in Section~\ref{secscci}. Let $|E|$ be the length of the output $E$ in the encoder of source coding with compound side information described in Algorithm \ref{alg:scsic}. By Euclidean division, there exist $q\in \mathbb{N}$ and $r\in \llbracket 1 , T_J-1 \rrbracket$ such that $|E|= T_J q +  r  $. For $t \in \llbracket 1 ,r \rrbracket$, consider an arbitrary set $\mathcal{A}_t \subset \mathcal{V}_{X}$ such that $|\mathcal{A}_t| = q+1 $, and, for $t \in \llbracket r+1 ,T_J  \rrbracket$, consider an arbitrary set $\mathcal{A}_{t} \subset \mathcal{V}_{X}$ such that $|\mathcal{A}_{t}| = q$. Hence, $\sum_{t=1}^{T_J}|\mathcal{A}_t|= |E|$.  
 
 The encoding and decoding algorithms for compound channel coding are described in Algorithms \ref{alg:scsice} and \ref{alg:scsicd}, and yield the following result, whose proof is  similar to \cite{Chou15f}. Note that other capacity-achieving polar coding schemes had also been proposed for compound symmetric channels in \cite{hassani2014universal,csacsouglu2016universal}. 
\begin{thm}
	Algorithms \ref{alg:scsice} and \ref{alg:scsicd} perform compound channel coding over $B$ blocks of length $ T_J N$ with optimal rate $\max_{p_X}\min_{j \in \mathcal{J}} I(X;Y_j)$ and encoding/decoding complexity~$O(B T_J N \log N)$.	
\end{thm}
 \begin{algorithm}
  \caption{Encoder}
  \label{alg:scsice}
  \begin{algorithmic}[1] 
\REQUIRE  $E_0 \triangleq (E_{0,t})_{t \in \llbracket 1 ,T_J  \rrbracket}$, where $E_{0,t}$, $t \in \llbracket 1 ,T_J  \rrbracket$, is a sequence of $|\mathcal{A}_t|$ uniformly distributed bits (local randomness). Messages $(M_{b,t})_{b \in \llbracket 1 , B \rrbracket, t \in \llbracket 1 ,T_J  \rrbracket}$, where $M_{b,t}$, $b \in \llbracket 1 , B \rrbracket, t \in \llbracket 1 ,T_J  \rrbracket$, is a sequence of $|\mathcal{V}_X \backslash \mathcal{A}_t |$  uniformly distributed bits
   \FOR{Block $b=1$ to Block $b=B$}
                          \FOR{Sub-block $t=1$ to Sub-block $t=T_J$}
                                     \STATE Define $\widetilde{V}_{b,t}^{1:N}$  according to $\prod_{j=1}^N \widetilde{p}_{{V}_{b,t}^j|{V}_{b,t}^{1:j-1}}$ with
        \begin{align*} 
&\widetilde{p}_{{V}_{b,t}^j|V_{b,t}^{1:j-1}} (v_{b,t}^j|v_{b,t}^{1:j-1})\\
& \triangleq 
\begin{cases} 
\mathds{1} \{v_{b,t}^j = M_{b,t}^j\} &\text{if }j\in \mathcal{V}_{X} \backslash \mathcal{A}_t  \\
\mathds{1} \{v_{b,t}^j = E_{b-1,t}^j\} &\text{if }j\in   \mathcal{A}_t  \\
 {p}_{V^j|V^{1:j-1}} (v_{b,t}^j|v_{b,t}^{1:j-1}) & \text{if }j\in {\mathcal{V}}_{X}^c  \end{cases} 
  \end{align*}
          \STATE Send $\widetilde X_{b,t}^{1:N} \triangleq \widetilde V_{b,t}^{1:N} G_N$  over the channel. 
                                      \ENDFOR  
                    \STATE Define $(E_b,E'_b)$ as the output of the encoder described in Algorithm \ref{alg:scsic} (for the compound source $(p_{XY_j})_{j\in \mathcal{J}}$)  applied to $\widetilde X_b^{1:N_J} \triangleq (\widetilde X_{b,t}^{1:N})_{t \in \llbracket 1 , T_J \rrbracket} $  
                    \STATE Break down $E_b$ into $T_J$ sequences $(E_{b,t})_{t \in \llbracket 1 ,T_J  \rrbracket}$, such that $|E_{b,t}|= |\mathcal{A}_t|$, $t \in \llbracket 1, T_J \rrbracket$.
                            \ENDFOR  
              \STATE Do a one-time pad with $(E'_b)_{b \in \llbracket 1, B \rrbracket}$ and $E_B$ to ensure uniformity (similar to Algorithm \ref{alg:p1}) and send it to the receiver via  channel codes~\cite{Arikan09} for each $p_{Y_j|X}$, $j \in \mathcal{J}$ 
  \end{algorithmic}  
\end{algorithm}
 \begin{algorithm} 
  \caption{Decoder $j\in\mathcal{J}$}
  \label{alg:scsicd}
  \begin{algorithmic}[1] 
\REQUIRE Channel output $\widetilde{Y}^{1:BN_J}_j$, estimate $\widehat E_{B}$ of $E_B$, and estimate  $(\widehat E'_b)_{b \in \llbracket 1, B \rrbracket}$ of  $(E'_b)_{b \in \llbracket 1, B \rrbracket}$
     \FOR{Block $b=B$ to Block $b=1$}
     \STATE Use $(\widehat E_{b},\widehat E'_b)$ with Decoder $j$ in Algorithms \ref{alg:scsicdec}, \ref{alg:scsicdecL} to create an estimate $\widehat X_b^{1:N_J} \triangleq (\widehat X_{b,t}^{1:N})_{t \in \llbracket 1 , T_J \rrbracket} $  of   $\widetilde X_b^{1:N_J} \triangleq (\widetilde X_{b,t}^{1:N})_{t \in \llbracket 1 , T_J \rrbracket} $. 
     \FOR{Sub-block $t=1$ to Sub-block $t=T_J$}
     \STATE Form  an estimate $\widehat V_{b,t}^{1:N} \triangleq \widehat X_{b,t}^{1:N}G_N$  of $\widetilde V_{b,t}^{1:N}$ 
      \STATE Form an estimate $\widehat  M_{b,t} \triangleq \widehat V_{b,t}^{1:N} [\mathcal{V}_{X} \backslash \mathcal{A}_t] $ of $ M_{b,t}$
      \STATE Form an estimate $\widehat  E_{b-1,t} \triangleq \widehat V_{b,t}^{1:N} [\mathcal{A}_t] $ of $ E_{b-1,t}$
        \ENDFOR 
      \STATE Form $\widehat{E}_{b-1} \triangleq (\widehat{E}_{b-1,t})_{t \in \llbracket 1 , T_J \rrbracket}$ an estimate of ${E}_{b-1}$
  \ENDFOR 
  \RETURN  $(\widehat M_{b,t})_{b\in\mathcal{B}, t \in \llbracket 1 , T_J \rrbracket}$
  \end{algorithmic}  
\end{algorithm}
\begin{rem}
We do not write the dependence of the estimates with respect to $j \in \mathcal{J}$ in Algorithm \ref{alg:scsicd} to simplify notation.
\end{rem}

\subsection{Extension to compound uncertainty on the main channel} \label{sectionproofth8}
Using the preliminary results of Section~\ref{secscci} and \ref{seccompoundcc}, an immediate extension of Theorem \ref{thmain1} is as follows.
\begin{thm} \label{thmainb}
Assume that in the coding scheme of Section~\ref{sec:preshared} the primitive source coding with side information is replaced by source coding with compound side information from Section~\ref{secscci}. Assume also that  instead of channel coding in Lines~10 and 15 of Algorithm \ref{alg:p3} and \ref{alg:p1}, respectively, we use compound channel coding from Section \ref{seccompoundcc}. Then, the following secrecy rate  is achieved
$$\max \! \left[ \min_{t\in \mathfrak{T}} I(U;Y(t)) \!-\! \alpha I(U;X) \!-\! (1\!-\!\alpha) \max_{s\in \mathfrak{S}} I ({U}; {Z}(s) ) \right]^+$$
where the maximum is over random variables $U$ such that $\forall t \in \mathfrak{T}, \forall s \in \mathfrak{S},U - X -(Y(t),Z(s))$, and $|\mathcal{U}| \leq |\mathcal{X}| $. 
\end{thm}

\section{Concluding Remarks} \label{sec:concl}
We constructed  explicit wiretap codes that 
achieve the  best known single-letter achievable rates, previously obtained non-constructively,  when uncertainty holds on the eavesdropper channel  under a   (i) noisy~ blockwise type~II, (ii) compound, or (iii)~arbitrarily varying model. Our construction solely relies on three primitives: source coding with side information, universal hashing, and distribution approximation. We also extended our result to the case where uncertainty holds on the legitimate user channel under a compound model. This extension can thus be applied to the problem of secret sharing from correlated randomness. Specifically, it can directly be  applied to the case of a discrete channel model as in \cite[Section II]{zou2015information}, and adapted to the case of a discrete source model with a single dealer, as in \cite{chou2018secret,chou2021distributed}, for arbitrary access structures. The case of Gaussian channels or sources, e.g.,~\cite{zou2015information,rana2021information}, is, however, more challenging as quantization may be~needed. The case of rate-limited communication for source models is also more challenging as vector quantization is needed and requires other proof techniques \cite{sultana2021}.

We anticipate that our code construction can be generalized to the broadcast channel with confidential messages and the multiple access wiretap channel when uncertainty holds on the eavesdropper's channel according to a compound model, using a distributed version of the leftover hash lemma akin to \cite{chou2021private}. Such results would generalize known constructions based on polar codes, e.g., \cite{Wei14,Chou16,chou2018polar}, that require a seed for strong secrecy and assume perfect knowledge  of the eavesdropper's channel statistics.
An open problem is to  provide explicit coding schemes to handle an arbitrarily varying main channel as, for instance, in the models in \cite{molavianjazi2009arbitrary,bjelakovic2013capacity,notzel2016arbitrarily,chen2022strong}.

\appendices
\section{proof of Lemma \ref{lemdist}} \label{App_lemdist}

Let $b\in\mathcal{B}$ and $l \in \mathcal{L}$. By \eqref{eqDA1}, we have
\begin{align}
 \mathbb{D}(q_{A^{1:K}}\lVert \widetilde{p}_{A_{b,l}^{1:K}} ) 
& \leq K \delta_K, \label{eqdist1}
\end{align}
we can indeed apply \eqref{eqDA1} because the bits $\widetilde{A}_{b,l}^{1:K}[\mathcal{V}_U]$ are uniformly distributed, which is a consequence of the definition of $\widetilde{A}_{b,l}^{1:K}[\mathcal{V}_U]$ in Line 7 of Algorithm \ref{alg:p1} using the fact that the bits $T^{1:|\mathcal{V}_U|L}_b = R_b^{-1} \odot (M_b\lVert M'_b\lVert R_b')$ are uniformly distributed since the bits $(M_b\lVert M'_b\lVert R_b')$ are uniformly distributed.
Next, we have
\begin{align}
& \mathbb{D}( q_{U^{1:K}V^{1:K}} \lVert \widetilde{p}_{U_{b,l}^{1:K}V_{b,l}^{1:K}} ) \displaybreak[0] \nonumber\\
& \stackrel{(a)}{=}  \mathbb{E}_{q_{U^{1:K}}} \mathbb{D}( q_{V^{1:K}|U^{1:K}} \lVert \widetilde{p}_{V_{b,l}^{1:K}|U_{b,l}^{1:K}} ) + \mathbb{D}( q_{U^{1:K}} \lVert \widetilde{p}_{U_{b,l}^{1:K}} ) \nonumber \displaybreak[0]  \\
& \stackrel{(b)}{\leq} \mathbb{E}_{q_{U^{1:K}}} \mathbb{D}( q_{V^{1:K}|U^{1:K}} \lVert \widetilde{p}_{V_{b,l}^{1:K}|U_{b,l}^{1:K}} )  + K \delta_K  \nonumber \displaybreak[0] \\ 
& \stackrel{(c)}  \leq 2 K \delta_K,  \label{eqdist2}
\end{align}
where $(a)$ holds by the chain rule for relative entropy~\cite{Cover91}, $(b)$ holds by \eqref{eqdist1} because $\mathbb{D}( q_{U^{1:K}} \lVert \widetilde{p}_{U_{b,l}^{1:K}} ) =  \mathbb{D}(q_{A^{1:K}}\lVert \widetilde{p}_{A_{b,l}^{1:K}} )$ by invertibility of $G_K$, $(c)$ holds by \eqref{eqCP}. Then,
\begin{align}
& \mathbb{D}( q_{U^{1:N}X^{1:N}Y^{1:N}Z^{1:N}(s)} \lVert  \widetilde{p}_{U_{b}^{1:N}X_{b}^{1:N}Y_{b}^{1:N}Z_{b}^{1:N}(s)} ) \nonumber \displaybreak[0] \displaybreak[0] \\
& \stackrel{(a)}{=} \sum_{l \in\mathcal{L}} \mathbb{D}(q_{U^{1:K}X^{1:K}Y^{1:K}Z^{1:K}(s)} \lVert \widetilde{p}_{U_{b,l}^{1:K}X_{b,l}^{1:K}Y_{b,l}^{1:K}Z_{b,l}^{1:K}(s)} ) \displaybreak[0] \nonumber  \displaybreak[0]\\
& \stackrel{(b)}{=} \smash{\textstyle\sum_{l \in\mathcal{L}}} [ \mathbb{D}(q_{U^{1:K}X^{1:K}} \lVert  \widetilde{p}_{U_{b,l}^{1:K}X_{b,l}^{1:K}}) \nonumber\\
& \phantom{--} + \mathbb{E} [\mathbb{D}( q_{Y^{1:K}Z^{1:K}(s)|U^{1:K}X^{1:K}} \lVert  \widetilde{p}_{Y_{b,l}^{1:K}Z_{b,l}^{1:K}(s)|U_{b,l}^{1:K}X_{b,l}^{1:K}} )]  ] \nonumber \displaybreak[0]  \\
& \stackrel{(c)}{=} \sum_{l \in\mathcal{L}} \mathbb{D}(q_{U^{1:K}X^{1:K}} \lVert  \widetilde{p}_{U_{b,l}^{1:K}X_{b,l}^{1:K}}) \nonumber \\ \nonumber
& \stackrel{(d)}{\leq} \sum_{l \in\mathcal{L}}2K \delta_K = 2LK \delta_K,
\end{align}
where $(a)$ holds because the random variables $(\widetilde{U}_{b,l}^{1:K},\widetilde{X}_{b,l}^{1:K},\widetilde{Y}_{b,l}^{1:K},\widetilde{Z}_{b,l}^{1:K}(s) )$ across the different sub-blocks $l \in \mathcal{L}$ are independent by construction (see Algorithm~\ref{alg:p1} and Remark \ref{rem1a}), 
$(b)$ holds by the chain rule for relative entropy \cite{Cover91} and the expectation is over $q_{U^{1:K}X^{1:K}}$, $(c)$ holds because $ \widetilde{p}_{Y_{b,l}^{1:K}Z_{b,l}^{1:K}(s)|U_{b,l}^{1:K}X_{b,l}^{1:K}} =  \widetilde{p}_{Y_{b,l}^{1:K}Z_{b,l}^{1:K}(s)|X_{b,l}^{1:K}}= q_{Y^{1:K}Z^{1:K}(s)|X^{1:K}}= q_{Y^{1:K}Z^{1:K}(s)|U^{1:K}X^{1:K}}$, $(d)$ holds by \eqref{eqdist2} because $\mathbb{D}(q_{U^{1:K}X^{1:K}} \lVert  \widetilde{p}_{U_{b,l}^{1:K}X_{b,l}^{1:K}}) = \mathbb{D}(q_{U^{1:K}V^{1:K}} \lVert  \widetilde{p}_{U_{b,l}^{1:K}V_{b,l}^{1:K}})$ by invertibility of $G_K$.

\section{Proof of Lemma \ref{lemint}} \label{App_lemimt}
 For any $( \bar{m}_b, t_b , x_b ,z_b(s)  , r_{b}) $, we have 
\begin{align*}
&\widetilde{p}_{\bar{M}_b T_b  X_b Z_b(s)   R_b} ( \bar{m}_b, t_b, x_b  ,z_b(s)   ,r_{b})   \\ 
&  \stackrel{(a)}{=} \widetilde{p}_{X_b Z_b(s)    |T_b  } (x_b, z_b(s)   |t_b  )  \widetilde{p}_{\bar{M}_b}(\bar{m}_b) \widetilde{p}_{R_{B}}(r_{b}) \\
& \phantom{--} \times \textstyle\sum_{r_{b}' } \widetilde{p}_{R_{b}'} (r_b') \widetilde{p}_{T_b  |R_{b}'\bar{M}_{b}R_{b}} (t_b  |r_b' ,\bar{m}_b,r_b) \displaybreak[0]  \\ 
& \stackrel{(b)}{=} \widetilde{p}_{X_b Z_b(s)    |T_b  } (x_b, z_b(s)   |t_b  ) 2^{-r} |\mathcal{R}|^{-1}  \displaybreak[0] \\
& \phantom{--} \times \textstyle\sum_{r_b'}  \frac{\mathds{1} \{ t_b    = r_b^{-1}\odot ( \bar{m}_b \lVert r_b') \}}{ 2^{-r+|\mathcal{V}_U|L}}  \displaybreak[0] \\ 
& = \widetilde{p}_{X_b Z_b(s)    |T_b  } (x_b, z_b(s)   |t_b  ) 2^{-|\mathcal{V}_U|L} |\mathcal{R}|^{-1} \\
& \phantom{--} \times \textstyle \sum_{r_b'}  \mathds{1} \{ r_{b} \odot t_b    = (\bar{m}_b \lVert r_b')  \} \displaybreak[0]  \\ 
&\stackrel{(c)}{=} \widetilde{p}_{X_b Z_b(s)    |T_b  } (x_b, z_b(s)  |t_b  ) 2^{-|\mathcal{V}_U|L} |\mathcal{R}|^{-1} \mathds{1} \{ F(r_b,t_b  )  =  \bar{m}_b \}  \\
& \stackrel{(d)}{=} \widetilde{q}_{\bar{M}_b T_b X_b   Z_b(s)    R_b} (\bar{m}_b,t_b  ,x_b, z_b(s)  , r_{b}), 
\end{align*}
where $(a)$ holds because  $\widetilde{p}_{\bar{M}_{b}T_b  X_b  {Z} _b  R_{b}}= \widetilde{p}_{X_b Z_b |T_b  } \widetilde{p}_{\bar{M}_{b}} \widetilde{p}_{R_{b}} \widetilde{p}_{T_b  |\bar{M}_{b}R_{b}}$ and  $R_{b}'$ is independent of $(\bar{M}_{b},R_{b})$, $(b)$ holds by uniformity of $\bar{M}_{b}$, $R_{b}$, $R_{b}'$, and by definition of $T_b  $, $(c)$ holds because ($F(r_b,t_b  )  =  \bar{m}_b) \implies  (\sum_{r_{b}'} \mathds{1} \{ r_{b} \odot t_b    = (\bar{m}_b, r_b') \}=1)$ (because $\exists! \underline{r}_b' \in \{0,1\}^{|\mathcal{V}_U|L-r}$ such that $r_{b} \odot t_b    = (\bar{m}_b\lVert  \underline{r}_b')$) and ($F(r_b,t_b  )  \neq  \bar{m}_b) \implies  (\sum_{r_{b}'} \mathds{1} \{ r_{b} \odot t_b    = (\bar{m}_b, r_b') \}=0)$, 
 $(d)$ holds by definition of $\widetilde{q}$.

\section{Proof of Lemma \ref{lemampapp}} \label{App_lemampapp}
We have
\begin{align*}
&\mathbb{V} ( \widetilde{p}_{\bar{M}_{b}R_{b}Z_b(s){X}_b[\mathcal{A}_b]  }, \widetilde{p}_{\bar{M}_{b}} \widetilde{p}_{R_{b}Z_b(s) {X}_b[\mathcal{A}_b] }) \displaybreak[0] \\
& \stackrel{(a)}{=} \mathbb{V} ( \widetilde{q}_{F(R_b,T_b  )R_{b}Z_b(s) {X}_b[\mathcal{A}_b]  }, \widetilde{q}_{\bar{M}_{b}} \widetilde{q}_{R_{b}} \widetilde{q}_{Z_b(s) {X}_b[\mathcal{A}_b] })  \displaybreak[0] \\
& \stackrel{(b)}{\leq} 2\epsilon + \sqrt{ 2^{ r - {H}_{\infty}^{\epsilon} \left(\widetilde{p}_{T_{b} {Z}_{b}(s) {X}_b[\mathcal{A}_b]} | \widetilde{p}_{ {Z}_{b}(s) {X}_b[\mathcal{A}_b]} \right)  }} \displaybreak[0] \\
& \stackrel{(c)}{\leq} 2 \cdot 2^{-L^{\gamma}} + \sqrt{ 2^{ r - {H} \left(T_{b} | \widetilde{Z}_{b} (s) \widetilde{X}_b[\mathcal{A}_b]  \right) + L \delta^{(0)}(K,L)  }} \displaybreak[0] \\
& \stackrel{(d)}{\leq} 2 \cdot 2^{-L^{\gamma}} + \sqrt{ 2^{ r - {H} \left(T_{b} | \widetilde{Z}_{b} (s) \widetilde{X}_b[\mathcal{A}_b]  \right) + N \delta^{(1)}(K,L)  }} ,
\end{align*}
where $(a)$ holds by Lemma \ref{lemint} and the definition of $\widetilde{q}$, $(b)$ holds by Lemmas \ref{lemamp} and \ref{lemint}, $(c)$ holds by Lemma \ref{lemholenstein} below, which can indeed be applied by Remark \ref{rem2}, with $\epsilon \triangleq 2^{-L^{\gamma}}$, $\delta^{(0)}(K,L) \triangleq \sqrt{2 L^{\gamma -1}} \log (2^{|\mathcal{V}_U|} +3)$, $(d)$ holds by choosing $\delta^{(1)}(K,L) \triangleq (K^{-1} +1) \sqrt{2 L^{\gamma -1}}  \geq \delta^{(0)}(K,L) /K$.
\begin{lem}[ \cite{holenstein2011randomness} ] \label{lemholenstein}
Let $p_{X^LZ^L} \triangleq \prod_{i=1}^L p_{X_iZ_i}$ be a probability distribution over $\mathcal{X}^L \times \mathcal{Z}^L$. For any $\delta>0$, ${H}_{\infty}^{\epsilon}(p_{X^LZ^L} | p_{Z^L})\geq H(X^L|Z^L)- L\delta$ , where $\epsilon \triangleq 2^{- \frac{L \delta^2} {2 \log^2(|\mathcal{X}|+3)}} $.
\end{lem}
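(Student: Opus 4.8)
\emph{Proof plan.} The plan is to unfold the definition of the smooth min-entropy and reduce the bound to a one-sided concentration estimate for a sum of independent single-letter self-informations. Recall from the definition in Section~\ref{sec:secindiv} that $H_{\infty}^{\epsilon}(p_{X^LZ^L}|p_{Z^L})\geq\lambda$ whenever there is a nonnegative function $r$ on $\mathcal{X}^L\times\mathcal{Z}^L$ with $\mathbb{V}(p_{X^LZ^L},r)\leq\epsilon$ and $r(x^L,z^L)\leq p_{Z^L}(z^L)2^{-\lambda}$ for all $(x^L,z^L)$ with $p_{Z^L}(z^L)>0$. Accordingly, I would introduce the ``good set''
\begin{align*}
\mathcal{G}\triangleq\Big\{(x^L,z^L):p_{X^LZ^L}(x^L,z^L)>0\text{ and }-\log p_{X^L|Z^L}(x^L|z^L)\geq H(X^L|Z^L)-L\delta\Big\}
\end{align*}
and set $r\triangleq p_{X^LZ^L}\,\mathds{1}\{(\cdot,\cdot)\in\mathcal{G}\}$, i.e.\ $p_{X^LZ^L}$ with its ``too-peaked'' atoms zeroed out. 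By construction $r(x^L,z^L)=p_{X^L|Z^L}(x^L|z^L)\,p_{Z^L}(z^L)\mathds{1}\{(\cdot,\cdot)\in\mathcal{G}\}\leq 2^{-(H(X^L|Z^L)-L\delta)}p_{Z^L}(z^L)$, so $r$ witnesses $\lambda=H(X^L|Z^L)-L\delta$; moreover $r\leq p_{X^LZ^L}$ forces the range constraint, hence $r\in\mathcal{B}^{\epsilon}(p_{X^LZ^L})$ as soon as $\mathbb{V}(p_{X^LZ^L},r)=\mathbb{P}[(X^L,Z^L)\notin\mathcal{G}]\leq\epsilon$. Thus the whole lemma collapses to the single tail inequality $\mathbb{P}[(X^L,Z^L)\notin\mathcal{G}]\leq\epsilon$.

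Second, I would use the product structure. Writing $W_i\triangleq-\log p_{X_i|Z_i}(X_i|Z_i)$, the $W_i$, $i\in\llbracket1,L\rrbracket$, are independent and nonnegative, $\mathbb{E}[W_i]=H(X_i|Z_i)$, hence $\sum_{i=1}^L\mathbb{E}[W_i]=H(X^L|Z^L)$, and $-\log p_{X^L|Z^L}(X^L|Z^L)=\sum_{i=1}^LW_i$. So $\mathbb{P}[(X^L,Z^L)\notin\mathcal{G}]=\mathbb{P}\big[\sum_{i=1}^LW_i<\sum_{i=1}^L\mathbb{E}[W_i]-L\delta\big]$, a below-the-mean deviation of a sum of independent variables, which I would control by a Chernoff bound: for every $s>0$,
\begin{align*}
\mathbb{P}\Big[\sum_{i=1}^LW_i<\sum_{i=1}^L\mathbb{E}[W_i]-L\delta\Big]\leq\exp\Big(s\sum_{i=1}^L\mathbb{E}[W_i]-sL\delta\Big)\prod_{i=1}^L\mathbb{E}\big[e^{-sW_i}\big].
\end{align*}

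The crux, and the main obstacle, is bounding the per-letter log-moment-generating function: each $W_i$ is unbounded above, so a plain Hoeffding estimate is unavailable, and this is precisely where the alphabet factor $\log^2(|\mathcal{X}|+3)$ comes in. I would prove, for $s$ in a suitable range, a sub-Gaussian-type bound
\begin{align*}
\ln\mathbb{E}\big[e^{-sW_i}\big]\leq-s\,\mathbb{E}[W_i]+\tfrac{s^{2}}{2}\log^{2}(|\mathcal{X}|+3),
\end{align*}
via a second-order Taylor expansion of $g_i(s)\triangleq\ln\mathbb{E}[e^{-sW_i}]$ at $s=0$ (using $g_i(0)=0$, $g_i'(0)=-\mathbb{E}[W_i]$) together with a uniform bound $g_i''(s)\leq\log^{2}(|\mathcal{X}|+3)$; the second derivative $g_i''(s)$ equals the variance of $W_i$ under the exponentially tilted (escort) law $\widetilde{p}_s(x,z)\propto p_{X_iZ_i}(x,z)\,p_{X_i|Z_i}(x|z)^{s/\ln 2}$, and this ``tilted varentropy'' is controlled because, conditioned on any $z$, $\widetilde{p}_s(\cdot|z)$ is a probability distribution supported on at most $|\mathcal{X}|$ symbols and $-\log p_{X_i|Z_i}(\cdot|z)$ coincides with the self-information of $\widetilde{p}_s(\cdot|z)$ up to an $x$-independent shift, so a standard varentropy bound for distributions of support size $\leq|\mathcal{X}|$ applies, together with a $[0,\log|\mathcal{X}|]$-range bound for the conditional means as $z$ varies (the ``$+3$'' being a convenient envelope that also absorbs the small-alphabet cases; a careful accounting of these two contributions is the technical heart of \cite{holenstein2011randomness}). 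Substituting this estimate, the $\sum_{i=1}^L\mathbb{E}[W_i]$ terms cancel and the bound becomes $\exp\big(-sL\delta+\tfrac{s^{2}}{2}L\log^{2}(|\mathcal{X}|+3)\big)$; optimizing at $s=\delta/\log^{2}(|\mathcal{X}|+3)$ gives $\exp\big(-L\delta^{2}/(2\log^{2}(|\mathcal{X}|+3))\big)\leq2^{-L\delta^{2}/(2\log^{2}(|\mathcal{X}|+3))}=\epsilon$, which is the claim. Finally, the degenerate regimes --- $s=\delta/\log^{2}(|\mathcal{X}|+3)$ falling outside the admissible range, or $H(X^L|Z^L)-L\delta\leq0$ --- need to be disposed of separately, but there the asserted lower bound on $H_{\infty}^{\epsilon}$ is vacuous.
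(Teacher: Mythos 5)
The paper does not prove this lemma; it merely cites it from Holenstein--Renner, so there is no internal proof against which to check you. Evaluated on its own terms, your outline follows the standard route: the reduction from $H_\infty^\epsilon$ to a single lower-tail bound on $-\log p_{X^L|Z^L}(X^L|Z^L)$ by zeroing the atypical atoms is correct (your witness $r$ is nonnegative, bounded by $p_{X^LZ^L}\leq 1$, achieves the claimed $\lambda$, and $\mathbb{V}(p,r)=\mathbb{P}[\mathcal{G}^c]$), and the Chernoff setup for the independent sum $\sum_i W_i$ is standard. Your observation that the degenerate case $H(X^L|Z^L)-L\delta\leq 0$ is vacuous is also fine since $H_\infty^\epsilon\geq H_\infty\geq 0$.

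The gap is in the step you yourself flag as the ``technical heart,'' and your sketch of it contains an error of detail that needs repair. You assert that $-\log p_{X_i|Z_i}(\cdot\,|z)$ coincides with the self-information of the tilted conditional $\widetilde{p}_s(\cdot\,|z)$ ``up to an $x$-independent shift.'' In fact, since $\widetilde{p}_s(x|z)\propto p(x|z)^{1+s/\ln 2}$, one has
\begin{align*}
-\log p_{X_i|Z_i}(x|z)=\frac{-\log\widetilde{p}_s(x|z)-\log c_s(z)}{1+s/\ln 2},\qquad c_s(z)\triangleq\sum_{x}p(x|z)^{1+s/\ln 2},
\end{align*}
so the relation is an affine one with a nontrivial scale $\tfrac{1}{1+s/\ln 2}$, not a pure shift. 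The argument survives this correction, because for $s\geq 0$ the scale factor is at most $1$ and hence can only shrink the conditional variance, but as written the claim is false and the reader cannot see why the varentropy bound still applies. More importantly, the actual estimate $g_i''(s)\leq\log^2(|\mathcal{X}|+3)$ over the whole range $s\in[0,\delta/\log^2(|\mathcal{X}|+3)]$ is never established: you appeal to ``a standard varentropy bound for distributions of support size $\leq|\mathcal{X}|$'' together with a range bound on the conditional means, but you neither state nor prove the precise inequality that yields the constant $\log^2(|\mathcal{X}|+3)$ (the maximal varentropy of an $m$-point distribution is not simply $\log^2 m$, and assembling $\mathbb{E}_Z[\mathrm{Var}(W_i|Z)]+\mathrm{Var}_Z(\mathbb{E}[W_i|Z])$ with the right constants is exactly where the work lies). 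Until that inequality is supplied, the proposal is a plan rather than a proof; since the paper itself defers entirely to \cite{holenstein2011randomness}, the honest conclusion is that you have correctly reconstructed the reduction and the concentration strategy, but the quantitative core that produces the specific $\epsilon$ remains an explicit open step.
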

\begin{rem}
An argument similar to \cite[Lemma 10]{Maurer00} to lower bound the min-entropy   would require adding  an extra round of reconciliation to the coding scheme as in \cite{chou2014separation}. Lemma \ref{lemholenstein} appears to be a simpler alternative here.
\end{rem}

\section{Proof of Lemma \ref{lemcond}} \label{App_lemcond}

We first introduce some notation for convenience. Define for any $\mathcal{I} \subseteq \llbracket 1, K \rrbracket$, $\widetilde{A}_{b} [\mathcal{I}] \triangleq (\widetilde{A}_{b,l}^{1:K}[\mathcal{I}])_{l \in \mathcal{L}}$ and $\widetilde{A}_{b} \triangleq (\widetilde{A}_{b,l}^{1:K})_{l \in \mathcal{L}}$. For $b \in \mathcal{B}$, consider $({U}_{b,l}^{1:K},{X}_{b,l}^{1:K},{Z}_{b,l}^{1:K}(s))_{l \in \mathcal{L}}$ distributed according to $q_{U^{1:N} X^{1:N} Z^{1:N}(s)} \triangleq \prod_{i=1}^{N} q_{U X Z(s)} $ and define for $l \in \mathcal{L}$, ${A}_{b,l}^{1:K} \triangleq {U}_{b,l}^{1:K} G_K $. Next, define for any $\mathcal{I} \subseteq \llbracket 1, K \rrbracket$, ${A}_{b} [\mathcal{I}] \triangleq ({A}_{b,l}^{1:K}[\mathcal{I}])_{l \in \mathcal{L}}$ and ${A}_{b} \triangleq ({A}_{b,l}^{1:K})_{l \in \mathcal{L}}$. Define ${U}_{b} [\mathcal{A}_b] \triangleq ({U}_{b,l}^{1:K}[\mathcal{A}_{b,l}])_{l \in \mathcal{L}}$, ${U}_{b} [\mathcal{A}_b^c] \triangleq ({U}_{b,l}^{1:K}[\mathcal{A}_{b,l}^c] )_{l \in \mathcal{L}}$, ${U}_{b} \triangleq ({U}_{b,l}^{1:K})_{l \in \mathcal{L}}$, ${X}_{b} [\mathcal{A}_b] \triangleq ({X}_{b,l}^{1:K}[\mathcal{A}_{b,l}])_{l \in \mathcal{L}}$, ${X}_{b} [\mathcal{A}_b^c] \triangleq ({X}_{b,l}^{1:K}[\mathcal{A}_{b,l}^c] )_{l \in \mathcal{L}}$, ${X}_{b} \triangleq ({X}_{b,l}^{1:K})_{l \in \mathcal{L}}$, ${Z}_{b}(s) [\mathcal{A}_b] \triangleq ({Z}_{b,l}^{1:K}(s)[\mathcal{A}_{b,l}])_{l \in \mathcal{L}}$, ${Z}_{b}(s) [\mathcal{A}_b^c] \triangleq ({Z}_{b,l}^{1:K}(s)[\mathcal{A}_{b,l}^c] )_{l \in \mathcal{L}}$, ${Z}_{b}(s) \triangleq ({Z}_{b,l}^{1:K}(s))_{l \in \mathcal{L}}$. 
Then, we have
\begin{align}
& {H} (\widetilde{A}_{b} [\mathcal{V}_U]| \widetilde{Z}_{b}(s) \widetilde{X}_b[\mathcal{A}_b]  ) - {H} ({A}_{b} [\mathcal{V}_U]| {Z}_{b}(s) {X}_b[\mathcal{A}_b]  )   \nonumber  \\
& =  {H} (\widetilde{A}_{b} [\mathcal{V}_U] \widetilde{Z}_{b}(s) \widetilde{X}_b[\mathcal{A}_b]  ) -  {H} ({A}_{b} [\mathcal{V}_U] {Z}_{b}(s) {X}_b[\mathcal{A}_b]  ) \nonumber \\ 
& \phantom{--} + {H} ( {Z}_{b}(s) {X}_b[\mathcal{A}_b]  ) - {H} ( \widetilde{Z}_{b}(s) \widetilde{X}_b[\mathcal{A}_b]  )  \displaybreak[0]\nonumber    
\\ \nonumber
& \geq - 2  \sqrt{2 \ln2} \sqrt{2LK \delta_K} \log \frac{(|\mathcal{X}|^2 |\mathcal{Z}_s|)^{LK}}{\sqrt{2 \ln2} \sqrt{2LK \delta_K}} \nonumber   \displaybreak[0] \\ \nonumber   
& \geq -  2  \sqrt{2 \ln2} \sqrt{2LK \delta_K} ( LK \log (|\mathcal{X}|^2 \max_{s \in \mathfrak{S}}|\mathcal{Z}_s| )  \nonumber \displaybreak[0]\\
& \phantom{--}  - \log (\sqrt{2 \ln2} \sqrt{2LK \delta_K}))  \triangleq - \delta^{*}, \label{eqd2}
\end{align}
where the first inequality holds by \cite[Lemma~2.7]{bookCsizar} applied twice because for $N$ large enough, $\mathbb{V}( q_{{A}_{b} [\mathcal{V}_U] {Z}_{b}(s) {X}_b[\mathcal{A}_b]} , \widetilde{p}_{{A}_{b} [\mathcal{V}_U] {Z}_{b}(s) {X}_b[\mathcal{A}_b]} ) \leq \sqrt{2 \ln2} \sqrt{ \mathbb{D}( q_{{A}_{b} [\mathcal{V}_U] {Z}_{b}(s) {X}_b[\mathcal{A}_b]} \lVert \widetilde{p}_{{A}_{b} [\mathcal{V}_U] {Z}_{b}(s) {X}_b[\mathcal{A}_b]} )} \leq   \sqrt{2 \ln2} \sqrt{\mathbb{D}( q_{U^{1:N}X^{1:N}Y^{1:N}Z^{1:N}(s)} \lVert \widetilde{p}_{U_{b}^{1:N}X_{b}^{1:N}Y_{b}^{1:N}Z_{b}^{1:N}(s)} )} \leq \sqrt{2 \ln2} \sqrt{2LK \delta_K}$
 where we have used Pinsker's inequality, the chain rule for divergence, positivity of the divergence, and Lemma~\ref{lemdist}. Then, we have
\begin{align}
&{H} \left(T_{b}| \widetilde{Z}_{b}(s) \widetilde{X}_b[\mathcal{A}_b]  \right) \nonumber   \\ \nonumber
& \stackrel{(a)}{=} {H} \left(\widetilde{A}_{b} [\mathcal{V}_U]| \widetilde{Z}_{b}(s) \widetilde{X}_b[\mathcal{A}_b]  \right) \displaybreak[0] \\ \nonumber
& \stackrel{(b)}{\geq} {H} \left({A}_{b} [\mathcal{V}_U]| {Z}_{b}(s) {X}_b[\mathcal{A}_b]  \right) - \delta^{*} \displaybreak[0]\\ \nonumber
& = {H} \left({A}_{b} [\mathcal{H}_U]| {Z}_{b}(s) {X}_b[\mathcal{A}_b]  \right)\nonumber \displaybreak[0]\\ \nonumber
& \phantom{--} - {H} \left({A}_{b} [\mathcal{H}_U\backslash \mathcal{V}_U]| {A}_{b} [\mathcal{V}_U] {Z}_{b}(s) {X}_b[\mathcal{A}_b]  \right) - \delta^{*} \displaybreak[0]\\ \nonumber
& \geq {H} \left({A}_{b} [\mathcal{H}_U]| {Z}_{b}(s) {X}_b[\mathcal{A}_b]  \right) - L|\mathcal{H}_U\backslash \mathcal{V}_U| - \delta^{*} \displaybreak[0] \\ \nonumber
& \stackrel{(c)}{=} {H} \left({A}_{b} [\mathcal{H}_U]| {Z}_{b}(s) {X}_b[\mathcal{A}_b]  \right) - o(LK) - \delta^{*}\\ \nonumber
&= {H} \left({A}_{b} [\mathcal{H}_U] U_b| {Z}_{b}(s) {X}_b[\mathcal{A}_b]  \right) \nonumber \\ \nonumber
& \phantom{--}  - {H} \left( U_b| {A}_{b} [\mathcal{H}_U] {Z}_{b}(s) {X}_b[\mathcal{A}_b]  \right) - o(LK) - \delta^{*}\\ \nonumber
& \stackrel{(d)}{\geq} {H} \left({A}_{b} [\mathcal{H}_U] U_b| {Z}_{b}(s) {X}_b[\mathcal{A}_b]  \right) \nonumber \\ \nonumber
& \phantom{--}  -  {H}_b \left( LK \delta_K  \right) - (L K)^2 \delta_K - o(LK) - \delta^{*}  \displaybreak[0] \\ \nonumber
& \geq {H} \left( U_b| {Z}_{b}(s) {X}_b[\mathcal{A}_b]  \right) \nonumber \\ \nonumber
& \phantom{--}  -  {H}_b \left( LK \delta_K  \right) - (L K)^2 \delta_K - o(LK) - \delta^{*} \displaybreak[0] \\ 
&\stackrel{(e)}{=} {H} \left( U_b| {Z}_{b}(s)[\mathcal{A}_b^c] {X}_b[\mathcal{A}_b]  \right) \nonumber \\ 
& \phantom{--}  -  {H}_b \left( LK \delta_K  \right) - (L K)^2 \delta_K - o(LK) - \delta^{*}, \label{eqenta1}
\end{align}
where $(a)$ holds by definition of $\widetilde{A}_{b} [\mathcal{V}_U]$, $(b)$ holds by \eqref{eqd2}, $(c)$ holds because $\lim_{K \to \infty} |\mathcal{H}_{U}| / K = H(U)$ by~\cite{Arikan10}, and $\lim_{K \to \infty} |\mathcal{V}_{U}| / K = H(U)$ by \cite{Honda13,Chou14rev}, $(d)$ holds by Fano's inequality since the error probability in the reconstruction of $U_b$ from $A_b[\mathcal{H}_U]$ is upper-bounded by $LK\delta_K$ by the result for source coding with side information from \cite{Arikan10}, reviewed in \eqref{lemscs}, and the union bound, $(e)$ holds because $U_b - ({Z}_{b}(s)[\mathcal{A}_b^c], {X}_b[\mathcal{A}_b])- {Z}_{b}(s)[\mathcal{A}_b]$ forms a Markov chain. Next, we have
\begin{align}
&H ( U_b| {Z}_{b}(s)[\mathcal{A}_b^c] {X}_b[\mathcal{A}_b]  ) \nonumber \displaybreak[0]\\ \nonumber
& = H ({U}_{b}[\mathcal{A}_b^c] | {Z}_{b}(s)[\mathcal{A}_b^c]  {X}_b[\mathcal{A}_b] ) \\
& \phantom{--}+ {H} ({U}_{b}[\mathcal{A}_b] | {U}_{b}[\mathcal{A}_b^c] {Z}_{b}(s)[\mathcal{A}_b^c] {X}_b [\mathcal{A}_b]  ) \nonumber\displaybreak[0]  \\ \nonumber
&\stackrel{(a)}{=} H({U}_{b}[\mathcal{A}_b^c] | {Z}_{b}(s)[\mathcal{A}_b^c]  ) \nonumber\\
& \phantom{--}+ H ({U}_{b}[\mathcal{A}_b]  {X}_b [\mathcal{A}_b] | {U}_{b}[\mathcal{A}_b^c] {Z}_{b}(s)[\mathcal{A}_b^c]    )\nonumber\\ \nonumber
& \phantom{--}-  {H} ( {X}_b [\mathcal{A}_b] |  {U}_{b}[\mathcal{A}_b^c] {Z}_{b}(s)[\mathcal{A}_b^c]  ) \displaybreak[0] \\ \nonumber
&\stackrel{(b)}{=} H({U}_{b}[\mathcal{A}_b^c] | {Z}_{b}(s)[\mathcal{A}_b^c]  ) + H ({U}_{b}[\mathcal{A}_b]  {X}_b [\mathcal{A}_b]   )  -  H ( {X}_b [\mathcal{A}_b]   ) \\ \nonumber
& = H ({U}_{b}[\mathcal{A}_b^c] | {Z}_{b}(s)[\mathcal{A}_b^c]  ) + H ({U}_{b}[\mathcal{A}_b]  | {X}_b [\mathcal{A}_b]   ) \\
&\stackrel{(c)}{=} N(1-\alpha) H(U|Z(s) ) +N\alpha H(U|X)  \label{eqenta2},
\end{align}
where  $(a)$ holds because ${X}_b[\mathcal{A}_b]$ is independent of $({U}_{b}[\mathcal{A}_b^c] , {Z}_{b}(s)[\mathcal{A}_b^c])$, $(b)$ holds because $({U}_{b}[\mathcal{A}_b],  {X}_b [\mathcal{A}_b])$ is independent of $({U}_{b}[\mathcal{A}_b^c], {Z}_{b}(s)[\mathcal{A}_b^c])$ and ${X}_b [\mathcal{A}_b]$ is independent of$({U}_{b}[\mathcal{A}_b^c], {Z}_{b}(s)[\mathcal{A}_b^c])$, $(c)$ holds because $q_{U^{1:N} X^{1:N} Z^{1:N}(s)} = \prod_{i=1}^{N} q_{U X Z(s)} $.
We obtain the lemma from \eqref{eqenta1} and \eqref{eqenta2}.

\bibliographystyle{IEEEtran}
\bibliography{polarwiretap}

\end{document}